\renewcommand{\thesection}{\arabic{section}}
\newtheorem{lemma}{Lemma}[section]
\newtheorem{corollary}[lemma]{Corollary}
\newtheorem{theorem}[lemma]{Theorem}
\newtheorem{prop}[lemma]{Proposition}
\theoremstyle{definition} 
\newtheorem{definition}[lemma]{Definition}
\theoremstyle{remark}
\newtheorem{remark}[lemma]{Remark}
\newtheorem{example}[lemma]{Example}
\newcommand{\dg}{\sp{\text{\rm o}}}
\begin{document}

\title{Spectral resolutions in  effect algebras}

    \author{Anna Jen{\v c}ov{\'a}}
    \email{jenca@mat.savba.sk}
    \author{Sylvia Pulmannov\'{a}}
    \email{pulmann@mat.savba.sk}
    \affiliation{Mathematical Institute, Slovak Academy of
Sciences, \v Stef\'anikova 49, SK-814 73 Bratislava, Slovakia}


\maketitle

\begin{abstract} Effect algebras were introduced as an abstract algebraic model for
Hilbert space effects representing quantum mechanical measurements.  We study additional
structures on an effect algebra $E$ that enable us to define  spectrality and spectral
resolutions for elements of $E$ akin to those of self-adjoint operators. These structures, called
compression bases, are special families of maps on $E$, analogous to the set of compressions on operator algebras,
order unit spaces or unital abelian groups. Elements of a compression base are in one-to-one correspondence with certain elements of $E$,
called projections. An effect algebra is called spectral if it has a distinguished
compression base with  two special properties:  the projection cover property (i.e., for every element $a$ in $E$  there is a smallest projection
majorizing $a$), and the so-called b-comparability property, which is an analogue of general comparability in operator algebras or unital abelian groups.
 It is shown that in a spectral archimedean effect algebra $E$, every $a\in E$ admits a unique rational spectral resolution and its properties are studied. If in addition $E$ possesses a separating set of states, then every element $a\in E$ is determined by its spectral resolution.
It is also proved that for some types of interval effect algebras (with RDP, archimedean divisible), spectrality of $E$
is
equivalent to spectrality of its universal group and the corresponding rational spectral resolutions are the same. In
particular, for convex archimedean effect algebras, spectral resolutions in $E$ are in
agreement with spectral resolutions in the corresponding order unit space.
\end{abstract}

\section{Introduction} \label{sc:Intro}

In the mathematical description of quantum theory, the yes-no measurements are represented by Hilbert space effects, that is, operators between 0 and $I$ on the Hilbert space
representing the given quantum system. More generally, any quantum measurement is described as a measure with values in the set of effects.
 The existence of spectral resolutions of self-adjoint operators on a Hilbert space plays an important role in quantum theory. It provides a connection between such operators and
  projection-valued measures, describing sharp measurements. Moreover, spectrality appears as a crucial property in operational derivations of quantum theory, see e.g.
  \cite{BH, chiribella, HArdy, wetering}.

Motivated by characterization of state spaces of operator algebras and JB-algebras, Alfsen and Shultz introduced a notion of spectrality for order unit spaces, see \cite{AS, AlSh}. A more algebraic approach to spectral order unit spaces was introduced in \cite{FPspectres}, see \cite{JenPul} for a comparison of the two approaches. The latter approach is based on the works by Foulis \cite{Fcomgroup, Fgc, Funig, Forc}, who studied a generalization of spectrality for partially ordered unital abelian groups.
In all these works,  the basic notion is that of a \emph{compression}, which generalizes the map $a\mapsto pap$ for a projection $p$ on
 a von Neumann algebra, or the projection onto the ideal generated by a sharp element in an interpolation group \cite{Good}.
One of the highlights of these works is the result that if there exists a suitable set of compressions with specified properties, each element has a
unique spectral resolution, or a \emph{rational} spectral resolution with values restricted to $\mathbb Q$ in the
case of partially ordered abelian groups, analogous to the spectral resolution of self-adjoint elements in von
Neumann algebras.

As an algebraic abstraction of the set of Hilbert space effects, effect algebras were introduced by Foulis and Bennett \cite{FoBe}.
Besides the Hilbert space effects, this abstract definition covers a large class of other structures with no clear notion of a spectral resolution.
It is therefore important to study additional structures on an effect algebra that may ensure the existence of some type of a spectral resolution.
In analogy with the definition in \cite{Fcompog}, compressions and compression bases in effect algebras were studied by Gudder
\cite{Gu, Gucomprba}. It is a natural question (remarked upon also in \cite{Gu}) whether
some form of spectrality can be obtained in this setting and what structural properties of the
effect algebra are required for existence of spectral resolutions. Note that another
approach to spectrality in \emph{convex} effect algebras, based on contexts, was studied
in \cite{Gucs}, see also \cite{JePl}. There is also a dual notion of spectrality of
compact convex sets pertaining the state space of the effect algebra (e.g. \cite{AlSh,BH}),
this will not be considered here.

Spectral compression bases in effect algebras were studied in \cite{Pucompr}, building on the works \cite{Gucomprba,Forc}.
For an effect algebra $E$, a compression is an additive idempotent mapping $J: E\to E[0,p]$ where $p$ is a
special element called the focus of $J$. A compression base is a family $(J_p)_{p\in P}$
of compressions satisfying certain properties, parametrized by a specified set $P$ of
focuses. Elements of $P$ are called projections. We say that a compression base $(J_p)_{p\in P}$ in $E$ is spectral if it has (1) the projection cover property, that is,
for every $a\in E$ there is a smallest projection that majorizes $a$; (2) the so-called b-comparability, introduced in
analogy with general comparability in groups of \cite{Fgc} and \cite{Good}.

In the present paper, we study the properties of effect algebras with spectral compression bases.  We show that
 in such a case, every element $a\in E$ admits a unique spectral resolution in terms of
 spectral projections in $P$, parametrized by dyadic rationals. If $E$ is also  archimedean, the spectral resolution  can be extended to all rational values in $[0,1]$ and
characterized by several properties resembling the characterization of the spectral resolution for Hilbert space effects.
  We prove that for any state $s$ on $E$,  $s(a)$ is determined by the
values of $s$ on the spectral projections. In particular, if $E$ has a separating set of states, then every element is
uniquely determined by its spectral resolution. 

We also study some special cases of interval effect algebras, that is, effect algebras that are isomorphic to the unit
interval in a unital abelian partially ordered group $(G,u)$: effect algebras with RDP, divisible and convex archimedean
effect algebras. We show that in these cases, the compression bases in $E$ are in one-to-one correspondence with
compression bases in $G$ and that spectrality in $E$ is equivalent to spectrality in $G$, as defined in \cite{Forc}. Moreover, the spectral
resolutions obtained in $E$ coincide with the spectral resolutions in $G$. However, we
show that spectrality of an interval effect algebra is not always inherited from its
universal group.

\section{Effect algebras}

An \emph{effect algebra} \cite{FoBe} is a system $(E; \oplus,0,1)$ where $E$ is a nonempty set, $\oplus$ is a partially defined binary operation on $E$, and $0$ and $1$ are constants, such that the following conditions are satisfied:

\begin{enumerate}
\item[(E1)] If $a\oplus b$ is defined then  $b\oplus a$ is defined and $a\oplus b=b\oplus a$.
\item[(E2)] If $a\oplus b$ and $(a\oplus b)\oplus c$ are defined then  $b\oplus c$ and $a\oplus(b\oplus c)$ are defined and
$(a\oplus b)\oplus c=a\oplus(b\oplus c)$.
\item[(E3)] For every $a\in E$ there is a unique $a'\in E$ such that $a\oplus a'=1$.
\item[(E4)]If  $a\oplus 1$ is defined then $a=0$.
\end{enumerate}

\begin{example}\label{ex:effects} Let $\mathcal H$ be a Hilbert space and let $E(\mathcal H)$ be the set of operators on $\mathcal H$ such
 that $0\le A\le I$. For  $A,B\in E(\mathcal H)$, put $A\oplus B=A+B$ if $A+B\le I$,
otherwise $A\oplus B$ is not defined. Then $(E(\mathcal H);\oplus, 0, I)$ is an effect algebra. This is a prototypical
example on which the above abstract definition is modelled. The elements of $E(\mathcal H)$ are called effects.

\end{example}

\begin{example}\label{ex:intervalea} Let $(G,u)$ be a partially ordered abelian group with an order unit $u$. Let $G[0,u]$ be the unit interval in $G$ (we will often write $[0,u]$ if the group $G$ is clear). For $a,b\in G[0,u]$, let  $a\oplus b$ be defined if $a+b\le u$ and in this case $a\oplus b=a+b$. It is easily checked that $(G[0,u],\oplus, 0,u)$ is an effect algebra. Effect algebras of this form are called \emph{interval} effect algebras. In particular, the real unit interval $\mathbb R[0,1]$ can be given a structure of an effect algebra. Note also that the Hilbert space effects in Example \ref{ex:effects} form an interval effect algebra.

\end{example}

We write $a\perp b$ and say that $a$ and $b$ are \emph{orthogonal} if $a\oplus b$ exists. In what follows, when we write $a\oplus b$, we tacitly assume that $a\perp b$. A partial order is introduced on $E$ by defining $a\leq b$ if there is $c\in E$ with $a\oplus c=b$.  If such an element $c$ exists, it is unique, and we define $b\ominus a:=c$.  With respect to this partial order we have $0\leq a\leq 1$ for all $a\in E$. The element $a'=1\ominus a$ in (E3)  is called the \emph{orthosupplement} of $a$. It can be shown that $a\perp b$ iff $a\leq b'$ (equivalently, $b\leq a'$). Moreover $a\leq b$ iff $b'\leq a'$, and $a''=a$.

An element $a\in E$ is called \emph{sharp} if $a\wedge a'=0$ (i.e., $x\leq a,a' \implies x=0$). We denote the set of all sharp elements of $E$ by $E_S$. An element $a\in E$ is \emph{principal} if $x,y\leq a$, and $x\perp y$ implies that $x\oplus y\leq a$. It is easy to see that a principal element is sharp.

By recurrence, the operation $\oplus$ can be extended to finite sums $a_1\oplus a_2\oplus \cdots \oplus a_n$ of (not necessarily different) elements $a_1,a_2,\ldots a_n$ of $E$.
If $a_1=\dots=a_n=a$ and $\oplus_i a_i$ exist, we write $\oplus_i a_i=na$. An effect algebra $E$ is \emph{archimedean} if for $a\in E$, $na\le 1$ for all $n\in \mathbb N$ implies that $a=0$.

An infinite family $(a_i)_{i\in I}$ of elements of $E$ is called \emph{orthogonal} if every its finite subfamily has an $\oplus$-sum  in $E$.
If the element  $\oplus_{i\in I} a_i=\bigvee_{F\subseteq I}\oplus_{i\in F}a_i$, where the supremum is taken over all finite subsets of $I$ exists, it is called the \emph{orthosum} of
the family $(a_i)_{i\in I}$. An effect algebra $E$ is called \emph{$\sigma$-orthocomplete} if the orthosum exists for any  orthogonal countable subfamily of $E$.

An effect algebra $E$ is \emph{monotone $\sigma$-complete} if every ascending sequence $(a_i)_{i\in {\mathbb N}}$  has a supremum $a=\bigvee_i a_i$
   in $E$, equivalently, every descending sequence $(b_i)_{i\in {\mathbb N}}$ has an infimum $b=\bigwedge_i b_i$  in $E$. It turns out that an effect algebra is monotone
   $\sigma$-complete if and only if it is $\sigma$-orthocomplete \cite{JePu}. A subset $F$ of $E$ is \emph{sup/inf -closed in E} if whenever  $M\subseteq F$ and $\wedge M$ ($\vee M$) exists in $E$, then $\wedge M \in F$ ($\vee M \in F$).

If $E$ and $F$ are effect algebras, a mapping $\phi: E\to F$ is \emph{additive} if $a\perp b$ implies $\phi(a) \perp \phi(b)$ and $\phi(a\oplus b)=\phi(a)\oplus \phi(b)$. An additive mapping $\phi$ such that $\phi(1)=1$, is a \emph{morphism}. If $\phi :E\to F$ is a morphism, and $\phi(a)\perp \phi(b)$ implies $a\perp b$, then $\Phi$ is a \emph{monomorphism}. A surjective monomorphism is an \emph{isomorphism}.

A \emph{state} on an effect algebra $E$ is a morphism $s$  from $E$ into the effect algebra ${\mathbb R}[0,1]$ . We denote the set of states on $E$ by $S(E)$. We say that
$S\subset S(E)$ is \emph{separating} if $s(a)= s(b)$ for every $s\in S$ implies that $a= b$. We say that $S\subset S(E)$ is \emph{ordering} (or order determining) if
$s(a)\leq s(b)$ for all $s\in S$ implies $a\leq b$. If $S$ is ordering, then it is separating, the converse does not hold.

A lattice ordered effect algebra $M$, in which  $(a\vee b)\ominus a=a\ominus (a\wedge b)$ holds for all $a,b\in M$, is called an \emph{MV-effect algebra}.
We recall that MV-effect algebras are equivalent with MV-algebras, which were introduced by \cite{Chang} as algebraic bases for many-valued logic. It was proved in \cite{Mun}
that there is an equivalence between  MV-algebras and   lattice ordered groups with order unit, in the sense of category theory.

\section{Compressions on  effect algebras }

The definitions in this section  follow the works of Foulis {\cite{Fcomgroup,Fcompog,Fgc, Forc}},  who studied
spectrality for unital partially ordered abelian groups. Some details on the constructions
in that setting are given in Appendix \ref{app:spec}. The definitions were adapted to effect
algebras by  Gudder \cite{Gu} and Pulmannov\'a \cite{Pucompr}.

\begin{definition}\label{de:compr} Let $E$ be an effect algebra.
\begin{enumerate}
\item An additive map $J: E\to E$ is a
\emph{retraction} if $a\leq J(1)$ implies $J(a)=a$.
\item The element $p:=J(1)$ is called the \emph{focus} of $J$.
\item A retraction is a \emph{compression} if $J(a)=0\ \Leftrightarrow\ a\leq p'$.
\item If $I$ and $J$ are retractions we say that $I$ is a \emph{supplement of} $J$ if $\ker(J)=I(E)$ and $\ker(I)=J(E)$.
\end{enumerate}
\end{definition}

It is easily seen that any retraction is idempotent, in fact, an additive idempotent map is a retraction if and only if its range is an interval. If a retraction $J$ has a supplement $I$, then both $I$ and $J$ are
compressions and $I(1)=J(1)'$. The element $p:=J(1)$ for a retraction $J$ is called the focus of $J$. The focus is a principal element and we have $J(E)=[0,p]$,
moreover, $J$ is a compression if and only if $\mathrm{Ker}(J)=[0,p']$. For these and further properties see \cite{Gucomprba, Pucompr}.

\begin{example}\label{ex:compr_effects} Let $E(\mathcal H)$ be the algebra of effects on $\mathcal H$ (Example
\ref{ex:effects}) and let $p\in E(\mathcal H)$ be a projection. Let us define the map $J_p:a\mapsto pap$, then $J_p$ is a compression on
$E(\mathcal H)$ and $J_{p'}$ is a supplement of $J_p$. By \cite{Fcompog}, any retraction on $E(\mathcal H)$
is of this form for some projection $p$. In particular, any projection is the focus of a unique retraction $J_p$ with a
(unique) supplement $J_{p'}$. Effect algebras such that any retraction is supplemented  and  uniquely determined by its
focus are called compressible, \cite{Gu,Fcomgroup}.

\end{example}

Our aim is to introduce a set of compressions analogous to $(J_p)_{p\in P(\mathcal H)}$ in the above example  for any effect algebra $E$. For this, we need to specify a set $P\subseteq E$ of projections and the corresponding compressions. We will start with  the assumption that $P\subseteq E$ is a sub-effect algebra.

\begin{lemma} Let $P\subseteq E$ be a sub-effect algebra and let $(J_p)_{p\in P}$ be a set of compressions such that $J_p(1)=p$. Then $P$ is an orthomodular poset and
 $J_{p'}$ is a supplement of $J_p$.

\end{lemma}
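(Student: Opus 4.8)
The plan is to prove the two assertions separately, beginning with the supplement claim, which follows directly from the recorded properties of compressions. Since $P$ is a sub-effect algebra and $p\in P$, also $p'\in P$, so $J_{p'}$ is among the given compressions, with focus $J_{p'}(1)=p'$. I would invoke the two facts stated after Definition~\ref{de:compr}: a compression $J$ with focus $q$ satisfies $J(E)=[0,q]$ and $\ker(J)=[0,q']$. Applied to $J_p$ this gives $J_p(E)=[0,p]$ and $\ker(J_p)=[0,p']$, and applied to $J_{p'}$ it gives $J_{p'}(E)=[0,p']$ and $\ker(J_{p'})=[0,(p')']=[0,p]$. Hence $\ker(J_p)=J_{p'}(E)$ and $\ker(J_{p'})=J_p(E)$, which is exactly the condition defining $J_{p'}$ as a supplement of $J_p$.

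For the orthomodular poset claim, note first that, as a sub-effect algebra, $P$ is itself an effect algebra whose order and orthocomplementation are the restrictions of those on $E$: if $p,q\in P$ and $p\le q$ in $E$, then $p\oplus(q\ominus p)=q$ forces $q\ominus p\in P$ by the sub-effect algebra property, so $p\le q$ holds in $P$ as well, and uniqueness of orthosupplements makes $'$ agree on $P$ and $E$. The crucial input is that every $p\in P$, being the focus of a compression, is principal and therefore sharp. From sharpness I would deduce that $P$ is an orthoalgebra: if $p\perp p$ then $p\le p'$, and since also $p\le p$, applying sharpness of $p$ (in the form $x\le p,p'\Rightarrow x=0$) to $x=p$ gives $p=0$.

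The main step is to show that the orthosum of orthogonal elements of $P$ is their supremum. If $p,q\in P$ with $p\perp q$, then $p\oplus q\in P$ is an upper bound of $\{p,q\}$, and for any upper bound $r\in P$ we have $p,q\le r$ with $p\perp q$, so principality of $r$ yields $p\oplus q\le r$; thus $p\oplus q=p\vee q$ in $P$. Together with the orthoalgebra property, this is precisely the standard characterization of an orthomodular poset as an orthoalgebra in which every orthogonal pair has a supremum equal to its orthosum, and the orthomodular law then follows. I expect the only delicate point to be the interplay between $E$ and $P$: principality is defined relative to $E$, so I must first confirm, as in the previous paragraph, that the order and orthosum on $P$ coincide with those in $E$ before applying principality of the upper bound $r$.
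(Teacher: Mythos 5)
Your proof is correct and follows essentially the same route as the paper's: elements of $P$ are principal (hence sharp), giving the orthoalgebra structure; principality of an upper bound $r$ yields $p\oplus q=p\vee q$; and the supplement claim follows from the range/kernel characterization $J_p(E)=[0,p]$, $\ker(J_p)=[0,p']$ of compressions. You merely spell out details the paper leaves implicit (the consistency argument for the orthoalgebra property, the agreement of order and orthocomplementation between $P$ and $E$, and both halves of the supplement condition), which is fine.
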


\begin{proof} We refer the reader to \cite[\S 1.5]{DvPu} for the definition and characterization of orthomodular posets (OMP). Since the elements of  $P$  are principal hence sharp, $P$ is an orthoalgebra. To show that $P$ is an OMP, it is enough to show that if $p\oplus q$ exists, then
$p\oplus q=p\vee q$. So assume that $p\oplus q$ exists, then $p,q\le p\oplus q$. Suppose that $r\in P$ is such that $p,q\le r$, then since $r$ is principal, we have $p\oplus q\le r$. It follows that $p\oplus q=p\vee q$ and $P$ is an OMP. Further, since any $J_p$ is a compression, for $p\in P$ we have $J_{p'}(E)= [0,p']=\mathrm{Ker}(J_p)$, so that $J_{p'}$ is a supplement of $J_p$.

\end{proof}

The set of compressions in $E(\mathcal H)$ has another important property: $J_p$ and $J_q$ commute if and only if the projections $p$ and $q$ commute, and in this case, $J_p\circ J_q=J_q\circ J_p=J_{pq}$.
To extend this property to general effect algebras, we need the following notion.
Recall that two elements $a,b$ in an effect algebra $E$ are \emph{Mackey compatible} if there are elements $a_1,b_1, c\in E$ such that $a_1\oplus b_1\oplus c$ exists and $a=a_1\oplus c, b=b_1\oplus c$.  We shall write $a\leftrightarrow b$ if
$a,b$ are Mackey compatible. It is well known that for two projections (or a projection
and an effect) in $E(\mathcal H)$, Mackey compatibility is equivalent to commutativity.
(Note that this is not true for general pairs of elements.)

We can now introduce the main definition of this section.

\begin{definition}\label{def:compressionbase} A family $(J_p)_{p\in P}$ of compressions on an effect algebra $E$, indexed by a sub-effect algebra $P\subseteq E$
is called a  \emph{compression base} on  $E$ if the following conditions hold:
\begin{enumerate}
\item[(C1)] each $p\in P$ is the focus of $J_p$,

\item[(C2)] if $p,q\in P$ are Mackey compatible (in $E$), then $J_p\circ J_q=J_r$ for some $r\in P$.

\end{enumerate}
Elements of $P$ are  called \emph{projections}.

\end{definition}

The above definition is slightly different from the original definitions of a compression base used in \cite{Gucomprba}. We next show that these definitions are equivalent.
For this, recall that a sub-effect algebra $F\subseteq E$ is  \emph{normal} if  for all
$e,f,d\in E$ such that  $e\oplus f\oplus d$ exists in $E$, we have $e\oplus d, f\oplus
d\in F\ \implies\ d\in F$. It was proved in \cite{Gucomprba} that in this case, two
elements in $F$ are Mackey compatible as elements of $E$ if and only if  they are Mackey
compatible in $F$, that is, the elements $a_1,b_1$ and $c$ can be chosen in $F$.

\begin{prop}\label{prop:CB_orig} Let $(J_p)_{p\in P}$ be a compression base. Then
\begin{enumerate}
\item [(i)] $P$ is a normal sub-effect algebra in $E$;
\item[(ii)]  if $p,q,r\in P$ are such that $p\oplus q\oplus r$ exists, then $J_{p\oplus q}\circ J_{q\oplus r}=J_r$.
 \end{enumerate}
 Conversely, if $(J_p)_{p\in P}$ is a set of compressions such that $J_p(1)=p$, and  with properties (i) and (ii), then it is a compression base.

\end{prop}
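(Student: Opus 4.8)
The plan is to prove the two directions separately, using the structural facts about compressions already established in the excerpt—namely that each focus is principal, that $J_p(E)=[0,p]$, that $\ker(J_p)=[0,p']$, and the equivalent characterizations of Mackey compatibility in normal sub-effect algebras from \cite{Gucomprba}.

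For part (i), I would show $P$ is normal directly from the definition. Suppose $e\oplus f\oplus d$ exists in $E$ and that both $e\oplus d$ and $f\oplus d$ lie in $P$. I want to conclude $d\in P$. The key observation is that $d\le e\oplus d\in P$ and $d\le f\oplus d\in P$. Since $e\oplus d$ is a projection, $J_{e\oplus d}$ fixes every element below its focus, so $J_{e\oplus d}(d)=d$; similarly $J_{f\oplus d}(d)=d$. The idea is to exhibit $d$ as $J_r(1)$ for a suitable $r\in P$ by composing the compressions $J_{e\oplus d}$ and $J_{f\oplus d}$. For this composition to equal some $J_r$ via (C2), I must verify that $e\oplus d$ and $f\oplus d$ are Mackey compatible in $E$: indeed, writing $a_1=e$, $b_1=f$, $c=d$ gives $a_1\oplus b_1\oplus c=e\oplus f\oplus d$ existing, with $e\oplus d=a_1\oplus c$ and $f\oplus d=b_1\oplus c$, which is exactly Mackey compatibility. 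Hence (C2) yields $J_{e\oplus d}\circ J_{f\oplus d}=J_r$ for some $r\in P$, and I then compute $r=J_r(1)=J_{e\oplus d}(J_{f\oplus d}(1))=J_{e\oplus d}(f\oplus d)$. Using that $J_{e\oplus d}$ is additive, kills $[0,(e\oplus d)']$ and fixes $[0,e\oplus d]$, I expect to show this image equals $d$, giving $d=r\in P$.

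For part (ii), assume $p\oplus q\oplus r$ exists with $p,q,r\in P$. Again $p\oplus q$ and $q\oplus r$ are Mackey compatible in $E$ (take $a_1=p$, $b_1=r$, $c=q$), so by (C2) the composite $J_{p\oplus q}\circ J_{q\oplus r}$ equals $J_s$ for some $s\in P$, and I identify $s$ by evaluating at $1$: $s=J_{p\oplus q}(J_{q\oplus r}(1))=J_{p\oplus q}(q\oplus r)$. Since $J_{p\oplus q}$ is additive and annihilates $[0,(p\oplus q)']$, and since $r\le (p\oplus q)'$ while $q\le p\oplus q$, I would split $q\oplus r$ and compute $J_{p\oplus q}(q)\oplus J_{p\oplus q}(r)=q\oplus 0=q$; but the claimed focus is $r$, so I must be careful about which factor plays which role and recompute accordingly to land on $r$, which amounts to checking the kernel and fixed-set behaviour symmetrically.

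For the converse, I assume $(J_p)_{p\in P}$ is a family of compressions with $J_p(1)=p$ satisfying (i) and (ii), and I must recover (C1) and (C2). Condition (C1) is immediate since $J_p(1)=p$ is assumed. The substance is (C2): given $p,q\in P$ Mackey compatible in $E$, normality of $P$ (condition (i)) lets me invoke the result of \cite{Gucomprba} that $p,q$ are then Mackey compatible \emph{in} $P$, so there exist $p_1,q_1,c\in P$ with $p_1\oplus q_1\oplus c$ existing in $P$ and $p=p_1\oplus c$, $q=q_1\oplus c$. I then apply (ii) to the triple of projections in $P$ to express $J_p\circ J_q=J_{p_1\oplus c}\circ J_{c\oplus q_1}$ as $J_c$, which is the required form with $r=c\in P$. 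The main obstacle I anticipate is the bookkeeping in the focus computations of parts (i) and (ii)—correctly tracking how the additive compression decomposes an orthosum across its fixed-set $[0,p]$ and kernel $[0,p']$—rather than any deep structural difficulty, since the Mackey-compatibility witnesses fall out naturally from the hypotheses and the bridge theorem of \cite{Gucomprba} does the essential work in the converse.
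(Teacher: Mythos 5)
Your proposal follows essentially the same route as the paper's proof: for (i) the paper takes exactly your witness ($a_1=e$, $b_1=f$, $c=d$) to get $p=e\oplus d\leftrightarrow q=f\oplus d$, invokes (C2), and computes the focus as $r=J_p(q)=J_p(f)\oplus J_p(d)=0\oplus d=d$ using $f\le p'$ and $d\le p$; and your converse (normality plus the cited result of \cite{Gucomprba} that Mackey compatibility in $E$ passes down to $P$, followed by an application of (ii) to the witnesses) is precisely the argument the paper dismisses as ``easy''. The one loose end in your write-up is part (ii), and you should resolve it decisively rather than hedge: your computation is correct and final --- evaluating at $1$ gives $J_{p\oplus q}(q\oplus r)=J_{p\oplus q}(q)\oplus J_{p\oplus q}(r)=q\oplus 0=q$, since $q\le p\oplus q$ and $r\le (p\oplus q)'$ --- so no ``recomputation'' or symmetric bookkeeping will land on $r$. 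The statement (ii) as printed has mismatched indices: the correct form, which is exactly what the paper's proof of (i) establishes under the substitution $e=p$, $f=q$, $d=r$, is $J_{p\oplus r}\circ J_{q\oplus r}=J_r$, or equivalently $J_{p\oplus q}\circ J_{q\oplus r}=J_q$; in words, the focus of the composite is the \emph{common} summand of the two indices. Note that your converse argument already implicitly uses this corrected form, since you apply (ii) with common summand $c$ to conclude $J_{p_1\oplus c}\circ J_{c\oplus q_1}=J_c$; with the indices corrected, everything you wrote goes through.
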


\begin{proof} Let $(J_p)_{p\in P}$ be a compression base and let $e,f,d\in E$ be such that $e\oplus f\oplus d\in E$ and $p=e\oplus d, q=f\oplus d\in P$. Then $p\leftrightarrow q$ (in $E$), so that $J_p\circ J_q=J_r$ for some $r\in P$. We have
\[
r=J_r(1)=J_p\circ J_q(1)=J_p(q)=J_p(f)\oplus J_p(d)=d,
\]
since $f\le p'$ and $d\le p$. It follows that $d=r\in P$, so that $P$ is normal. This also proves (ii). The converse is easy.

\end{proof}

\begin{corollary} Let $(J_p)_{p\in P}$ be a compression base. Then $p\leftrightarrow q$ if and only if $J_p\circ J_q=J_q\circ J_p=J_{p\wedge q}$.

\end{corollary}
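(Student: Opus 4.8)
The plan is to prove the two implications separately, in both cases reducing everything to the computation of the composite on the unit $1$, exactly as in the proof of Proposition \ref{prop:CB_orig}, and then identifying the resulting focus with $p\wedge q$.

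For the forward implication, suppose $p\leftrightarrow q$. Then (C2) already guarantees $J_p\circ J_q=J_s$ for some $s\in P$, and the symmetric statement holds for $J_q\circ J_p$. To determine $s$ I would fix a Mackey decomposition $p=p_1\oplus c$, $q=q_1\oplus c$ with $p_1\oplus q_1\oplus c$ defined, and compute $s=J_s(1)=J_p\circ J_q(1)=J_p(q)=J_p(q_1)\oplus J_p(c)$; since $q_1\le (p_1\oplus c)'=p'$ and $c\le p$, this collapses to $s=c$, and symmetrically $J_q\circ J_p=J_c$. It then remains to show $c=p\wedge q$. The clean step here is: $c\le p,q$ by construction, and if $x\le p$ and $x\le q$ then $J_c(x)=J_p(J_q(x))=J_p(x)=x$, whence $x=J_c(x)\le c$; thus $c$ is the greatest lower bound. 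This simultaneously shows that $p\wedge q$ exists and equals $c$.

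For the converse, assume $J_p\circ J_q=J_q\circ J_p=J_{p\wedge q}$ and set $c:=p\wedge q$ and $q_1:=q\ominus c$. Evaluating the hypothesis at $1$ gives $c=J_p(q)=J_p(q_1)\oplus J_p(c)=J_p(q_1)\oplus c$, so cancellativity forces $J_p(q_1)=0$, and since $J_p$ is a compression this means $q_1\le p'$. Setting $p_1:=p\ominus c$, the relation $q_1\le p'=(p_1\oplus c)'$ says precisely that $p_1\oplus q_1\oplus c$ exists; together with $p=p_1\oplus c$ and $q=q_1\oplus c$ this is a Mackey decomposition, so $p\leftrightarrow q$.

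I expect the only genuine subtlety to be the identification $c=p\wedge q$ in the forward direction — in particular, arranging the argument so that $c$ dominates every lower bound in $E$ (not merely in $P$), so that the meet really exists and coincides whether computed in $E$ or in $P$. The two composite-on-$1$ computations are routine once one recalls, from the proof of Proposition \ref{prop:CB_orig}, that $J_p$ annihilates $[0,p']$ and fixes $[0,p]$; the orthogonality bookkeeping $q_1\le p'$ and the resulting existence of the triple $p_1\oplus q_1\oplus c$ in the converse are the remaining points to handle with care.
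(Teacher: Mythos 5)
Your proof is correct and takes essentially the same route as the paper's: both directions reduce to evaluating the composite compression at $1$ and using the kernel property of compressions to extract the orthogonality $q_1\le p'$, from which the Mackey decomposition is reassembled. Two minor differences worth noting: your explicit greatest-lower-bound argument ($x\le p,q \Rightarrow x=J_p(J_q(x))=J_c(x)\le c$) supplies the identification $c=p\wedge q$ that the paper merely asserts (with a typo, ``$p\wedge p=r$''), while the paper's converse is run from the weaker hypothesis that $J_p\circ J_q=J_r$ for \emph{some} $r\in P$, so it proves a formally stronger statement than you do.
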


\begin{proof}
 Assume that $p\leftrightarrow q$ and let $p=p_1\oplus r$ and $q=q_1\oplus r$ be such that
 $p_1\oplus q_1\oplus r$ exists. Then $p\wedge p=r$ and we have by the property (ii) in
 Proposition \ref{prop:CB_orig} that $J_p\circ J_q=J_r=J_q\circ J_p$. The converse follows
 by \cite[Thm. 4.2]{Gucomprba}: assume that $J_p\circ J_q=J_r$ for some $r\in P$, then
 $J_p(q)=J_r(1)=r\le p$ and $J_r(q)=J_p\circ J_q(q)=J_p(q)=r$, so that $r\le q$. Put $q_1=q\ominus r$, $p_1=p\ominus r$, then $J_p(q_1)=r\ominus r=0$, so that $q_1\le p'$. It follows that $p_1\oplus r\oplus q_1$ exists and $p\leftrightarrow q$.

\end{proof}

By  \cite[Cor. 4.5]{Gu} and \cite[Thm. 2.1]{Pucompr},  the set $P$ as a subalgebra of $E$ is a regular
orthomodular poset (OMP) with the orthocomplementation $p\mapsto p'$. Recall
that an OMP $P$ is \emph{regular} if for all $a,b,c\in P$, if $a,b$ and $c$ are pairwise
Mackey compatible, then
$a\leftrightarrow b\vee c$ and $a\leftrightarrow b\wedge c$, \cite{PtPu, Harding}.

\begin{example} Let $P(\mathcal H)$ be the set of all projections on a Hilbert space $\mathcal H$ and let $J_p$ for
$p\in P(\mathcal H)$ be as
in Example \ref{ex:compr_effects}. It is easily observed that the set $(J_p)_{p\in P(\mathcal H)}$ is a compression base
in $E(\mathcal H)$. More generally, the set of all compressions in a compressible effect algebra is a compression base,
\cite{Gucomprba}.

\end{example}

In general, there can be many compression bases on an effect algebra $E$, for example,
$(J_0=0,J_1=id)$ is trivially a compression base. The compression bases on $E$ can be ordered by inclusion and by standard Zorn Lemma arguments, there are maximal elements with respect to this order, called maximal compression bases, \cite{Gucomprba}.
The following example is a compression base on any effect algebra $E$.

\begin{example}\label{ex:center} Let $E$ be an effect algebra and let $\Gamma(E)$ be the
set of \emph{central elements} \cite[Def. 1.9.11, Lemma 1.9.12]{DvPu}. In this case, both 
 $p$ and  its orthosupplement $p'$ are  principal elements, $a\wedge p$, $a\wedge p'$
 exist  and we have
\[
a=a\wedge p\oplus a\wedge p'
\]
for all $a\in E$.  It is easily seen that $U_p: a\mapsto a\wedge p$ is the unique compression with focus $p\in \Gamma(E)$ and $(U_p)_{p\in \Gamma(E)}$ is a compression base. This compression base will be called \emph{central}.

\end{example}
The next example shows that there may exist different compression bases with the same  set of projections $P$.

\begin{example}\label{ex:horsum} Let $E=E(\mathcal H)\dot{\cup}E(\mathcal H)$ be the
horizontal sum (or the 0-1 pasting) of two copies of $E(\mathcal H)$ for a separable
Hilbert space $\mathcal H$. That is, the elements of $E$ are of the form $(a,0)$ or
$(0,a)$ where $a\in E(\mathcal H)$ and $(1,0)$ and $(0,1)$ are identified with the unit
element $u$ of $E$.
Two elements of $E$ are summable if and only if they are of the form $(a,0)$, $(b,0)$ or
$(0,a)$, $(0,b)$ for some summable $a,b\in E(\mathcal H)$ and then $(a,0)\oplus
(b,0)=(a\oplus b,0)$, similarly in the other copy. The sharp elements of $E$ are of the form $(p,0)$ or $(0,p)$ for some $p\in P(\mathcal H)$ and all such elements are principal.

Choose any faithful state $\varphi$ on $E(\mathcal H)$, that is, $\varphi(a)=0$ for $a\in
E(\mathcal H)$ implies $a=0$. For any $p\in P(\mathcal H)$, $p\ne 1$,  we define
\[
J_{(p,0)}((a,0))=(J_p(a),0),\quad J_{(p,0)}((0,a))=(\varphi(a)p,0)
\]
It is easily checked that $J_{(p,0)}$ is a compression on $E$ with focus $(p,0)$. The
compressions $J_{(0,p)}$ can be defined similarly. We also put  $J_u=id$.

Let $P=P(\mathcal H) \dot{\cup} P(\mathcal H)$, then $P$ is a subalgebra in $E$ and we can
check that $(J_e)_{e\in P}$ is a compression base. Indeed, let $e,f\in P$ and
$e\leftrightarrow f$, we may clearly assume that none of the projections is the unit
element. Then $e,f$ must belong to the same copy, hence there are some commuting
projections $p,q\in P(\mathcal H)\setminus \{1\}$, such that $e=(p,0)$ and $f=(q,0)$, say. Then
\begin{align*}
J_e\circ J_f((a,0))=(J_p\circ J_q(a),0)=(J_{pq}(a),0),\qquad J_e\circ J_f((0,a))=(J_p(\varphi(a)q),0)=(\varphi(a)pq,0),
\end{align*}
so that $J_e\circ J_f=J_g$ with $g=(pq,0)$. Notice that the state $\varphi$ can be chosen
arbitrarily, this shows that there can be different compression bases with the same set of
projections $P$. This example will be continued in Example \ref{ex:horsum_spectral}.

More generally, if $E$ is the horizontal sum of two effect algebras $E_1$ and $E_2$, then all principal elements are of the form $(p_1,0)$ or $(0,p_2)$ for some principal $p_1\in E_1$ or $p_2\in E_2$. If $p_1\ne 0,1$,  then  any retraction $J$ on $E$ with focus $(p_1,0)$ has the form
\[
J((a_1,0))=(J_1(a),0),\qquad J((0,a_2))=(\phi(a_2),0),
\]
where $J_{1}$ is a retraction on $E_1$ with focus $p_1$ and $\phi:E_2\to [0,p_1]$ is a
morphism, see \cite{Gu}. Note that $J$ is a compression if and only if $J_1$ is a compression and $\phi$ is faithful, that is, $\phi(a_2)=0$ implies $a_2=0$. Even if both $E_1$ and $E_2$ have compressions,  faithful morphisms $\phi$ might not exist, so in that case there are no nontrivial compressions on $E$ (take for example $E_1=E_2$ the Boolean algebra $2^2$).

\end{example}

\begin{example}\label{ex:seq} An effect algebra $E$ is \emph{sequential} if it is endowed with a binary operation
$\circ$ called a sequential product, see \cite{GuGr} for definition and more information. In this case, an element $p\in
E$ is sharp if and only if $p\circ p=p$, equivalently, $p\circ p'=0$. Moreover, for $a\in E$, $a\le p$ if and only if
 $p\circ a = a$, equivalently, $p'\circ a=0$. From the axioms of the sequential product and
these properties it immediately follows that $J_p:a\mapsto p\circ a$ is a compression and $(J_p)_{p\in E_S}$ is a
compression base, see \cite{Gucomprba}. Note that $E(\mathcal H)$ is sequential, with $a\circ b=a^{1/2}ba^{1/2}$.

\end{example}

\begin{example}\label{ex:omp} Let $E$ be an OMP. Notice that all elements are principal, but in general there is no compression base
such that all elements are projections. Indeed, by \cite[Thm. 4.2]{Gucomprba}, for projections $p$ and $q$, $J_p(q)$ is
a projection if and only if $p\leftrightarrow q$. It follows that there is a compression base $(J_p)_{p\in E}$ if and
only if all elements are compatible, that is $E$ is a Boolean algebra. In that case we have $J_p(q)=U_p(q)=p\wedge q$.

\end{example}

\begin{example}\label{example:rdp} We say that an effect algebra $E$ has the Riesz decomposition property (RDP) if one of the following
equivalent conditions is satisfied:
\begin{enumerate}
\item[(i)] For any $a,b,c\in E$, if $a\le b\oplus c$ then there are some $b_1,c_1\in E$ such that $b_1\le b$, $c_1\le c$
and $a=b_1\oplus c_1$.
\item[(ii)] For any $a_1,a_2,b_1,b_2\in E$, if  $a_1\oplus a_2=b_1\oplus b_2$, then there are elements $w_{ij}\in E$,
 $i,j=1,2$ such that $a_1=w_{11}\oplus w_{12}$, $a_2=w_{21}\oplus w_{22}$, $b_1=w_{11}\oplus w_{21}$ and
$b_2=w_{12}\oplus w_{22}$.

\end{enumerate}
By \cite{Rav}, $E$ is an interval effect algebra. If, in addition, $E$ is lattice-ordered, then $E$ is an MV-effect algebra, see \cite{DvPu}.

From the condition (i) (or (ii)), it follows that all elements in $E$ are pairwise Mackey compatible. Moreover, every sharp element in $E$
is principal and the set $E_S$ of sharp elements is the center of $E$. By Example \ref{ex:center},  $(U_p)_{p\in E_S}$
is a compression base.  Since the focus of a retraction is always sharp, this compression base contains all retractions on $E$. In this case, we say that the compression base is \emph{total}.

\end{example}

\textbf{From now on, we will assume that the effect algebra $E$ is endowed with a fixed
compression base $(J_p)_{p\in P}$ and all further considerations are made with respect
to this base.}

\subsection{Compatibility and commutants}
\label{sec:commutants}

By \cite[Lemma 4.1]{Pucompr} we have the following.

\begin{lemma}\label{le:comE} If $p\in P, a\in E$, then the following statements are equivalent:
\begin{enumerate}
\item $J_p(a)\leq a$,
\item $a=J_p(a)\oplus J_{p'}(a)$,
\item $a\in E[0,p]\oplus E[0,p']$,
\item $a \leftrightarrow p$,
\item $J_p(a)=p\wedge a$.
\end{enumerate}
\end{lemma}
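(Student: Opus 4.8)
The plan is to establish the chain of equivalences by proving a cycle of implications, exploiting the defining properties of a compression together with the fact that $p$ is a principal (hence sharp) element and $J_{p'}$ is the supplement of $J_p$. I would first recall the two facts that will be used repeatedly: since $J_p$ is a compression with focus $p$, we have $J_p(E)=[0,p]$ and $\ker(J_p)=[0,p']$, and likewise $J_{p'}(E)=[0,p']$ with $\ker(J_{p'})=[0,p]$. Moreover, a retraction fixes every element below its focus, so $J_p(x)=x$ whenever $x\le p$.

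First I would prove (i)$\Rightarrow$(ii). Assuming $J_p(a)\le a$, write $c:=a\ominus J_p(a)$, so $a=J_p(a)\oplus c$. Applying $J_p$ and using idempotence gives $J_p(a)=J_p(a)\oplus J_p(c)$, whence $J_p(c)=0$, i.e.\ $c\le p'$, so $c\in\ker(J_p)=J_{p'}(E)$ and $c=J_{p'}(c)$. It then remains to check $J_{p'}(a)=c$: since $J_p(a)\le p$ lies in $\ker(J_{p'})$, additivity of $J_{p'}$ gives $J_{p'}(a)=J_{p'}(J_p(a))\oplus J_{p'}(c)=0\oplus c=c$, yielding (ii). The implication (ii)$\Rightarrow$(iii) is immediate because $J_p(a)\in E[0,p]$ and $J_{p'}(a)\in E[0,p']$. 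For (iii)$\Rightarrow$(iv), if $a=a_1\oplus a_2$ with $a_1\le p$ and $a_2\le p'$, then setting $a_1':=a_1$, $b_1':=a_2$ and common part $c:=0$ exhibits a Mackey decomposition of $a$ and $p$, since $a_1\oplus a_2\oplus 0$ with $a=a_1\oplus 0$-type witnesses; more carefully, I would take the witnesses so that $p=a_1\oplus(p\ominus a_1)$ and $a=a_1\oplus a_2$ share the common summand $a_1$ with $a_2\perp(p\ominus a_1)$, which holds as $a_2\le p'$. This shows $a\leftrightarrow p$.

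Next, (iv)$\Rightarrow$(v): from Mackey compatibility write $a=a_1\oplus c$, $p=b_1\oplus c$ with $a_1\oplus b_1\oplus c$ existing. Here $c\le p$ forces $J_p(c)=c$, while $a_1\le b_1'$ combined with $b_1\oplus c=p$ gives $a_1\le p'$, so $J_p(a_1)=0$; additivity then yields $J_p(a)=c$. I would then verify $c=p\wedge a$: clearly $c\le a$ and $c\le p$, and if $x\le a,p$ one shows $x\le c$ using principality of $p$ and the decomposition, so $c$ is the infimum. Finally (v)$\Rightarrow$(i) is trivial, as $J_p(a)=p\wedge a\le a$.

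The main obstacle I anticipate is the passage (iii)$\Rightarrow$(iv) and the verification in (iv)$\Rightarrow$(v) that $J_p(a)$ equals the infimum $p\wedge a$; producing the correct Mackey witnesses and confirming that the relevant orthogonal sums actually exist requires careful bookkeeping with the partial operation $\oplus$ and repeated use of the principality of $p$ (so that sums of elements below $p$ stay below $p$). Since this lemma is cited from \cite[Lemma 4.1]{Pucompr}, I would cross-check each step against the properties of compressions recorded after Definition~\ref{de:compr}, and if a direct argument for the infimum proves delicate, I would instead close the cycle through (ii) and use uniqueness of the orthogonal decomposition $a=J_p(a)\oplus J_{p'}(a)$ to identify $J_p(a)$ with $p\wedge a$.
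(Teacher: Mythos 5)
Your cycle (i)$\Rightarrow$(ii)$\Rightarrow$(iii)$\Rightarrow$(iv)$\Rightarrow$(v)$\Rightarrow$(i) is correct, but note that the paper itself gives no proof of this lemma: it is imported verbatim from \cite[Lemma 4.1]{Pucompr}. So your argument is a self-contained reconstruction, and it does go through using exactly the facts recorded after Definition~\ref{de:compr}: additivity and idempotence of retractions, $J_p(x)=x$ for $x\le p$, the compression property $J_p(x)=0\iff x\le p'$, and the fact that $J_{p'}$ is itself a compression with focus $p'$, so $\ker(J_{p'})=[0,p]$ (you do not even need the full supplement relation). Two soft spots deserve attention. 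First, in (iii)$\Rightarrow$(iv) your initial choice of witnesses with common part $c:=0$ is wrong: a Mackey decomposition with common part $0$ would force $a\perp p$. Your corrected choice is the right one — common part $a_1$, private parts $a_2$ and $p\ominus a_1$ — and the triple sum exists because $(p\ominus a_1)\oplus a_1=p$ while $a_2\le p'$ gives $a_2\perp p$. Second, in (iv)$\Rightarrow$(v) you defer the verification that $c:=J_p(a)$ equals $p\wedge a$ to ``principality of $p$''; principality is not the relevant tool here, and the clean argument is monotonicity of additive maps: if $x\le p$ and $x\le a$, then $x=J_p(x)\le J_p(a)=c$, since $x\le a$ means $a=x\oplus d$ and hence $J_p(a)=J_p(x)\oplus J_p(d)\ge J_p(x)$; combined with $c\le a$ and $c=J_p(c)\le p$ this shows $c$ is the infimum. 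With that substitution (and dropping the unnecessary fallback through uniqueness of the decomposition in (ii), which you never actually need), your proof is complete and is essentially the standard argument one would find in \cite{Pucompr}.
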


If any of the conditions in the above lemma is satisfied for  $a\in E, p\in P$, we say that $a$ and $p$ \emph{commute}
or that $a$ and $p$ are \emph{compatible}. The \emph{commutant} of $p$ in $E$ is defined by
\[
C(p):=\{ a\in E: a=J_p(a) \oplus J_{p'}(a)\}.
\]
If $Q\subseteq P$, we write $C(Q):=\bigcap_{p\in Q}C(p)$. Similarly, for an element $a\in E$, and a subset $A\subseteq E$, we write
\[
PC(a):=\{p\in P,\ a\in C(p)\},\ \  PC(A):=\bigcap_{a\in A} PC(a).
\]
We also define
\begin{align*}
CPC(a):= C(PC(a)),\qquad 
P(a):=CPC(a)\cap PC(a)=PC(PC(a)\cup\{a\}).
\end{align*}
The set $P(a)\subseteq P$ will be called the \emph{bicommutant} of $a$. For a subset $Q\subseteq E$, we put
\[
P(Q):= PC(PC(Q)\cup Q).
\]
Note that the elements in $P(Q)$ are pairwise Mackey
compatible and since $P$ is a regular OMP, this implies that $P(Q)$ is a Boolean subalgebra in $P$.

\begin{lemma} \cite[Lemma 4.2]{Pucompr}\label{lemma:compatible_projs} Let $p,q\in P$, $a\in E$.
 If $p\perp q$ and either $a\in C(p)$ or $a\in C(q)$, then
\[
J_{p\vee q}(a)=J_{p\oplus q}(a)=J_p(a)\oplus J_q(a).
\]

\end{lemma}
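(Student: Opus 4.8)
The plan is to compute $J_{p\oplus q}(a)$ by applying $J_{p\oplus q}$ to the canonical decomposition of $a$ coming from commutativity, and then identify the two resulting pieces. First I would dispose of the equality $J_{p\vee q}=J_{p\oplus q}$: since $p,q\in P$ are principal, orthogonality $p\perp q$ forces $p\oplus q=p\vee q$ in the OMP $P$, so the two foci coincide and hence so do the associated compressions. Write $r:=p\oplus q=p\vee q$. Assuming $a\in C(p)$, the definition of the commutant gives $a=J_p(a)\oplus J_{p'}(a)$; applying $J_r$ and using additivity of $J_r$ yields $J_r(a)=J_r(J_p(a))\oplus J_r(J_{p'}(a))$, and the orthogonality of the two images is automatic since additive maps preserve $\perp$. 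Because $J_p(a)\le p\le r=J_r(1)$ and $J_r$ is a retraction, the first summand collapses to $J_p(a)$. It therefore remains to prove the map identity $J_r\circ J_{p'}=J_q$, for then $J_r(J_{p'}(a))=J_q(a)$ and the statement follows.

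The crux is exactly this identity $J_r\circ J_{p'}=J_q$, which I would obtain from the Corollary to Proposition \ref{prop:CB_orig} (for compatible projections $s\leftrightarrow t$ one has $J_s\circ J_t=J_{s\wedge t}$). The point is to produce a Mackey-compatibility witness for $r$ and $p'$ with common part $q$. Since $q\le r$ we have $q\perp r'$ and $p\oplus q\oplus r'=r\oplus r'=1$, so $p'=1\ominus p=r'\oplus q$. Thus the two decompositions $r=p\oplus q$ and $p'=r'\oplus q$, together with the existing triple sum $p\oplus r'\oplus q=1$, exhibit $r\leftrightarrow p'$ with common part $q$; by the Corollary this gives $r\wedge p'=q$ and hence $J_r\circ J_{p'}=J_{r\wedge p'}=J_q$. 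Combining this with the previous paragraph gives $J_r(a)=J_p(a)\oplus J_q(a)$, as required.

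The case $a\in C(q)$ is handled by the same argument after exchanging the roles of $p$ and $q$ and using $p\oplus q=q\oplus p$; there the symmetric witness $q\oplus r'=q'$ gives $J_r\circ J_{q'}=J_p$, and commutativity of $\oplus$ reconciles the two forms. The only delicate point I expect is the bookkeeping that turns the hypothesis $p\perp q$ into the clean three–term orthogonal decomposition $1=p\oplus q\oplus r'$ and thereby into the Mackey witness for $r\leftrightarrow p'$ with common part $q$; once that decomposition is secured, both the compatibility $r\leftrightarrow p'$ and the meet $r\wedge p'=q$ read off directly from the Corollary, so no separate appeal to regularity of $P$ is needed.
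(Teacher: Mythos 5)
Your proof is correct, and it is worth noting that the paper does not actually prove this lemma: it is quoted from \cite[Lemma 4.2]{Pucompr}, so your argument is a self-contained derivation from results stated earlier in the paper rather than a reproduction of the paper's own proof. The skeleton is sound: $p\oplus q=p\vee q$ because projections are principal; $a\in C(p)$ gives $a=J_p(a)\oplus J_{p'}(a)$; additivity of $J_r$ (with $r=p\oplus q$) and the retraction property collapse the first summand; and the whole weight falls on the identity $J_r\circ J_{p'}=J_q$, which you correctly extract from the Mackey witness $p\oplus r'\oplus q=1$, $r=p\oplus q$, $p'=r'\oplus q$. Two small points. First, the Corollary following Proposition \ref{prop:CB_orig} literally asserts $J_r\circ J_{p'}=J_{r\wedge p'}$; the identification $r\wedge p'=q$ is contained in its \emph{proof} (the common part of a Mackey witness is the meet), not its statement, though it is also easy to verify directly: if $d\le r$ and $d\le p'$, then $d\perp p$, so $d\oplus p=d\vee p\le r=p\oplus q$ in the OMP $P$, and cancellation gives $d\le q$. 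Cleaner still, Proposition \ref{prop:CB_orig}(ii) applied to the triple sum $p\oplus r'\oplus q=1$ yields $J_{p\oplus q}\circ J_{r'\oplus q}=J_q$ in one stroke; note that the paper's printed statement of (ii) has a typo --- its proof shows the composite is the compression onto the \emph{common} summand, i.e.\ $J_{p\oplus q}\circ J_{q\oplus r}=J_q$ rather than $J_r$ --- and your usage matches the corrected reading. Second, in the symmetric case $a\in C(q)$ the witness should read $p\oplus r'=q'$, not $q\oplus r'=q'$ (the latter sum equals $p'$); the conclusion $J_r\circ J_{q'}=J_p$ that you draw from it is the right one, so this is only a notational slip, not a gap.
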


Recall that a maximal set of pairwise Mackey compatible elements in a regular  OMP $P$ is
called a \emph{block} of $P$ \cite[Cor. 1.3.2]{PtPu}.
It is well known that every block  $B$ is a Boolean subalgebra of $P$ \cite[Thm. 1.3.29]{PtPu}.
If $B$ is a block of $P$, the set $C(B)$ will be called a \emph{C-block} of $E$.

\begin{example} \label{ex:hilb_commutant}
 Let  $E=E(\mathcal H)$ for a Hilbert space $\mathcal H$. It is easily checked that for any projection $p\in P(\mathcal H)$,
\[
C(p)=\{p\}'\cap E=\{a\in E, pa=ap\}
\]
and for any $a\in E$,
\[
P(a)=\{a\}''\cap P(\mathcal H)
\]
(here $C'$ denotes the usual commutant of a subset of bounded operators $C\subset B(\mathcal H)$). The C-blocks are the unit intervals in maximal abelian von Neumann subalgebras of $B(\mathcal H)$.
 In Section \ref{sec:bcomp} we introduce a property under which the C-blocks can be characterized in a similar way (see Theorem \ref{thm:cblock}).
\end{example}

\subsection{Projection cover property}

\begin{definition}\label{de:projcov} If $a\in E$ and $p\in P$, then $p$ is a \emph{projection cover} for $a$ if, for all
$q\in P$, $a\leq q \,\Leftrightarrow \, p\leq q$. We say that $E$  has the \emph{projection cover property} if every
effect $a\in E$ has a (necessarily unique) projection cover. The projection cover of $a\in
E$ will be denoted as $a\dg$.

\end{definition}


\begin{theorem}[{\cite[Thm. 5.2]{Gucomprba}, \cite[Thm. 5.1]{Pucompr}}]\label{th:projcovoml}  Suppose that $E$ has the projection cover property.
Then $P$ is an orthomodular
lattice (OML). Moreover, $\mathcal P$ is sup/inf-closed in $E$.
\end{theorem}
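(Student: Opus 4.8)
The plan is to prove the two assertions by largely independent arguments: the sup/inf-closedness follows almost immediately from the defining property of projection covers, while the lattice property requires an explicit construction of binary joins in $P$.

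First I would settle inf-closedness. Suppose $M\subseteq P$ and $a=\bigwedge M$ exists in $E$. For each $p\in M$ we have $a\le p$, so by Definition \ref{de:projcov} the projection cover satisfies $a\dg\le p$; hence $a\dg$ is a lower bound of $M$ in $E$, giving $a\dg\le a$. Since $a\le a\dg$ always, we conclude $a=a\dg\in P$. Sup-closedness then follows by duality: the orthosupplementation $x\mapsto x'$ is an order-reversing involution on $E$, so if $a=\bigvee M$ exists with $M\subseteq P$, then $a'=\bigwedge\{p':p\in M\}$ exists in $E$, lies in $P$ by the inf case (as $\{p':p\in M\}\subseteq P$), and therefore $a=(a')'\in P$.

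For the OML assertion, since $P$ is already known to be a regular OMP, it suffices to produce binary joins (meets then follow from $p\wedge q=(p'\vee q')'$). Given $p,q\in P$, set $b:=J_{p'}(q)$; as $b\le J_{p'}(1)=p'\in P$ we have $b\dg\le p'$, so $b\dg\perp p$ and $r:=p\oplus b\dg$ lies in $P$. I claim $r=p\vee q$. That $r$ is the least upper bound is the easier half: if $s\in P$ with $p\le s$ and $q\le s$, then $p$ and $s$ are comparable, hence Mackey compatible, so by Lemma \ref{le:comE} we have $J_p(s)=p\wedge s=p$ and $s=J_p(s)\oplus J_{p'}(s)$, giving $J_{p'}(s)=s\ominus p\in P$. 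Monotonicity of $J_{p'}$ yields $b=J_{p'}(q)\le J_{p'}(s)=s\ominus p$, whence $b\dg\le s\ominus p$ by the projection cover property, and therefore $r=p\oplus b\dg\le p\oplus(s\ominus p)=s$.

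The crux is showing $q\le r$. Here I would use the kernel characterization of compressions: since $J_{r'}$ has focus $r'$, one has $J_{r'}(x)=0\Leftrightarrow x\le (r')'=r$. As $r\ge p$ gives $r'\le p'$, the elements $r'$ and $p'$ are comparable hence Mackey compatible, so the corollary to Proposition \ref{prop:CB_orig} yields $J_{r'}\circ J_{p'}=J_{r'\wedge p'}=J_{r'}$. Applying this to $q$ gives $J_{r'}(q)=J_{r'}(J_{p'}(q))=J_{r'}(b)$, and since $b\le b\dg\le r$ the kernel characterization forces $J_{r'}(b)=0$. Thus $J_{r'}(q)=0$, i.e. $q\le r$, completing the proof that $r=p\vee q$ and hence that $P$ is an OML. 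The main obstacle is precisely this last verification that the candidate $r$ dominates $q$ even though no compatibility between $q$ and $p$ is assumed; the device that resolves it is the composition rule $J_{r'}\circ J_{p'}=J_{r'}$ combined with the compression kernel condition of Definition \ref{de:compr}.
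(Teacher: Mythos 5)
Your proof is correct, and it is essentially the standard argument: the paper itself gives no proof (deferring to \cite[Thm. 5.2]{Gucomprba} and \cite[Thm. 5.1]{Pucompr}), and those references prove the lattice property exactly via your construction $p\vee q=p\oplus\bigl(J_{p'}(q)\bigr)\dg$, using $J_{r'}\circ J_{p'}=J_{r'}$ together with $\ker(J_{r'})=[0,r]$ to get $q\le r$, and they obtain sup/inf-closedness from the projection cover property just as you do. The only point worth making explicit is why $s\ominus p\in P$ in your least-upper-bound step (e.g. $s\ominus p=(p\oplus s')'$ and $P$ is a sub-effect algebra), or alternatively you can bypass it entirely: from $b\dg\le s$, $p\le s$, $p\perp b\dg$ and the fact that projections are principal, one gets $r=p\oplus b\dg\le s$ directly.
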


\begin{prop}\label{prop:pc_commutant} Let $E$ have the projection cover property. Then for
any  $a\in E$, $a\dg\in P(a)$.

\end{prop}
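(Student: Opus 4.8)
The plan is to unfold the definition $P(a)=CPC(a)\cap PC(a)$ and establish the two memberships $a\dg\in PC(a)$ and $a\dg\in CPC(a)=C(PC(a))$ separately; the first says that $a$ commutes with its own projection cover, the second that $a\dg$ commutes with every projection commuting with $a$. The first is immediate: since $a\dg$ majorizes $a$ and $a\dg=J_{a\dg}(1)$ is the focus of $J_{a\dg}$, the retraction property (Definition~\ref{de:compr}) gives $J_{a\dg}(a)=a$, so $a\in C(a\dg)$ by Lemma~\ref{le:comE}, i.e.\ $a\dg\in PC(a)$.

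The substance lies in $a\dg\in CPC(a)$. First I would fix an arbitrary $p\in PC(a)$, so $a\in C(p)$; note that $C(p)=C(p')$, since $(p')'=p$ and the identity $a=J_p(a)\oplus J_{p'}(a)$ is symmetric in $p,p'$, whence also $a\in C(p')$. I then set $b:=(J_p(a))\dg$ and $c:=(J_{p'}(a))\dg$. As $J_p(a)\le p$ and $J_{p'}(a)\le p'$, the projection cover property forces $b\le p$ and $c\le p'$, so $b\perp c$, the element $r:=b\oplus c$ lies in $P$, and, projections being principal, $r=b\vee c$ is the least projection above $b$ and $c$. The key claim is $r=a\dg$, which I prove via the description of $a\dg$ as the least projection majorizing $a$. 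For the upper bound, $J_p(a)\le b$ and $J_{p'}(a)\le c$ yield $a=J_p(a)\oplus J_{p'}(a)\le b\oplus c=r$. For minimality, given any $q\in P$ with $a\le q$, the commutation $a\in C(p)=C(p')$ gives (Lemma~\ref{le:comE}) $J_p(a)\le a\le q$ and $J_{p'}(a)\le a\le q$, whence $b\le q$ and $c\le q$ by the projection cover property, so $r=b\vee c\le q$. Hence $r=a\dg$.

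Finally I would check $a\dg=r\in C(p)$: since $b\le p$ we have $J_p(b)=b$, and since $c\le p'$ we have $J_p(c)=0$ because $\ker J_p=[0,p']$; additivity then gives $J_p(r)=J_p(b)\oplus J_p(c)=b\le r$, so $r\in C(p)$ by Lemma~\ref{le:comE}. As $p\in PC(a)$ was arbitrary, $a\dg\in C(PC(a))=CPC(a)$, and combined with $a\dg\in PC(a)$ this gives $a\dg\in P(a)$. The main obstacle is finding and justifying the decomposition $a\dg=(J_p(a))\dg\oplus(J_{p'}(a))\dg$ for each projection $p$ commuting with $a$; the crucial point that makes the minimality half go through is the hypothesis $a\in C(p)$, which through Lemma~\ref{le:comE} provides $J_p(a)\le a$ and $J_{p'}(a)\le a$ and thereby lets an arbitrary projection bound of $a$ dominate the covers of both summands.
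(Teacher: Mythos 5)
Your proof is correct, but it is not the route the paper takes: the paper disposes of the proposition in two lines, getting $a\dg\in PC(a)$ from $a\le a\dg$ via Lemma~\ref{le:comE}(iv) (essentially your first paragraph, reached through Mackey compatibility rather than the retraction property) and then simply citing \cite[Thm.~5.2(i)]{Pucompr} for the substantive half $a\dg\in CPC(a)$. You instead prove that half from scratch, and your argument is sound: for $p\in PC(a)$ the decomposition $a=J_p(a)\oplus J_{p'}(a)$, the bounds $J_p(a)\le p$, $J_{p'}(a)\le p'$, principality of projections, and the defining property of the cover together give $a\dg=(J_p(a))\dg\oplus(J_{p'}(a))\dg$, after which $J_p(a\dg)=(J_p(a))\dg\le a\dg$ and Lemma~\ref{le:comE}(i) yield $a\dg\in C(p)$. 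Each ingredient you invoke (additive maps are order-preserving, $P$ is closed under orthogonal sums, foci are principal, $\ker J_p=[0,p']$ for a compression, uniqueness of the projection cover) is available in the paper, so your proof is self-contained where the paper's is not; the price is length, the gain is independence from the external reference. It is worth noting that your key identity $a\dg=(J_p(a))\dg\oplus(J_{p'}(a))\dg$ is the same device the paper itself deploys later, in the proof of Lemma~\ref{lemma:subinterval}(v), to show $(b\wedge q)\dg=b\dg\wedge q$ for $b\in C(q)$, so your reconstruction is very much in the spirit of the cited result.
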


\begin{proof}  Since $a\le a\dg$, $a\in C(a\dg)$ by Lemma \ref{le:comE} (iv). The rest follows by \cite[Thm. 5.2 (i)]{Pucompr}.

\end{proof}

\subsection{b-property and comparability}\label{sec:bcomp}

In this section we recall the notion of the b-comparability property in effect algebras, introduced in \cite{Pucompr} as
an analogue of the general comparability property in unital partially ordered abelian groups
\cite{Funig}, which is itself an extension of the general comparability property in interpolation groups \cite{Good}. Some more
details on compressions and comparability for partially ordered abelian groups can be
found in Appendix \ref{app:spec}.

We first introduce a property that allows us to extend the notion of commutativity to all pairs of elements $a,b\in E$, obtaining the analogue of commutativity of quantum effects.

\begin{definition}[{\cite[Definition 6.1]{Pucompr}}]\label{de:belement}  We will say that $a\in E$ has the
\emph{b-property} (or is a \emph{b-element}) if there is a Boolean subalgebra $B(a)\subseteq P$ such that for all $p\in P$, $a\in C(p) \,\Leftrightarrow \, B(a)\subseteq C(p)$. We say that $E$ has the \emph{b-property} if every $a\in E$ is a b-element.
\end{definition}

The Boolean subalgebra  $B(a)$ in the above definition is in general not unique. The next lemma shows that the bicommutant of $a$ is the
 largest such subalgebra.

\begin{lemma}\label{lemma:BaPa} Let $a\in E$ be a b-element. Then
\begin{enumerate}
\item $B(a)\subseteq P(a)$.
\item For $p\in P$, $a\in C(p)$ $\iff$ $P(a)\subseteq C(p)$.
\end{enumerate}

\end{lemma}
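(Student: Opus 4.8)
The plan is to prove everything by unwinding the definitions of $PC(a)$, $CPC(a)$ and $P(a) = CPC(a)\cap PC(a)$, using two background facts. First, by Lemma \ref{le:comE}, for any two projections $r,q\in P$ we have the symmetric equivalence $r\in C(q)\iff r\leftrightarrow q\iff q\in C(r)$. Second, since $B(a)$ is a Boolean subalgebra of $P$, all of its elements are pairwise Mackey compatible; as $P$ is a normal sub-effect algebra (Proposition \ref{prop:CB_orig}), this compatibility holds in $E$, so for $r,s\in B(a)$ we get $s\leftrightarrow r$, hence $s\in C(r)$. I would record these two observations at the start, since both parts rest on them.

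For part (1), I would show directly that every $r\in B(a)$ lies in $P(a)=PC(a)\cap CPC(a)$, checking the two membership conditions separately. To get $r\in PC(a)$, i.e.\ $a\in C(r)$, I apply the b-property with $p=r$: it suffices to verify $B(a)\subseteq C(r)$, which is exactly the second background fact. To get $r\in CPC(a)$, I take an arbitrary $q\in PC(a)$, so $a\in C(q)$; the b-property applied with $p=q$ gives $B(a)\subseteq C(q)$, and in particular $r\in C(q)$. Since $q$ was arbitrary in $PC(a)$, this yields $r\in C(PC(a))=CPC(a)$. Combining the two gives $r\in P(a)$, hence $B(a)\subseteq P(a)$.

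For part (2) I would argue the two implications. If $a\in C(p)$, then $p\in PC(a)$, and for any $r\in P(a)\subseteq CPC(a)$ the definition of $CPC(a)$ forces $r\in C(p)$; thus $P(a)\subseteq C(p)$. Conversely, if $P(a)\subseteq C(p)$, then by part (1) we also have $B(a)\subseteq P(a)\subseteq C(p)$, and the b-property (read from right to left) immediately gives $a\in C(p)$. I do not anticipate a genuine obstacle here: the argument is a careful bookkeeping of the nested commutants, and the only point requiring attention is keeping the quantifiers in $P(a)=PC(a)\cap CPC(a)$ straight and invoking the b-property in the correct direction in each case.
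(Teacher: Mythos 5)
Your proof is correct and follows essentially the same route as the paper's: part (1) by verifying both memberships $r\in PC(a)$ and $r\in CPC(a)$ via the two directions of the b-property, and part (2) by noting $p\in PC(a)$ forces $P(a)\subseteq CPC(a)\subseteq C(p)$, with the converse reduced to part (1) and the b-property. Your extra remarks (the symmetry of commutation for projections and the compatibility of elements of the Boolean subalgebra $B(a)$) are exactly what the paper leaves implicit, so nothing differs in substance.
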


\begin{proof} Let $q\in B(a)$. Since $B(a)$ is a Boolean subalgebra, all elements are mutually compatible, which means
that $B(a)\subseteq C(q)$. By definition, this implies $a\in C(q)$. Further, for any $p\in PC(a)$ we have $q\in B(a)\subseteq
C(p)$, so that
\[
q\in PC(a)\cap CPC(a)=P(a).
\]
This proves (i). To prove (ii), let $p\in P$ be such that $a\in C(p)$, then $p\in PC(a)$ so that $P(a)\subseteq C(p)$.
The converse follows by (i).

\end{proof}

\begin{lemma}\label{le:belement}{\rm \cite[Prop. 6.1]{Pucompr}}  {\rm(i)} If an element $a\in E$ is a b-element,
then there is a block $B$ of $P$ such that $a\in C(B)$. {\rm(ii)} Every projection $q\in P$ is a b-element with
$B(q)=\{0,q,q',1\}$.
\end{lemma}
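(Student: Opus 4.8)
The plan is to treat the two parts separately, starting with the more elementary claim (ii) and then exploiting the block structure of $P$ for (i).

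For (ii), I first observe that $\{0,q,q',1\}$ is genuinely a Boolean subalgebra of $P$: since $P$ is a sub-effect algebra, $q'=1\ominus q\in P$, and this four-element set is closed under orthosupplementation and under the (only) defined partial sums. It then remains to verify the defining equivalence of a b-element, namely that for every $p\in P$, $q\in C(p)\iff\{0,q,q',1\}\subseteq C(p)$. The implication $\Leftarrow$ is immediate as $q\in\{0,q,q',1\}$. For $\Rightarrow$, the inclusions $0,1\in C(p)$ hold for any $p$ since $J_p(0)\oplus J_{p'}(0)=0$ and $J_p(1)\oplus J_{p'}(1)=p\oplus p'=1$. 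The real content is that $C(p)$ is closed under orthosupplementation. Given $a\in C(p)$, write $x=J_p(a)\le p$ and $y=J_{p'}(a)\le p'$, so that $a=x\oplus y$. Applying additivity of $J_p$ and $J_{p'}$ to $1=a\oplus a'$ gives $J_p(a')=p\ominus x$ and $J_{p'}(a')=p'\ominus y$, and decomposing $1=p\oplus p'=[x\oplus(p\ominus x)]\oplus[y\oplus(p'\ominus y)]$ and regrouping yields the effect-algebra identity $(p\ominus x)\oplus(p'\ominus y)=1\ominus(x\oplus y)=a'$. Hence $J_p(a')\oplus J_{p'}(a')=a'$, i.e. $a'\in C(p)$. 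Taking $a=q$ gives $q'\in C(p)$, which completes (ii).

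For (i), let $a$ be a b-element with associated Boolean subalgebra $B(a)\subseteq P$. As $B(a)$ is Boolean, its elements are pairwise Mackey compatible, so a standard Zorn's lemma argument applied to the family of pairwise Mackey compatible subsets of $P$ containing $B(a)$ (ordered by inclusion, with unions of chains again pairwise compatible) produces a maximal such set $B$; by maximality $B$ is a block of $P$, and by construction $B(a)\subseteq B$. I then claim $a\in C(B)=\bigcap_{p\in B}C(p)$. Fix $p\in B$. Every $r\in B(a)$ lies in $B$, hence $r\leftrightarrow p$, and by the equivalence $(2)\Leftrightarrow(4)$ of Lemma \ref{le:comE} (applied with the element $r\in P\subseteq E$ and the projection $p$) this gives $r\in C(p)$. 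Thus $B(a)\subseteq C(p)$, and the defining property of the b-element $a$ forces $a\in C(p)$. Since $p\in B$ was arbitrary, $a\in C(B)$, as required.

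The main obstacle is the orthosupplement-closure of $C(p)$ used in (ii); once that computation is secured, both parts are short. In (i) the only delicate point is ensuring that the block $B$ really contains $B(a)$ and consists of pairwise compatible projections, so that membership in the block can be converted into commutant membership via Lemma \ref{le:comE} and then combined with the b-property to pass from $B(a)$ to $a$.
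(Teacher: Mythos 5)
Your proof is correct. Note that the paper does not actually prove this lemma --- it is quoted directly from \cite[Prop.~6.1]{Pucompr} --- so there is no in-paper argument to compare against; your proof supplies the missing details and follows the natural route: for (ii), the key observation that $C(p)$ always contains $0,1$ and is closed under orthosupplementation (via $J_p(a')=p\ominus J_p(a)$, $J_{p'}(a')=p'\ominus J_{p'}(a)$ and regrouping the orthogonal sum), and for (i), the Zorn extension of the pairwise compatible set $B(a)$ to a block $B$, conversion of membership in $B$ into commutant membership via the equivalence (ii)$\Leftrightarrow$(iv) of Lemma \ref{le:comE}, and then the defining property of a b-element to conclude $a\in C(B)$. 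Both steps are complete, including the point that a maximal pairwise compatible set containing $B(a)$ is automatically maximal among all pairwise compatible sets, hence a block.
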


Note that (ii) of the above Lemma shows that we may have $B(a)\subsetneq P(a)$. Indeed, if $E$ has RDP and $q\in E_S$,
 then $P(q)$ is all of $E_S$, see Example \ref{example:rdp}, whereas $B(q)$ in (ii) is the minimal Boolean subalgebra
with the required properties.

For $A,B\subseteq E$ we write $A\leftrightarrow B$ if $a\leftrightarrow b$ for all $a\in A$, $b\in B$.

\begin{lemma}\label{lemma:bcommuting} Let $e,f\in E$ be b-elements. Then $B(e)\leftrightarrow B(f)$ if and only if
$P(e)\leftrightarrow P(f)$.

\end{lemma}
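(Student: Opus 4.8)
The statement is an equivalence, and one implication is immediate: by Lemma \ref{lemma:BaPa}(i) we have $B(e)\subseteq P(e)$ and $B(f)\subseteq P(f)$, so if $P(e)\leftrightarrow P(f)$ then in particular $q\leftrightarrow r$ for all $q\in B(e)$, $r\in B(f)$, i.e. $B(e)\leftrightarrow B(f)$. The real content is the forward implication, and my plan is to reduce the bicommutants $P(e)$, $P(f)$ to the double commutants of the Boolean subalgebras $B(e)$, $B(f)$ and then run a purely formal ``polarity'' argument.

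First I would record the basic formal properties of the assignment $A\mapsto PC(A)$ restricted to subsets of $P$: since Mackey compatibility is symmetric and, for $q,p\in P$, $q\in C(p)\iff q\leftrightarrow p$ (Lemma \ref{le:comE}), this assignment is an order-reversing polarity on subsets of $P$, so that $A\subseteq PC(PC(A))$ and $PC(PC(PC(A)))=PC(A)$ for every $A\subseteq P$. Next, using the b-property of $e$ (Definition \ref{de:belement}), I would show $PC(e)=PC(B(e))$: indeed $p\in PC(e)$ iff $e\in C(p)$ iff $B(e)\subseteq C(p)$ iff $q\leftrightarrow p$ for all $q\in B(e)$ iff $p\in PC(B(e))$.

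The key reduction is then $P(e)=PC(PC(B(e)))$. Starting from $P(e)=PC(PC(e))\cap PC(e)$ and substituting $PC(e)=PC(B(e))$ gives $P(e)=PC(PC(B(e)))\cap PC(B(e))$. Since $B(e)$ is a Boolean subalgebra, its elements are pairwise compatible, whence $B(e)\subseteq PC(B(e))$, and therefore $PC(PC(B(e)))\subseteq PC(B(e))$ by order-reversal, so the intersection collapses to $P(e)=PC(PC(B(e)))$. The same computation applies to $f$, giving $P(f)=PC(PC(B(f)))$.

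Finally I would translate the hypothesis and conclusion into inclusions. For subsets $A,D\subseteq P$, the relation $A\leftrightarrow D$ is equivalent to $D\subseteq PC(A)$. Thus $B(e)\leftrightarrow B(f)$ reads $B(f)\subseteq PC(B(e))$; applying $PC$ twice (order-reversing each time) yields $PC(PC(B(f)))\subseteq PC(PC(PC(B(e))))=PC(B(e))$. Since $PC(B(e))=PC(PC(PC(B(e))))=PC(P(e))$ and $PC(PC(B(f)))=P(f)$, this says exactly $P(f)\subseteq PC(P(e))$, i.e. $P(e)\leftrightarrow P(f)$, as desired. The main obstacle, and the only step requiring real care, is the reduction $P(e)=PC(PC(B(e)))$, which hinges on combining the b-property with the pairwise compatibility of the Boolean subalgebra $B(e)$; once this identity is in place, the remainder is just the formal bookkeeping of the polarity $PC$.
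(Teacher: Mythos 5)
Your proof is correct, and it takes a genuinely different route from the paper's. Each step checks out: the backward implication via Lemma \ref{lemma:BaPa}(i); the identity $PC(e)=PC(B(e))$, which is the b-property read pointwise after using Lemma \ref{le:comE} to identify $q\in C(p)$ with $q\leftrightarrow p$ for projections; the collapse $P(e)=PC(PC(e))\cap PC(e)=PC(PC(B(e)))$, which rests on $B(e)\subseteq PC(B(e))$ (pairwise compatibility inside a Boolean subalgebra --- the same fact the paper uses when proving Lemma \ref{lemma:BaPa}); and the closing triple-polar identity $PC(PC(PC(A)))=PC(A)$, valid for the polarity induced by the symmetric relation $\leftrightarrow$ on $P$.

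The paper instead proves the forward direction by a direct chain of pointwise equivalences,
\[
B(e)\leftrightarrow B(f)\iff B(e)\subseteq C(B(f))\iff e\in C(B(f))\iff P(e)\subseteq C(B(f))\iff P(e)\leftrightarrow B(f),
\]
where the two middle steps are the b-property and Lemma \ref{lemma:BaPa}(ii) applied to each $p\in B(f)$, and then finishes by symmetry in $e$ and $f$. You never invoke Lemma \ref{lemma:BaPa}(ii); your substitute is the structural identity $P(e)=PC(PC(B(e)))=PC(PC(e))$, i.e.\ the bicommutant is a Galois-closed set, equal to the double polar of any witnessing Boolean subalgebra, after which the argument is pure bookkeeping. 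That identity is a by-product not stated in the paper: it re-proves Lemma \ref{lemma:BaPa}(ii) (apply $PC$ once more to get $PC(P(e))=PC(e)$), makes explicit that $P(e)$ does not depend on the choice of $B(e)$, and isolates exactly what the lemma needs --- the symmetry of $\leftrightarrow$ on $P$ plus the b-property. What the paper's proof buys is brevity, since it can lean on the already-established Lemma \ref{lemma:BaPa}(ii); what yours buys is modularity, as the same polarity computation would work verbatim for any symmetric compatibility relation in place of Mackey compatibility.
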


\begin{proof} By definition of $B(e)$, $B(f)$ and Lemma \ref{lemma:BaPa}, we have the following chain of equivalences:
\begin{align*}
B(e)\leftrightarrow B(f) &\iff\ B(e)\subseteq C(B(f))  \iff \ e\in
C(B(f))\\ &\iff \ P(e)\subseteq C(B(f))
\iff \ P(e)\leftrightarrow B(f),
\end{align*}
the proof is finished by symmetry in $e$ and $f$.

\end{proof}

Let $e,f\in E$ have the b-property. We say that $e$ and $f$ \emph{commute}, in notation $eCf$, if

\begin{equation}\label{eq:commut}\ P(e) \leftrightarrow P(f).
\end{equation}
Clearly, for $p\in P$ we have $eCp \iff e\leftrightarrow p \iff e\in C(p)$,   \cite[Lemma 6.1]{Pucompr}, so this
definition coincides with previously introduced notions if one of the elements is a projection.
Note also that we do not use the word ''compatible'' in this case,
 since in general this notion is different from Mackey compatibility.

Assume now that $E$ has the b-property, so all elements are b-elements. By Lemma \ref{le:belement}, any element of $E$
is contained in some C-block of $E$.

\begin{theorem}\label{thm:cblock} If $E$ has the b-property, then
C-blocks in $E$ coincide with maximal sets of pairwise commuting elements in $E$.

\end{theorem}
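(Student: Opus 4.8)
The plan is to prove the two inclusions separately: first that every C-block $C(B)$ is a maximal set of pairwise commuting elements, and then that every maximal set of pairwise commuting elements arises as a C-block. The technical heart of both directions is the following observation, which I would establish first: if $B$ is a block of $P$ and $e\in C(B)$, then $P(e)\subseteq B$. To see this, note that $e\in C(B)$ means $B\subseteq PC(e)$; hence any $q\in P(e)\subseteq CPC(e)=C(PC(e))$ is Mackey compatible with every element of $PC(e)$, in particular with every element of $B$. Since $B$ is a block, i.e. a maximal set of pairwise Mackey compatible projections, maximality forces $q\in B$.

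Granting this, the first direction is quick. For $e,f\in C(B)$ we have $P(e),P(f)\subseteq B$, and since a block is a Boolean subalgebra its elements are pairwise Mackey compatible, so $P(e)\leftrightarrow P(f)$, i.e. $eCf$; thus $C(B)$ consists of pairwise commuting elements. For maximality, observe that $B\subseteq C(B)$ (the projections of a block are pairwise compatible, so each lies in $C(q)$ for every $q\in B$). Hence if $g\in E$ commutes with all of $C(B)$, it commutes in particular with every $p\in B$; but for a projection $p$, commuting means $g\leftrightarrow p$, equivalently $g\in C(p)$, so $g\in\bigcap_{p\in B}C(p)=C(B)$. Therefore $C(B)$ is maximal.

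For the converse, let $M$ be a maximal set of pairwise commuting elements. I would assemble a candidate block from the bicommutants: set $Q:=\bigcup_{e\in M}P(e)$. Each $P(e)$ is a Boolean subalgebra of $P$, hence internally pairwise Mackey compatible, and for $e,f\in M$ the relation $eCf$ gives $P(e)\leftrightarrow P(f)$; consequently $Q$ is a set of pairwise Mackey compatible projections and can be enlarged to a block $B$ of $P$. Using Lemma \ref{lemma:BaPa}(ii) (namely $e\in C(p)\iff P(e)\subseteq C(p)$) together with $P(e)\subseteq Q\subseteq B$ and the fact that $B$ is Boolean, one gets $e\in C(p)$ for every $e\in M$ and $p\in B$, i.e. $M\subseteq C(B)$. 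Finally, $C(B)$ is itself a set of pairwise commuting elements by the first direction, and it contains the maximal such set $M$; hence $M=C(B)$ is a C-block.

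The main obstacle I expect is the key lemma $P(e)\subseteq B$: it is where the regularity of the OMP $P$ (ensuring blocks are Boolean and the bicommutants are well behaved) and the precise definition $P(e)=CPC(e)\cap PC(e)$ must be used together with block maximality. A secondary point requiring care is keeping straight the two notions of ``commuting''---Mackey compatibility $\leftrightarrow$ for projections versus the bicommutant relation $eCf$ for general b-elements---and passing between them via Lemmas \ref{lemma:BaPa} and \ref{lemma:bcommuting}.
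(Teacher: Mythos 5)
Your proof is correct and takes essentially the same route as the paper: your key observation that $e\in C(B)$ forces $P(e)\subseteq B$ is precisely the paper's opening step ($P(a)\subseteq C(p)$ for all $p\in B$, hence $P(a)\subseteq PC(B)=B$ by block maximality), and both your directions—including extending $\bigcup_{e\in M}P(e)$ to a block via maximality/Zorn and pulling back with Lemma \ref{lemma:BaPa}(ii)—mirror the paper's argument.
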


\begin{proof}
 First observe that $PC(B)=B$ by maximality of $B$.
Let $a,b\in C(B)$, then $P(a),P(b) \subseteq C(p)$  for all $p\in B$, therefore
$P(a),P(b)\subseteq PC(B)= B$. This implies that $P(a)\leftrightarrow P(b)$, i.e. $aCb$. Let $g\in E$ be such that $gCa$
for all $a\in C(B)$. In particular,  $g\leftrightarrow p$ for all $p\in B$, which means that $g\in C(B)$. This implies that
$C(B)$ is a maximal set of mutually commuting elements.

Conversely, let $C\subset E$ be maximal with respect the property $aCb$ for all $a,b\in C$. This means that
$P(a)\leftrightarrow P(b)$ for all $a,b\in C$, hence there exists a block $B$ of $P$ with $\bigcup_{a\in C}P(a)\subseteq B$.
This entails that $P(a)\leftrightarrow B$, which implies  $a\in C(B)$  for all $a\in C$, hence $C\subseteq C(B)$. By the
first part of the proof and
 maximality of $C$, we have $C=C(B)$.
\end{proof}

\begin{example}\label{ex:bprop_quantum} Let $E=E(\mathcal H)$. Since for $a\in E$, $\{a\}'=\{a\}'''=P(a)'$ (see Example \ref{ex:hilb_commutant}), we see that $E$ has the b-property and for $a,b\in E$, we have $aCb$ if and only if
$ab=ba$.
Consequently, the C-blocks of $E(\mathcal H)$ indeed coincide with unit intervals in maximal abelian subalgebras in $B(\mathcal H)$.

\end{example}

\begin{definition}\label{de:b-compar} {\rm (Cf. \cite[Definition 6.3]{Pucompr}) } An effect algebra  $E$ has the
\emph{b-comparability property} if
\begin{enumerate}
\item[(a)] $E$ has  the b-property.
\item[(b)] For all $e,f\in E$ such that  $eCf$, the set
\[
P_\le(e,f):=\{p\in P(e,f):\ J_p(e)\le J_p(f)\\ \text{ and } J_{1-p}(f)\le J_{1-p}(e)\}
\]
is nonempty.
\end{enumerate}
\end{definition}

To provide some intuition for the definition  of b-comparability, we look at the spectral
resolution of a self-adjoint element  $a\in B(\mathcal H)$. We may restrict to a commutative
subalgebra containing $a$, which is isomorphic to a space of functions. Spectral projections
of $a$ correspond to subsets on which the function representing $a$ is comparable with a
multiple of the unit. To find such projections, we need to be able to find a decomposition
of elements in the subalgebra to a positive and negative part, which can be seen to be  equivalent to the property (b) in 
Definition \ref{de:b-compar} for pairs of commuting self-adjoint elements. See also Example \ref{ex:hilb_commutant}.

The b-comparability property has important consequences on the set of projections and on the structure of the C-blocks.

\begin{theorem}\label{th:proper}{\rm \cite[Thm. 6.1]{Pucompr}} Let $E$  have the b-comparability property. Then
every sharp element is a projection: $P=E_S$.
\end{theorem}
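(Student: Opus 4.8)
The plan is to establish the nontrivial inclusion $E_S\subseteq P$; the reverse inclusion $P\subseteq E_S$ is already available, since every focus is a principal element and principal elements are sharp. So I would fix a sharp element $a\in E_S$, i.e.\ one with $a\wedge a'=0$, and try to exhibit a projection equal to $a$. The creative choice is to feed the pair $(a,a')$ into condition (b) of b-comparability; the sharpness hypothesis is exactly what will collapse the resulting inequalities.

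First I must check that this pair is a \emph{commuting} pair, so that (b) applies. I would show $a\in C(p)\iff a'\in C(p)$ for every $p\in P$: if $a=J_p(a)\oplus J_{p'}(a)$, then additivity of $J_p,J_{p'}$ applied to $1=a\oplus a'$ together with $1=p\oplus p'$ gives $J_p(a')=p\ominus J_p(a)$ and $J_{p'}(a')=p'\ominus J_{p'}(a)$, whence $a'=J_p(a')\oplus J_{p'}(a')$, so $a'\in C(p)$; the converse is $a''=a$. Thus $PC(a)=PC(a')$, hence $P(a)=P(a')$. Since the elements of $P(a)$ are pairwise Mackey compatible, $P(a)\leftrightarrow P(a')=P(a)$, that is $a\,C\,a'$.

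Now b-comparability (b) produces $p\in P(a,a')$ with $J_p(a)\le J_p(a')$ and $J_{p'}(a')\le J_{p'}(a)$ (here $1-p=p'$). Because $p\in P(a,a')$ we have $a,a'\in C(p)$, so $a=J_p(a)\oplus J_{p'}(a)$ and $a'=J_p(a')\oplus J_{p'}(a')$; in particular $J_p(a)\le a$ and $J_p(a')\le a'$. Sharpness now squeezes: from $J_p(a)\le a$ and $J_p(a)\le J_p(a')\le a'$ we get $J_p(a)\le a\wedge a'=0$, so $J_p(a)=0$, and symmetrically $J_{p'}(a')\le a\wedge a'=0$, so $J_{p'}(a')=0$. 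Translating via the compression property $J_q(x)=0\iff x\le q'$, the first equality gives $a\le p'$ and the second gives $a'\le p$, i.e.\ $p'\le a$. Hence $p'\le a\le p'$, so $a=p'$, which lies in $P$ since $P$ is a sub-effect algebra.

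The main obstacle is the bookkeeping in the middle step rather than any deep difficulty: one must verify both that $a$ and $a'$ commute (so (b) is legitimately applicable) and that the projection $p$ supplied by (b) genuinely lies in the bicommutant of $\{a,a'\}$, for only then are the decompositions $a=J_p(a)\oplus J_{p'}(a)$ and $a'=J_p(a')\oplus J_{p'}(a')$ — and hence the inequalities $J_p(a)\le a$, $J_p(a')\le a'$ — available. Once those commutation facts are secured, the sharpness of $a$ forces $J_p$ and $J_{p'}$ to annihilate $a$ and $a'$ on complementary foci, and the inclusion $a=p'$ follows immediately.
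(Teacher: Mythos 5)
Your proof is correct and is essentially the same argument as the one behind the result the paper cites from \cite[Thm. 6.1]{Pucompr} (the paper itself gives no proof, only the citation): feed the commuting pair $(a,a')$ into b-comparability and let sharpness squeeze $J_p(a)=0$ and $J_{p'}(a')=0$, so that $a\le p'\le a$ and $a=p'\in P$. Your explicit verifications that $a\,C\,a'$ and that $p\in P(a,a')$ forces $a,a'\in C(p)$ carefully fill in steps the paper treats as immediate (it remarks later that ``for any element $a\in E$, we clearly have $aCa'$'').
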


Note that by the above theorem, if b-comparability holds, then the compression base must be maximal.

\begin{theorem}\label{th:bcompar} { \rm  \cite[Thm. 7.1]{Pucompr}} Let $E$ have the b-comparability property and
 let $C=C(B)$ for a block $B$ of $P$. Then
\begin{enumerate}
\item $C$ is an MV-effect algebra.
\item For $p\in B$,  the restriction ${J}_p|_{C}$ coincides with $U_p$ (recall Example \ref{example:rdp}) and
$(U_p)_{p\in B}$ is the total compression base in $C$. Moreover, $C$ with $(U_p)_{p\in B}$
 has the b-comparability property.
\item If $E$ has the projection cover property, then $C$ has the projection cover property.
\item If $E$ is $\sigma$-orthocomplete, then $C$ is $\sigma$-orthocomplete.
\end{enumerate}
\end{theorem}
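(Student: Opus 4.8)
The plan is to establish the four claims in order, re-using throughout one consequence of b-comparability: for commuting $e,f\in C$ the comparison projection of Definition~\ref{de:b-compar}(b) may always be chosen \emph{inside} the block $B$. First I would record the basic structural facts. Since $B$ is a block and a projection $p\in P$ belongs to $C(B)$ precisely when $p\leftrightarrow q$ for all $q\in B$, maximality of $B$ forces $C\cap P=B$. I would then check that $C=C(B)$ is a sub-effect algebra: it contains $0,1$; it is closed under orthosupplementation because $a\leftrightarrow p\iff a'\leftrightarrow p$ via Lemma~\ref{le:comE}(iii); and it is closed under $\oplus$ since for $a,b\in C(p)$ additivity gives $J_p(a\oplus b)=J_p(a)\oplus J_p(b)\le a\oplus b$, whence $a\oplus b\in C(p)$ by Lemma~\ref{le:comE}(i). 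Intersecting over $p\in B$ yields closure for $C$. Finally, by Theorem~\ref{thm:cblock} all elements of $C$ commute pairwise, so $eCf$ for every $e,f\in C$.

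For (1) I would fix $e,f\in C$ and apply b-comparability. The resulting $p\in P_\le(e,f)\subseteq P(e,f)$ commutes with all of $B$ (as $B\subseteq PC(e,f)$), hence $p\in B$ by maximality of the block. Writing $J_p(a)=p\wedge a$ for $a\in C$ (Lemma~\ref{le:comE}(v)) and using $a=J_p(a)\oplus J_{p'}(a)$, I would set
\[
e\wedge f:=J_p(e)\oplus J_{p'}(f),\qquad e\vee f:=J_p(f)\oplus J_{p'}(e).
\]
Both sums exist (the summands are dominated by $p$ and $p'$), and each stays in $C$: for $q\in B$ one computes $J_q(J_p(e))=J_{p\wedge q}(e)=(p\wedge q)\wedge e\le J_p(e)$ using $J_q\circ J_p=J_{p\wedge q}$ (the Corollary following Proposition~\ref{prop:CB_orig}), so $J_p(e)\in C(q)$, and likewise for the other summand. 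The inequalities $J_p(e)\le J_p(f)$ and $J_{p'}(f)\le J_{p'}(e)$ make $e\wedge f$ a lower and $e\vee f$ an upper bound of $\{e,f\}$, while decomposing an arbitrary bound $c\in C$ as $c=J_p(c)\oplus J_{p'}(c)$ shows they are extremal; thus $C$ is a lattice. Cancelling the common summand on each side, the MV identity $(e\vee f)\ominus f=e\ominus(e\wedge f)$ collapses to $J_{p'}(e)\ominus J_{p'}(f)=J_{p'}(e)\ominus J_{p'}(f)$, so $C$ is an MV-effect algebra.

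For (2) I would first note that, by (1), $C$ has RDP, so by Example~\ref{example:rdp} its sharp elements coincide with its central elements and $(U_q)_{q\in C_S}$ is the total compression base on $C$, with $U_q(a)=a\wedge q$. Each $p\in B$ is sharp in $C$ (sharpness in $E$ is inherited by the sub-effect algebra $C$), hence central in $C$, and the meets with $p$ taken in $E$ and in $C$ agree; consequently $J_p|_C=U_p$, and $(U_p)_{p\in B}$ is a compression base on $C$ (closure under composition comes from $U_p\circ U_q=U_{p\wedge q}$ together with $B$ being a subalgebra). To see this base has b-comparability, I would use that in the RDP algebra $C$ every pair commutes, so the b-property is automatic and the bicommutant of any pair equals $B$; the comparison projection for $e,f$ furnished in (1) lies in $B$ and satisfies $U_p(e)\le U_p(f)$ and $U_{p'}(f)\le U_{p'}(e)$, i.e. it witnesses nonemptiness of $P_\le(e,f)$ for $(U_p)_{p\in B}$. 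Applying Theorem~\ref{th:proper} to $C$ with this base then yields $B=C_S$, so $(U_p)_{p\in B}=(U_q)_{q\in C_S}$ is precisely the total compression base.

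For (3), assuming the projection cover property in $E$, I would take $a\in C$ and its cover $a\dg\in P$, which by Proposition~\ref{prop:pc_commutant} lies in $P(a)\subseteq B$; being the least projection of $E$ above $a$, it is a fortiori the least element of $B$ above $a$, and the equivalence $a\le q\iff a\dg\le q$ holds for all $q\in P$, in particular for all $q\in B$, so $a\dg$ is the cover of $a$ in $C$. For (4), assuming $E$ monotone $\sigma$-complete, I would take an ascending sequence $(a_i)$ in $C$ with $E$-supremum $a$; for $p\in B$ the images $J_p(a_i)\uparrow b\le p$ and $J_{p'}(a_i)\uparrow c\le p'$, and by monotone sequential continuity of $\oplus$ one gets $a=\bigvee_i\bigl(J_p(a_i)\oplus J_{p'}(a_i)\bigr)=b\oplus c\in E[0,p]\oplus E[0,p']$, so $a\in C(p)$ by Lemma~\ref{le:comE}(iii); as this holds for all $p\in B$, $a\in C$, giving $\sigma$-orthocompleteness of $C$. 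The main obstacle is part~(1): one must \emph{simultaneously} place the comparison projection in $B$, keep the proposed meet and join inside $C$, and verify they are the genuine lattice operations, all of which rest on the compatibility bookkeeping with $B$ and the identity $J_q\circ J_p=J_{p\wedge q}$. The secondary delicate point is the bootstrap in (2), where b-comparability of $C$ is proved \emph{before} being used (through Theorem~\ref{th:proper}) to identify $B$ with $C_S$.
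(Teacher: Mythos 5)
Your proposal is correct, but there is nothing in the paper to compare it against: Theorem \ref{th:bcompar} is quoted from \cite[Thm.~7.1]{Pucompr} and the paper gives no proof of it, so your argument is a genuine (and successful) reconstruction. The structural skeleton is sound at every point I checked: maximality of the block gives $PC(B)=B$, hence $C\cap P=B$; closure of $C=C(B)$ under $\oplus$ and $'$ via Lemma \ref{le:comE}; the crucial observation that for $e,f\in C$ the comparison projection $p\in P_\le(e,f)\subseteq P(e,f)$ automatically lands in $B$ (because $B\subseteq PC(e,f)$, so $p\in PC(B)=B$); the verification that $J_p(e)\oplus J_{p'}(f)$ and $J_p(f)\oplus J_{p'}(e)$ lie in $C$ and are the meet and join there, with the MV identity collapsing to a tautology; and the bootstrap in (2), where you prove b-comparability of $(U_p)_{p\in B}$ directly (it is inherited from the projections found in (1)) and only then invoke Theorem \ref{th:proper} to conclude $B=C_S$, which is the correct logical order. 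Parts (3) and (4) are also fine: Proposition \ref{prop:pc_commutant} plus $P(a)\subseteq B$ places $a\dg$ in $B$, and the monotone-limit argument shows the $E$-supremum of a chain in $C$ stays in each $C(p)$. Three steps are used silently and deserve a one-line acknowledgment each: that MV-effect algebras have RDP (the paper states only the converse direction in Example \ref{example:rdp}, but this is standard, e.g.\ via \cite{Mun}); the ``monotone sequential continuity of $\oplus$'' in (4), i.e.\ $x_i\uparrow b$, $y_i\uparrow c$, $x_i\perp y_i$ implies $x_i\oplus y_i\uparrow b\oplus c$, which needs the small two-step argument of first fixing one index; and the equivalence of $\sigma$-orthocompleteness with monotone $\sigma$-completeness from \cite{JePu}, which the paper does record. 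None of these is a gap, only compression of routine detail.
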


\begin{remark}\label{rem:b_prop} It may seem that we could have defined a notion of
comparability without assuming the b-property, requiring existence of some projection such
that $J_p(a)\le J_p(b)$ and $J_{p'}(b)\le J_{p'}(a)$ for all pairs of elements
$a,b\in E$. As was shown in \cite[Prop. 6.2]{Pucompr}, this would imply that all of $E$ is just one
C-block, hence an MV-effect algebra.

\end{remark}

\begin{example}\label{example:rdp2} Let $E$ be an effect algebra with RDP, see Example \ref{example:rdp}. Since all
elements are mutually compatible, all $E$ is one C-block, $E=C(E_S)$. By Theorem \ref{th:bcompar} (i), we see that if $E$
has the b-comparability property then $E$ must be an MV-effect algebra. As we will see in the next paragraph,
b-comparability in this case is equivalent to comparability in interpolation groups, see \cite{Good}.

\end{example}

\begin{example}\label{ex:center2} Let $E$ be an effect algebra and let $(U_p)_{p\in \Gamma(E)}$ be the central compression base
 (Example \ref{ex:center}). By Theorem \ref{th:proper}, if b-comparability holds then we must have $E_S=\Gamma(E)$,
that is, every sharp element is central. Again, under comparability, all of $E$ becomes one C-block $E=C(E_S)$, which is an
MV-effect algebra. Note that $E_S=\Gamma(E)$ e.g. in the case of \emph{effect monoids} (see
\cite{www}), that is, effect algebras endowed with a binary operation $\cdot: E\times E\to E$ which is
unital, biadditive and associative. For  $p\in E_S$ and $a\in E$, we have
\[
p\cdot a=a\cdot p=p\wedge a=U_p(a).
\]

\end{example}

Some further examples will be treated in Section \ref{sec:spect}.

\subsection{b-comparability and RDP}\label{sec:rdp}


Let $E$ be an effect algebra with RDP.
By \cite{Rav},  $E$  is isomorphic to the unit interval in an abelian interpolation group $(G,u)$ with
order unit $u$ \cite{Good}, called the \emph{universal group} of $E$.
If moreover $E$ is lattice ordered (i.e. an MV-effect algebra), then $G$ is an $\ell$-group,  \cite{Mun}.
 In the rest of this section, we will identify  $E$ with the unit interval $[0,u]$ in its universal group.

Let us now recall the general comparability property in interpolation groups with order unit, see \cite[Chap. 8]{Good} for more details.
 For any sharp element $p\in E_S$, the convex subgroup $G_p$ generated by $p$ is an ideal of $G$ and
by RDP, we have the direct sum  $G=G_p\oplus G_{p'}$ (as ordered groups). Let $\tilde U_p$ be the projection of $G$
onto $G_p$ with kernel $G_{p'}$, then
\[
\tilde U_p(x)=x\wedge np,\quad \text{if }0\le x\le np \text{ for some } n\in \mathbb N.
\]
 Obviously, $\tilde U_p$ is an extension of the compression $U_p$ on
$E\simeq [0,u]$ (see Example \ref{example:rdp}). Note that the uniqueness of retractions implies that $(G,u)$ is a \emph{compressible group} in the
sense of \cite{Forc}.
%
%
We say that $(G,u)$ satisfies \emph{general comparability} if for any $x,y\in G$, there is some sharp element $p\in [0,u]$
such that $\tilde U_p(x)\le \tilde U_p(y)$ and $\tilde U_{p'}(x)\ge \tilde U_{p'}(y)$.
It is easily seen that $[0,u]$ has the
b-comparability property if $(G,u)$ satisfies general comparability. The aim of the rest of this section is to show that the converse is also true.

As noticed in Example \ref{example:rdp2}, if  $E=[0,u]$ has the b-comparability property, then it must be an MV-effect algebra. Consequently, the group
$G$ is lattice ordered. Let $a\in G$ be any element, then we have
\begin{equation}\label{eq:jordan_mv}
a=a_+-a_-,\qquad a_+,a_-\in G^+,\ a_+\wedge a_-=0,
\end{equation}
with  $a_+= a\vee 0$ and  $a_-=-a\vee 0$.
%

\begin{lemma}\label{lemma:MV_comp_intervals} Let $(G,u)$ be a lattice ordered abelian group with order unit an let
 $n\in \mathbb N$. The following are equivalent.
\begin{enumerate}
\item For any $a,b\in [0,nu]$ there is some sharp element $p\in [0,u]$ such that $\tilde U_p(a)\le \tilde U_p(b)$ and $\tilde U_{p'}(a)\ge
\tilde U_{p'}(b)$.
\item For any $a\in [-nu,nu]$ there is some sharp element $p\in [0,u]$ such that $\tilde U_p(a)\le 0$ and $\tilde U_{p'}(a)\ge 0$.
\end{enumerate}

\end{lemma}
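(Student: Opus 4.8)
**The plan is to prove the equivalence (i) $\Leftrightarrow$ (ii) by reducing each statement to the other via a simple change of variables, exploiting the linearity of the projections $\tilde U_p$ and the Jordan-type decomposition (2) available in the lattice-ordered group $G$.**

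For the direction (i) $\Rightarrow$ (ii), I would start from an element $a\in[-nu,nu]$ and set $b:=a_+$ and $c:=a_-$, using the decomposition $a=a_+-a_-$ from equation \eqref{eq:jordan_mv}. Since $a_+,a_-\in G^+$ and $a_+,a_-\le |a|\le nu$, both lie in $[0,nu]$. Applying (i) to the pair $(a_+,a_-)$ yields a sharp $p\in[0,u]$ with $\tilde U_p(a_+)\le\tilde U_p(a_-)$ and $\tilde U_{p'}(a_+)\ge\tilde U_{p'}(a_-)$. Because each $\tilde U_p$ is a group homomorphism (it is the projection of $G$ onto $G_p$ along $G_{p'}$), we have $\tilde U_p(a)=\tilde U_p(a_+)-\tilde U_p(a_-)\le 0$ and similarly $\tilde U_{p'}(a)\ge 0$, which is exactly (ii). I expect this direction to be routine once the additivity of $\tilde U_p$ is invoked.

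For the converse (ii) $\Rightarrow$ (i), I would take $a,b\in[0,nu]$ and consider the difference $a-b$. Here I must check the range: since $0\le a\le nu$ and $0\le b\le nu$, we get $-nu\le a-b\le nu$, so $a-b\in[-nu,nu]$ and (ii) applies. This produces a sharp $p\in[0,u]$ with $\tilde U_p(a-b)\le 0$ and $\tilde U_{p'}(a-b)\ge 0$. Again using additivity of $\tilde U_p$, these rearrange to $\tilde U_p(a)\le\tilde U_p(b)$ and $\tilde U_{p'}(a)\ge\tilde U_{p'}(b)$, giving (i).

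The step I expect to be the genuine (though mild) obstacle is verifying the interval bookkeeping in the converse direction, namely that $a-b$ indeed lands in $[-nu,nu]$ and not in a larger interval — this is why the statement fixes a single $n$ on both sides rather than allowing the bound to grow under subtraction. The key enabling facts are that $\tilde U_p$ is additive (a homomorphism onto the direct summand $G_p$) and that the lattice structure guarantees the decomposition \eqref{eq:jordan_mv} with $a_+,a_-\in[0,nu]$ whenever $a\in[-nu,nu]$; both were established in the paragraphs immediately preceding the lemma. No deeper structural input about general comparability is needed, since the lemma is a purely internal equivalence within a fixed lattice-ordered group.
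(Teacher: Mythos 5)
Your proof is correct and takes essentially the same route as the paper: for (i)$\Rightarrow$(ii) the Jordan decomposition $a=a_+-a_-$ with (i) applied to the pair $(a_+,a_-)$, and for (ii)$\Rightarrow$(i) the observation that $a-b\in[-nu,nu]$, with additivity of $\tilde U_p$ doing the rearranging in both directions. Your middle step is in fact marginally more direct than the paper's, which first exploits disjointness ($\tilde U_p(a_+)\le a_+\wedge a_-=0$, hence $\tilde U_p(a_+)=0$ and similarly $\tilde U_{p'}(a_-)=0$) before computing $\tilde U_p(a)$, whereas you simply subtract the two inequalities --- both arguments are valid.
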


\begin{proof} Assume (i) and let $a\in [-nu,nu]$. Let $a=a_+-a_-$ as in \eqref{eq:jordan_mv}, then clearly $a_+,a_-\in
[0,nu]$. By (i), there is
 some sharp element  $p\in [0,u]$ such that $\tilde U_p(a_+)\le \tilde U_p(a_-)$ and $\tilde U_{p'}(a_+)\ge \tilde U_{p'}(a_-)$. But then $\tilde U_p(a_+)=a_+\wedge
np\le a_+$ and also $\tilde U_p(a_+)\le \tilde U_p(a_-)\le a_-$, hence $\tilde U_p(a_+)=0$. Similarly, we obtain $\tilde U_{p'}(a_-)=0$. It follows
that
\begin{align*}
\tilde U_p(a)=\tilde U_p(a_+-a_-)=-\tilde U_p(a_-)\le 0,\qquad  \tilde U_{p'}(a)=\tilde U_{p'}(a_+-a_-)=\tilde U_{p'}(a_+)\ge 0.
\end{align*}
Conversely, if $a,b\in [0,nu]$, then $a-b\in [-nu,nu]$, so that (ii) clearly implies (i).

\end{proof}

\begin{theorem}\label{thm:cop_group} Let $(G,u)$ be an abelian interpolation  group with order unit $u$. Then $(G,u)$ satisfies general
comparability if and only if the effect algebra $[0,u]$ has the b-comparability property.

\end{theorem}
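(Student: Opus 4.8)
The plan is to treat the two implications separately, putting essentially all effort into the converse, which is the substantive one. For the forward direction, suppose $(G,u)$ satisfies general comparability. Under RDP the compression base on $E=[0,u]$ is the total base $(U_p)_{p\in E_S}$ (Example \ref{example:rdp}), all elements of $E$ are mutually Mackey compatible, and $C(p)=E$ for every $p\in E_S$. Hence every $a\in E$ is trivially a b-element (both sides of the defining equivalence of the b-property hold for all $p$), so $E$ has the b-property; and since all elements commute, condition (b) of b-comparability for a pair $e,f\in E$ is exactly the instance $x=e,\ y=f$ of general comparability, using $U_p=\tilde U_p|_E$ and $U_p(e)=e\wedge p$ (Lemma \ref{le:comE}). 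This is the easy direction already signalled before the statement.

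For the converse, assume $E=[0,u]$ has the b-comparability property. By Example \ref{example:rdp2}, $E$ is then an MV-effect algebra, so $G$ is lattice ordered and the Jordan decomposition \eqref{eq:jordan_mv} is available. I would first reduce general comparability to a sign-splitting statement. Since $\tilde U_p$ is an order-preserving group homomorphism, for $x,y\in G$ the conditions $\tilde U_p(x)\le\tilde U_p(y)$ and $\tilde U_{p'}(x)\ge\tilde U_{p'}(y)$ are equivalent to $\tilde U_p(a)\le 0$ and $\tilde U_{p'}(a)\ge 0$ with $a=x-y$. As every element of $G$ lies in some $[-nu,nu]$, general comparability holds if and only if for every $a\in G$ there is a sharp $p\in[0,u]$ with $\tilde U_p(a)\le 0\le\tilde U_{p'}(a)$; this is the global form of the equivalence recorded (for fixed $n$) in Lemma \ref{lemma:MV_comp_intervals}.

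The core step produces such a $p$ from b-comparability on $[0,u]$ alone. Write $a=a_+-a_-$ with $a_+\wedge a_-=0$ and truncate: $b_\pm:=a_\pm\wedge u\in[0,u]$, which remain disjoint, $b_+\wedge b_-=(a_+\wedge a_-)\wedge u=0$. Because $E=C(E_S)$ is a single C-block with $J_p|_E=U_p=\tilde U_p|_E$ (Theorem \ref{th:bcompar}) and all elements commute, b-comparability applied to the commuting pair $b_+,b_-$ yields a sharp $p$ with $U_p(b_+)\le U_p(b_-)$ and $U_{p'}(b_-)\le U_{p'}(b_+)$. From $U_p(b_+)=p\wedge b_+\le b_+$ together with $U_p(b_+)\le U_p(b_-)\le b_-$ we get $U_p(b_+)\le b_+\wedge b_-=0$, so $\tilde U_p(b_+)=0$, and symmetrically $\tilde U_{p'}(b_-)=0$. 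To return from the truncations to $a_\pm$, I would use the $\ell$-group inequality
\[
0\le c\le mu\ \Longrightarrow\ c\le m\,(c\wedge u),
\]
which is verified pointwise on the totally ordered quotients furnished by the subdirect representation of abelian $\ell$-groups as products of chains. Taking $c=a_+$ (with $a_+\le mu$ since $u$ is an order unit) gives $a_+\le m\,b_+$, whence $0\le\tilde U_p(a_+)\le m\,\tilde U_p(b_+)=0$ by positivity and additivity of $\tilde U_p$; symmetrically $\tilde U_{p'}(a_-)=0$. Therefore $\tilde U_p(a)=-\tilde U_p(a_-)\le 0$ and $\tilde U_{p'}(a)=\tilde U_{p'}(a_+)\ge 0$, which is exactly what the reduction demanded.

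The main obstacle is precisely this last passage: b-comparability only furnishes comparability inside $[0,u]$, whereas general comparability concerns all of $G$ (equivalently all the intervals $[0,nu]$). The truncation-plus-inequality argument is what transfers the $[0,u]$ information uniformly to arbitrary elements, the disjointness $a_+\wedge a_-=0$ being what lets a single projection split the sign of $a$. Alternatively one could bootstrap in $n$ through Lemma \ref{lemma:MV_comp_intervals}, but the truncation handles every $n$ at once and keeps the argument self-contained.
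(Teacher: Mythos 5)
Your proof is correct, but it takes a genuinely different route from the paper's. The paper proves the substantive direction (b-comparability of $[0,u]$ implies general comparability of $G$) by induction on $n$ through Lemma \ref{lemma:MV_comp_intervals}: given $a,b\in[0,(n+1)u]$, it shifts by $u$, applies the induction hypothesis to $a-u$ and $b-u$ to obtain projections $q,r$, decomposes everything through the four compressions $\tilde U_{r\wedge q},\tilde U_{r'\wedge q},\tilde U_{r\wedge q'},\tilde U_{r'\wedge q'}$, treats each block separately (the induction hypothesis on one block, direct estimates on two, the $n=1$ case on the fourth), and finally pastes the resulting projections together. You avoid the induction altogether: after the same reduction to sign-splitting of an arbitrary $a\in G$, and the same disjointness argument that the paper uses inside Lemma \ref{lemma:MV_comp_intervals} (applied here to the truncations $b_\pm=a_\pm\wedge u$, whose disjointness survives truncation since $b_+\wedge b_-=a_+\wedge a_-\wedge u=0$), you transfer the conclusion from $[0,u]$ to all of $G$ in one step via the $\ell$-group inequality $0\le c\le mu\ \Rightarrow\ c\le m(c\wedge u)$. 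That inequality is correct as you prove it; note that besides the subdirect-representation argument it also follows by a short induction on $m$ from the identity $c=(c\wedge u)+(c-u)_+$, which would keep the whole proof self-contained within elementary $\ell$-group algebra. What each approach buys: the paper's induction stays entirely inside the compression and comparability formalism it has already set up, at the price of fairly intricate four-way bookkeeping; your truncation argument is shorter and handles all of $G$ uniformly, at the price of importing one (standard, or easily replaced) piece of abelian $\ell$-group theory. Both proofs ultimately hinge on the same elementary observation, namely that a projection comparing two disjoint positive elements must annihilate one of them.
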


\begin{proof} Assume that $E=[0,u]$ has the b-comparability property, then $E$ is an MV-effect algebra and $G$ is
lattice ordered.  For $a,b\in G$,  there is some $m\in \mathbb N$
such that $a,b\in [-mu,mu]$. Then $a-b\in [-2mu,2mu]$, so it is clearly enough to prove that for each $n\in
\mathbb N$ and $a\in [-nu,nu]$, there is some sharp $p\in [0,u]$ such that $\tilde U_p(a)\le 0$ and $\tilde U_{p'}(a)\ge 0$.

We will proceed by induction on $n$. For $n=1$ the statement follows by Lemma \ref{lemma:MV_comp_intervals} and
the b-comparability property. So assume
that the statement holds for $n$, we will prove it for $n+1$. Using Lemma \ref{lemma:MV_comp_intervals}, we need to show
that for $a,b\in [0,(n+1)u]$ there is some sharp $p$ with $\tilde U_p(a)\le \tilde U_p(b)$ and $\tilde U_{p'}(a)\ge \tilde U_{p'}(b)$.
Notice that $a-u,b-u\in [-nu,nu]$, so that by the induction assumption, there are some sharp elements $q,r$ such that
\begin{align*}
\tilde U_q(a-u)\ge 0,\quad \tilde U_{q'}(a-u)\le 0,\qquad  
\tilde U_r(b-u)\ge 0,\quad \tilde U_{r'}(b-u)\le 0.
\end{align*}
This implies that
\[
\tilde U_q(a)\ge q,\ \tilde U_{q'}(a)\le q',\qquad \tilde U_r(b)\ge r,\ \tilde U_{r'}(b)\le r'.
\]
Consider the decompositions
\begin{align*}
a= \tilde U_{r\wedge q}(a)+\tilde U_{r'\wedge q}(a)+ \tilde U_{r\wedge q'}(a)+ \tilde
U_{r'\wedge q'}(a),\qquad
b= \tilde U_{r\wedge q}(b)+\tilde U_{r'\wedge q}(b)+ \tilde U_{r\wedge q'}(b)+ \tilde U_{r'\wedge q'}(b).
\end{align*}
We have $\tilde U_{r\wedge q}(a)-r\wedge q=\tilde U_{r\wedge q}(a-u)\in [0,nu]$ and similarly $\tilde U_{r\wedge q}(b)-r\wedge q\in [0,nu]$.
 By the induction hypothesis and Lemma \ref{lemma:MV_comp_intervals}, there is some sharp $s$ such that
\begin{align*}
\tilde U_s(\tilde U_{r\wedge q}(a)-r\wedge q)\le \tilde U_s(\tilde U_{r\wedge
q}(b)-r\wedge q), \qquad
\tilde U_{s'}(\tilde U_{r\wedge q}(a)-r\wedge q)\ge \tilde U_{s'}(\tilde U_{r\wedge q}(b)-r\wedge q).
\end{align*}
It follows that $\tilde U_{s\wedge r\wedge q}(a)\le \tilde U_{s\wedge r\wedge q}(b)$ and $\tilde U_{s'\wedge r\wedge q}(a)\ge \tilde U_{s'\wedge
r\wedge q}(b)$. Further, we have
\[
\tilde U_{r'\wedge q}(a)=\tilde U_{r'}(\tilde U_q(a))\ge r'\wedge q \ge \tilde U_q(r')\ge \tilde U_q(\tilde U_{r'}(b))=\tilde U_{r'\wedge q}(b)
\]
and similarly
\[
\tilde U_{r\wedge q'}(a)\le \tilde U_{r\wedge q'}(b).
\]
Finally, we have $\tilde U_{r'\wedge q'}(a)\le \tilde U_{r'}(q')=r'\wedge q'\le u$ and $\tilde U_{r'\wedge q'}(b)\le r'\wedge q'\le u$.
 Using the general comparability property in $[0,u]$, there is some sharp element $t$ such that
\[
\tilde U_{t\wedge r'\wedge q'}(a)\le \tilde U_{t\wedge r'\wedge q'}(b), \ \  \tilde U_{t'\wedge r'\wedge q'}(a)\ge \tilde U_{t'\wedge r'\wedge
q'}(b).
\]
Now put
\begin{align*}
p:=s\wedge r\wedge q + r\wedge q'+t\wedge r'\wedge q',\qquad p':=s'\wedge r\wedge q+ r'\wedge q + t'\wedge r'\wedge q'.
\end{align*}
Then we have
\begin{align*}
\tilde U_p(a)=\tilde U_{s\wedge r\wedge q}(a)+\tilde U_{r\wedge q'}(a)+ \tilde U_{t\wedge r'\wedge q'}(a)
\le \tilde U_{s\wedge r\wedge q}(b)+\tilde U_{r\wedge q'}(b)+ \tilde U_{t\wedge r'\wedge q'}(b)= \tilde U_p(b)
\end{align*}
and similarly also $\tilde U_{p'}(a)\ge \tilde U_{p'}(b)$. This proves the 'if' part. The converse is obvious.

\end{proof}

Recall that if the MV-effect algebra $E$ is archimedean, then it is isomorphic to a subalgebra of continuous functions
on a compact Hausdorff space $X$, \cite[Thm. 7.1.3]{DvPu}. If $E$ has b-comparability, then we have seen that the universal group has general
comparability and by Lemma \ref{lemma:app_arch}, the group $G$ is archimedean as well. Using the results of \cite[Chap.
8]{Good}, we obtain more information on the representing space $X$.

\begin{corollary}\label{coro:MV_comp} Let $E$ be an archimedean MV-effect algebra with the  b-comparability
property. Then $E$ is isomorphic to a subalgebra of continuous functions $X\to [0,1]$ for a totally disconnected compact
Hausdorff space $X$. The space $X$ is the Stone space of the Boolean subalgebra $P=E_S$ of sharp elements in $E$.

\end{corollary}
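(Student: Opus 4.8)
The plan is to realize the representation of \cite[Thm.~7.1.3]{DvPu} on the maximal ideal space of $E$ and then use b-comparability (in its group form) to identify that space with the Stone space of $E_S$. First I would assemble the ingredients. By Theorem \ref{th:proper}, b-comparability gives $P=E_S$, and since $E$ is an MV-effect algebra the sharp elements form a Boolean subalgebra $B:=E_S$; write $X_0$ for its Stone space, a totally disconnected compact Hausdorff space, and identify $B$ with the clopen subsets of $X_0$ via $p\mapsto K_p$ (so $\hat p=\chi_{K_p}$). Since $E$ is archimedean MV it is semisimple, so by \cite[Thm.~7.1.3]{DvPu} it embeds as a separating subalgebra of $C(X,[0,1])$, where $X$ is the maximal ideal space of $E$ and $\hat a(M)=a/M\in[0,1]$. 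By Theorem \ref{thm:cop_group} the universal group $(G,u)$ satisfies general comparability, and by Lemma \ref{lemma:app_arch} it is archimedean; being the $\ell$-group of an MV-effect algebra it is lattice ordered with order unit, and its archimedean representation can be realized on the same $X$ as an order-embedding $G\hookrightarrow C(X)$ with $u\mapsto\mathbf 1$ extending that of $E=[0,u]$.

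The core is to produce a homeomorphism $X\cong X_0$. I would define $r\colon X\to X_0$ by restricting each maximal ideal to $B$: since a sharp element can only take the values $0,1$ in the simple quotient, $r(M):=\{p\in B:\hat p(M)=1\}$ is an ultrafilter on $B$, i.e.\ a point of $X_0$, and $r$ is continuous. Surjectivity is routine Boolean bookkeeping: given an ultrafilter $\mathcal U$, the ideal of $E$ generated by $\{p\in B:p\notin\mathcal U\}$ is proper (a finite join of its generators has complement $p_1'\wedge\cdots\wedge p_m'\in\mathcal U$, hence nonzero, so the join is $\ne u$), and any maximal ideal above it restricts to $\mathcal U$.

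Injectivity is the step that really uses comparability, and I expect it to be the crux. Suppose $M_1\neq M_2$ restrict to the same ultrafilter. Since the representation separates points there is $a\in E$ with $\hat a(M_1)<\hat a(M_2)$, say; choosing a rational $k/n$ strictly between and setting $g:=na-ku\in G$ gives $\hat g(M_1)<0<\hat g(M_2)$. Here I would first record that each $\tilde U_p$ acts as multiplication by the characteristic function $\hat p=\chi_{K_p}$, i.e.
\[
\widehat{\tilde U_p(g)}(M)=\hat p(M)\,\hat g(M)\qquad(M\in X),
\]
which holds because the summands of $G=G_p\oplus G_{p'}$ are carried to the functions supported on the complementary clopen sets $K_p$ and $X\setminus K_p$. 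General comparability of $(G,u)$ then furnishes a sharp $p$ with $\tilde U_p(g)\le 0$ and $\tilde U_{p'}(g)\ge 0$, that is $\hat g\le 0$ on $K_p$ and $\hat g\ge 0$ off $K_p$; since $\hat g(M_1)<0$ forces $M_1\in K_p$ while $\hat g(M_2)>0$ forces $M_2\notin K_p$, the projection $p$ lies in the ultrafilter of $M_1$ but not that of $M_2$, contradicting $r(M_1)=r(M_2)$. Hence $r$ is a continuous bijection of compact Hausdorff spaces, so a homeomorphism, and $X\cong X_0$ is totally disconnected and is the Stone space of $P=E_S$, as claimed. The two points I would watch most carefully are the verification of the displayed identity (getting $\tilde U_p$ to equal multiplication by $\hat p$ from the direct-sum decomposition rather than from the partial formula $\tilde U_p(x)=x\wedge np$) and the comparability step giving injectivity; this is precisely the function-theoretic content of the results of \cite[Chap.~8]{Good} invoked before the statement.
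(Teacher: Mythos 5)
Your proposal is correct and follows the same route as the paper: Theorem \ref{th:proper} to get $P=E_S$, the representation theorem \cite[Thm.~7.1.3]{DvPu} for the embedding into $C(X,[0,1])$, and Theorem \ref{thm:cop_group} together with Lemma \ref{lemma:app_arch} to get an archimedean universal group with general comparability. The only difference is that the paper simply cites \cite[Chap.~8]{Good} for the identification of $X$ with the Stone space of $E_S$, whereas you reconstruct that identification explicitly (the restriction map $r$, the identity $\widehat{\tilde U_p(g)}=\hat p\,\hat g$ from the direct sum $G=G_p\oplus G_{p'}$, and injectivity via general comparability), and your reconstruction is sound.
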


\subsection{The positive part and splitting}

Assume that $E$ has the b-comparability property and let $a,b\in E$, $aCb$. Then there is some C-block $C$ in $E$ such that $a,b\in C$ and therefore also $P_\le(a,b)\subseteq C$.
 By Theorem \ref{thm:cblock}, $C$ is an MV-effect algebra with b-comparability property and hence
its universal group $G$  has general  comparability, by Theorem
 \ref{thm:cop_group}. The group $G$ contains the element $b-a$ and any $p\in P_\le(a,b)$ is also
 contained in $P^G_\pm(b-a)$. By Lemma \ref{lemma:orthog_decomp}, for any $p\in P_\le(a,b)$, the positive part of $b-a$ has the form
 \[
 (b-a)_+ =\tilde U_p(b)-\tilde U_p(a)=J_p(b)\ominus J_p(a)\in C,
 \]
 so that  $(b-a)_+$ is well defined as an element in $E$.

\begin{lemma}\label{lemma:bminusa_props} Assume that $E$ has the b-comparability property and let $a,b\in E$, $aCb$.

\begin{enumerate}
\item The element $(b-a)_+$ is  contained in any C-block containing $a$ and $b$.
\item For $q\in P_\le(a,b)$,
$(a-b)_+=J_{q'}(a)\ominus J_{q'}(b)$.
\item Let $r\in PC(a,b)$ be such that $J_r(a)\le J_r(b)$ and $J_{r'}(b)\le
J_{r'}(a)$, then $J_r(b)\ominus J_r(a)=(b-a)_+$.

\item If $(b-a)_+\dg$ exists, then  $(b-a)_+\dg\in P_\le(a,b)$ and  it is the smallest projection such that $r\in PC(a,b)$
and $J_r(a)\le J_r(b)$, $J_{r'}(b)\le J_{r'}(a)$.

\end{enumerate}

\end{lemma}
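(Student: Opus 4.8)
My plan is to reduce every statement to a computation inside a single C-block, where by Theorems \ref{th:bcompar} and \ref{thm:cop_group} I may work in the universal group of an MV-effect algebra with general comparability and invoke the uniqueness of the Jordan decomposition (Lemma \ref{lemma:orthog_decomp}). The combinatorial fact I will use throughout is that, by maximality of blocks (Theorem \ref{thm:cblock}), every block $B'$ with $a,b\in C(B')$ satisfies $B'\subseteq PC(a,b)$, whence every $s\in P(a,b)$ is compatible with all of $B'$ and so $P(a,b)\subseteq B'$; in particular each $p\in P_\le(a,b)$ lies in $C(B')$. Part (ii) needs none of this: directly from the definitions, $q\in P_\le(a,b)$ is equivalent to $q'\in P_\le(b,a)$ (the two inequalities exchange, and $P(a,b)=P(b,a)$ is Boolean so $q'\in P(b,a)$), so applying the established formula for the positive part to the pair $(b,a)$ with the projection $q'$ yields $(a-b)_+=J_{q'}(a)\ominus J_{q'}(b)$.

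For (i), fix $p\in P_\le(a,b)$ so that $(b-a)_+=J_p(b)\ominus J_p(a)$, and let $C'=C(B')$ be any C-block with $a,b\in C'$. By the block observation $p\in B'$, so $J_p$ restricts to $U_p$ on $C'$ by Theorem \ref{th:bcompar}(2); thus $J_p(a),J_p(b)\in C'$, and since $C'$ is a sub-effect algebra their difference $(b-a)_+$ lies in $C'$. For (iii), the elements $a,b,r$ pairwise commute (as $r\in PC(a,b)$ and $aCb$), so by Theorem \ref{thm:cblock} they sit in a common C-block $C'=C(B')$ with $r\in B'$; any $p\in P_\le(a,b)$ also lies in $B'$, and $(b-a)_+\in C'$ by (i). In the universal group $G'$ of $C'$, the hypotheses $J_r(a)\le J_r(b)$ and $J_{r'}(b)\le J_{r'}(a)$ say precisely that $r\in P^{G'}_\pm(b-a)$, and likewise $p\in P^{G'}_\pm(b-a)$; since by Lemma \ref{lemma:orthog_decomp} the positive part is independent of the splitting projection, $J_r(b)\ominus J_r(a)=\tilde U_r(b-a)=(b-a)_+$.

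Part (iv) is where the real work lies. Write $c=(b-a)_+$ and fix a C-block $C=C(B)\ni a,b$, so $c\in C$ by (i). First I would show $c\dg\in B$: by Proposition \ref{prop:pc_commutant} $c\dg\in P(c)\subseteq CPC(c)$, so $c\dg$ commutes with all of $PC(c)$; as $c\in C(B)$ gives $B\subseteq PC(c)$, maximality of $B$ forces $c\dg\in B$. Then $c\dg\leftrightarrow a,b$, and $c\dg\le p$ for each $p\in P_\le(a,b)$ because $c\le J_p(b)\le p$ and $c\dg$ is the least projection above $c$. The splitting inequalities at $c\dg$ then follow inside $G'$: since $c\dg\le p$ are compatible projections the composition rule gives $J_{c\dg}\circ J_p=J_{c\dg\wedge p}=J_{c\dg}$, whence $J_{c\dg}(a)=J_{c\dg}(J_p(a))\le J_{c\dg}(J_p(b))=J_{c\dg}(b)$ by monotonicity; and from $c\le c\dg$ (so $J_{(c\dg)'}(c)=0$) together with $p'\le(c\dg)'$ one gets $\tilde U_{(c\dg)'}(b-a)=\tilde U_{(c\dg)'}(c)+\tilde U_{p'}(b-a)=\tilde U_{p'}(b-a)\le 0$, i.e. $J_{(c\dg)'}(b)\le J_{(c\dg)'}(a)$. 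To promote $c\dg\in B$ to $c\dg\in P(a,b)$, I must check $c\dg\leftrightarrow q$ for every $q\in PC(a,b)$: such a $q$ commutes with $a,b$, so $\{a,b,q\}$ lies in a C-block as in (iii), forcing $c\in C(q)$, i.e. $q\in PC(c)$, and then $c\dg\in CPC(c)$ gives $c\dg\leftrightarrow q$. Hence $c\dg\in P_\le(a,b)$. Minimality is immediate from (iii): any admissible $r$ satisfies $J_r(b)\ominus J_r(a)=c$, so $c\le J_r(b)\le r$ and therefore $c\dg\le r$.

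The main obstacle is conceptual rather than computational: the element $(b-a)_+$ is introduced relative to one fixed C-block and one splitting projection, and the heart of the lemma is to show that it is canonical. All four parts hinge on transporting this element into whatever C-block is convenient and recognising it there as the block-independent positive part of $b-a$ via Lemma \ref{lemma:orthog_decomp}; the most delicate bookkeeping is the passage $c\dg\in B\rightsquigarrow c\dg\in P(a,b)$ in (iv), which requires the common-block argument to propagate commutation from the pair $a,b$ to the derived element $c$.
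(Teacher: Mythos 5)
Your proof is correct and follows essentially the same route as the paper's (much terser) argument: transport everything into a C-block via the maximality of blocks, identify $(b-a)_+$ with the block-independent positive part in the universal group via the uniqueness of the orthogonal decomposition (Lemma \ref{lemma:orthog_decomp}), and use $\smash{(b-a)_+\dg\in P((b-a)_+)}$ together with block maximality for (iv). The only (minor) deviation is that in (iv) you derive the two splitting inequalities at $c\dg$ by a direct computation with the composition rule $J_{c\dg}\circ J_p=J_{c\dg}$, whereas the paper instead cites Lemma \ref{lemma:orthog_decomp}(ii) (the Rickart-mapping statement); both are sound, and yours is slightly more self-contained.
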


\begin{proof}
The statement (i) is obvious from the above considerations, (ii) follows easily from the fact that $q'\in P_\le(b,a)$ if $q\in P_\le(a,b)$.
The statement (iii) follows from the fact that there is some C-block containing $a,b$ and $r$ and the properties of the orthogonal decomposition of $a-b$ in its universal group (Lemma \ref{lemma:orthog_decomp}).
 For (iv), note that
since the projection cover $p:=(b-a)_+\dg\in P((b-a)_+)$, (i) implies that $p$ is contained in any C-block containing $a,b$. This implies $p\in P(a,b)$ and using Lemma \ref{lemma:orthog_decomp} (ii) in the universal group of any such C-block, $p\in P_\le(a,b)$. The last part of (iv) follows again from Lemma \ref{lemma:orthog_decomp} (ii) using the fact that there is a C-block containing $a,b$ and $r$.

\end{proof}

For any element $a\in E$, we clearly have $aCa'$. Let $q\in P_\le(a,a')$, then $q\in P(a)$ and we have
\begin{align}
(a'-a)_+&=J_q(a')\ominus J_q(a)=q\ominus 2J_q(a)\label{eq:split1}\\
(a-a')_+&=J_{q'}(a)\ominus J_{q'}(a')=q'\ominus 2J_{q'}(a').\label{eq:split2}
\end{align}
In other words, we have a decomposition $a=J_q(a)\oplus J_{q'}(a)=a\wedge q\oplus
a\wedge q'$ where $2(a\wedge q)$ and $2(a'\wedge q')$ exist.  This may be interpreted as  ''$a\wedge q\le 1/2$''
although of course the constant $1/2$ may not be defined in $E$. Similarly, ''$a\wedge
q'\ge 1/2$''. We will call this decomposition the \emph{splitting} of $a$ and $q$ a
\emph{splitting projection} for $a$. If $E$ is archimedean then $C$ is archimedean as well and hence  is isomorphic to a subalgebra of continuous functions on a
compact Hausdorff space (Corollary \ref{coro:MV_comp}). In this case the splitting of $a$ has a clear interpretation.

\begin{lemma}\label{lemma:splitting} Let $a\in E$ and let $C$ be any C-block containing
$a$, with universal group $G$.
\begin{enumerate}
\item[(i)] For any  splitting projection $q$ for $a$, all the elements
$(a-a')_+$, $(a'-a)_+$, $q$, $2J_q(a)$, $2J_{q'}(a')$ are contained in  $C$
 and $(a-a')_+=(2a-1)_+$, the latter element computed in $G$.
 \item[(ii)] If $q$ is a splitting projection for $a$, then $q'$ is a splitting projection
 for $a'$.
\item[(ii)] If $(a-a')_+\dg$ exists, then $((a-a')_+\dg)'$ is the largest splitting projection for $a$.
\end{enumerate}

\end{lemma}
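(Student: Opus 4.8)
The plan is to handle the three items in turn, reducing each to the already-established facts about $(b-a)_+$ in Lemma~\ref{lemma:bminusa_props} together with the identification of a C-block $C$ with the unit interval of its universal group $G$. Throughout I will use that $C=C(B)$ for a block $B$ of $P$, that $C$ is a sub-effect algebra (hence closed under orthosupplementation and under $\ominus$, since $x\ominus y=(x'\oplus y)'$), and that $a'\in C$ whenever $a\in C$; recall also that $aCa'$ holds for every $a$, so Lemma~\ref{lemma:bminusa_props} applies to the pair $a,a'$.

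For (i), the elements $(a-a')_+$ and $(a'-a)_+$ lie in $C$ directly by Lemma~\ref{lemma:bminusa_props}(i) applied to the commuting pair $a,a'$ (both are in $C$). To place a general splitting projection $q$ in $C$, I would unwind the bicommutant: from $q\in P_\le(a,a')\subseteq P(a,a')=PC(PC(\{a,a'\})\cup\{a,a'\})$ it follows that $q$ is Mackey compatible with every projection in $PC(\{a,a'\})$; since $a,a'\in C(B)$ forces $B\subseteq PC(\{a,a'\})$, we obtain $q\leftrightarrow p$ for all $p\in B$, that is, $q\in C(B)=C$. Then $2J_q(a)$ and $2J_{q'}(a')$ lie in $C$ because \eqref{eq:split1}--\eqref{eq:split2} give $2J_q(a)=q\ominus(a'-a)_+$ and $2J_{q'}(a')=q'\ominus(a-a')_+$, differences of elements of $C$. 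Finally, the identity $(a-a')_+=(2a-1)_+$ is immediate once we pass to $G$: by the computation preceding Lemma~\ref{lemma:bminusa_props}, $(a-a')_+$ equals the positive part of $a-a'$ in $G$, and there $a'=1-a$, so $a-a'=2a-1$ and the two positive parts coincide.

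For (ii), I would simply read off the symmetric definition of $P_\le$. Since $P(a,a')$ is a Boolean subalgebra of $P$ (its elements are pairwise Mackey compatible and $P$ is a regular OMP), $q\in P(a,a')$ gives $q'\in P(a,a')=P(a',a)$. The two inequalities defining $q\in P_\le(a,a')$, namely $J_q(a)\le J_q(a')$ and $J_{q'}(a')\le J_{q'}(a)$, are precisely the two inequalities required for $q'\in P_\le(a',a)$ with the roles of $q$ and $q'$ interchanged, so $q'$ is a splitting projection for $a'$. For (iii), write $p:=(a-a')_+\dg$. Applying Lemma~\ref{lemma:bminusa_props}(iv) to the pair $(a',a)$ (so that $(b-a)_+=(a-a')_+$) shows that $p\in P_\le(a',a)$ and, crucially, that $p$ is the \emph{smallest} projection $r\in PC(a',a)$ satisfying $J_r(a')\le J_r(a)$ and $J_{r'}(a)\le J_{r'}(a')$. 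By (ii), with the roles of $a$ and $a'$ exchanged, $p\in P_\le(a',a)$ yields $p'\in P_\le(a,a')$, so $p'$ is a splitting projection for $a$. For maximality, let $q\in P_\le(a,a')$ be arbitrary; by (ii) again $q'\in P_\le(a',a)$, so $q'$ is one of the competitors $r$ in the minimality characterization of $p$, whence $p\le q'$, i.e.\ $q\le p'$. Thus $p'$ dominates every splitting projection for $a$ while itself being one, so it is the largest.

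The step I expect to require the most care is the membership $q\in C$ in (i): it is the only place where one must descend into the definitions of the commutant and bicommutant rather than quote a packaged result, and it rests on the easy but essential observation that $B\subseteq PC(\{a,a'\})$. Everything else is bookkeeping with Lemma~\ref{lemma:bminusa_props}, the closure of $C$ under $\ominus$, and the orthocomplement symmetry built into the definition of $P_\le$.
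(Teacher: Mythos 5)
Your proof is correct and takes essentially the same route as the paper, whose entire proof of this lemma is the single line ``Straightforward from Lemma~\ref{lemma:bminusa_props}'': your argument is exactly a careful unwinding of that reduction, combined with the splitting identities \eqref{eq:split1}--\eqref{eq:split2} and the fact that $P(a,a')$ is a Boolean subalgebra whose elements commute with every projection in the block $B$, hence lie in $C(B)$. In particular, your bicommutant computation establishing $q\in C$ is the same argument the paper uses implicitly elsewhere (via $P_\le(a,a')\subseteq P(a)\subseteq C$, as in the proofs of Lemmas~\ref{lemma:uw} and \ref{lemma:cw}).
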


\begin{proof} Straightforward from Lemma \ref{lemma:bminusa_props}.

\end{proof}

Let $q\in P$, then $J_q(E)=[0,q]$ is an effect algebra, with addition from $E$ and with unit $q$. Note that the orthosupplement of an element $a$ in $[0,q]$ is
$a'_q:=q\ominus a=q\wedge a'$.

As proved in \cite[Thm. 4.4]{Gucomprba},  for $p\le q$, the restriction of $J_p$ is a compression on $[0,q]$ and $(J_p)_{P\cap [0,q]}$ is a compression base in $[0,q]$. Below we will always assume this inherited compression base in $[0,q]$.
  For $r\in P\cap [0,q]$, we will denote by $C_q(r)$ the commutant of $r$ and by $P_q(a)$ the bicommutant of $a$ in $[0,q]$.

\begin{lemma}\label{lemma:subinterval} Let $q\in P$ and  $a,r\in [0,q]\cap  P$.
\begin{enumerate}
\item[(i)] $C_q(r)=C(r)\cap [0,q]$ and $\{p\wedge q:\ p\in P(a)\}\subseteq P_q(a)$.
\item[(ii)] If $p\leftrightarrow q$, then $a\in C_q(p\wedge q)$ if and only if $a\in C(p)$.

\item[(iii)] If $E$ has b-comparability then $[0,q]$ has b-comparability.
\item[(iv)] Let   $[(a-a'_q)_+]_q$ and $[(a'_q-a)_+]_q$ be the splittings in $[0,q]$, then
\begin{align*}
[(a-a'_q)_+]_q=(a-a')_+\wedge q,\qquad [(a'_q-a)_+]_q=(a'-a)_+\wedge q.
\end{align*}
\item[(v)] Let $[a\dg]_q$ be the projection cover in $[0,q]$, then $[a\dg]_q=a\dg$. For
$b\in C(q)$, $(b\wedge q)\dg=b\dg\wedge q$.
\end{enumerate}

\end{lemma}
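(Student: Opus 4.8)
The plan is to reduce every assertion to the calculus of the maps $J_p$, exploiting two facts repeatedly: the inherited compression base on $[0,q]$ consists of the \emph{same} maps $J_r$ (merely restricted) for $r\in P\cap[0,q]$, with focus $J_r(q)=r\wedge q=r$ (using $r\le q\Rightarrow r\leftrightarrow q$); and that whenever $p\leftrightarrow q$ one has $J_p\circ J_q=J_{p\wedge q}$ (the corollary to Proposition \ref{prop:CB_orig}), while $J_q(a)=a$ for $a\le q$. For (i), Lemma \ref{le:comE} applied in $E$ and in $[0,q]$ shows that both $a\in C(r)$ and $a\in C_q(r)$ are equivalent to $J_r(a)\le a$ for the one and the same map $J_r$; since the orders agree on $[0,q]$, this gives $C_q(r)=C(r)\cap[0,q]$. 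For $\{p\wedge q:p\in P(a)\}\subseteq P_q(a)$, note first that each $p\in P(a)$ satisfies $p\leftrightarrow q$ (as $q\in PC(a)$ and $P(a)\subseteq CPC(a)$), so $p\wedge q\in P$ and $p\mapsto p\wedge q$ is a Boolean homomorphism; then $J_{p\wedge q}(a)=J_p(J_q(a))=J_p(a)\le a$ gives $a\in C(p\wedge q)$, and $p\wedge q$ commutes in $[0,q]$ with every $s\in PC_q(a)=PC(a)\cap[0,q]$ by regularity of $P$ applied to the pairwise compatible triple $p,q,s$. Part (ii) is then immediate: for $p\leftrightarrow q$ and $a\le q$ we have $J_{p\wedge q}(a)=J_p(J_q(a))=J_p(a)$, so $J_p(a)\le a\iff J_{p\wedge q}(a)\le a$, which by (i) and Lemma \ref{le:comE} is exactly $a\in C(p)\iff a\in C_q(p\wedge q)$.

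The core is (iii). For the b-property of $[0,q]$ I would take $B_q(a):=\{p\wedge q:p\in P(a)\}$, a Boolean subalgebra by the above, and verify the defining equivalence $a\in C_q(s)\iff B_q(a)\subseteq C_q(s)$ for $s\in P\cap[0,q]$ using (i), (ii) and regularity in the same manner. The step I expect to be the \emph{main obstacle} is to show that commutation in $[0,q]$ lifts to commutation in $E$, namely $eC_qf\implies eCf$. Here the mechanism is to choose a block $B$ of $P$ with $q\in B$ and $P_q(e)\cup P_q(f)\subseteq B$ (possible since these projections are pairwise compatible and all lie below $q$), set $B_0:=B\cap[0,q]$, and establish the identity $C_q(B_0)=C(B)\cap[0,q]$. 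The nontrivial inclusion $C_q(B_0)\subseteq C(B)$ once more rests on $J_p(x)=J_p(J_q(x))=J_{p\wedge q}(x)\le x$ for $x\le q$ and $p\in B$, together with $p\wedge q\in B_0$. Since $P_q(e),P_q(f)\subseteq B_0$, the b-property places $e,f$ in $C_q(B_0)=C(B)\cap[0,q]$, so both lie in the single C-block $C(B)$; by Theorem \ref{thm:cblock} the elements of a C-block pairwise commute, whence $eCf$ in $E$.

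Granting this lifting, (iii)(b) follows at once. Since $E$ has b-comparability and $eCf$, there is $\pi\in P(e,f)$ with $J_\pi(e)\le J_\pi(f)$ and $J_{\pi'}(f)\le J_{\pi'}(e)$. Put $p:=\pi\wedge q$. The pair-version of (i) (identical proof with $\{e,f\}$ in place of $a$) gives $p\in P_q(e,f)$; moreover $\pi\leftrightarrow q$ yields $J_p(e)=J_\pi(J_q(e))=J_\pi(e)\le J_\pi(f)=J_p(f)$, and since $q\ominus p=\pi'\wedge q$ we also get $J_{q\ominus p}(f)=J_{\pi'}(f)\le J_{\pi'}(e)=J_{q\ominus p}(e)$. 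Thus $p\in P^q_\le(e,f)$, which is therefore nonempty, and $[0,q]$ has the b-comparability property.

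Finally, (iv) and (v) are computations inside a single C-block. For (iv) I would fix a C-block $C$ of $E$ containing $a$ and $q$, with universal $\ell$-group $G=G_q\oplus G_{q'}$; as $a\in G_q$, the decomposition $2a-u=(2a-q)+(-q')$ gives $(2a-u)_+=(2a-q)_+\le q$, so combining Lemma \ref{lemma:splitting}(i) (which identifies $(a-a')_+$ with $(2a-1)_+$) with the parallel computation of the splitting of $a$ in $[0,q]$ inside $G_q$ yields $[(a-a'_q)_+]_q=(a-a')_+\wedge q$, and symmetrically for the other identity. For (v), $q$ is a projection above $a$ and $a\dg$ is the least such, so $a\dg\le q$, whence $a\dg\in P\cap[0,q]$ is also the least projection of $[0,q]$ above $a$, giving $[a\dg]_q=a\dg$. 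For $b\in C(q)$ one has $b\dg\leftrightarrow q$ by Proposition \ref{prop:pc_commutant}, and $(b\wedge q)\dg=b\dg\wedge q$ follows from the two inequalities: $(b\wedge q)\dg\le b\dg\wedge q$ by monotonicity of the projection cover, and $b\dg\wedge q\le(b\wedge q)\dg$ from $b\le(b\wedge q)\dg\oplus(b\wedge q')\dg$ upon meeting with $q$ in the Boolean algebra generated by the mutually compatible projections involved.
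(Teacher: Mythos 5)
Your proof is correct. On (i), (ii) and (v) it essentially coincides with the paper's: both arguments turn on the identities $J_{p\wedge q}=J_p\circ J_q$ for $p\leftrightarrow q$ and $J_q(x)=x$ for $x\le q$; you test commutation by the criterion $J_r(x)\le x$ of Lemma \ref{le:comE}, where the paper instead matches the two decompositions $J_r(x)\oplus J_{r'\wedge q}(x)=J_r(x)\oplus J_{r'}(x)$, and in (v) you obtain $b\dg\wedge q\le(b\wedge q)\dg$ by meeting $b\dg\le(b\wedge q)\dg\oplus(b\wedge q')\dg$ with $q$, where the paper squeezes $b\le(b\wedge q)\dg\oplus(b\wedge q')\dg\le b\dg\wedge q\oplus b\dg\wedge q'=b\dg$ and cancels --- same substance. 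The genuine divergence is in (iii) and (iv). For the lifting $eC_qf\Rightarrow eCf$, the paper works projection-by-projection: for $p\in P(e)$, $s\in P(f)$ it gets $p\leftrightarrow s\wedge q$ from (ii), and gets $p\leftrightarrow s\wedge q'$ for free because $s\wedge q'\le q'\le e'$ puts $s\wedge q'$ in $PC(e)$, whence $p\leftrightarrow s$ by regularity of $P$; you instead extend $P_q(e)\cup P_q(f)\cup\{q\}$ to a block $B$, prove $C_q(B_0)=C(B)\cap[0,q]$, and conclude via Theorem \ref{thm:cblock}. Both are sound: the paper's route is more elementary (no block extension by Zorn, no appeal to Theorem \ref{thm:cblock}), while yours isolates a reusable structural fact, namely that C-blocks of $[0,q]$ arise as traces of C-blocks of $E$. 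In (iv) the paper stays at the level of compressions, checking $p\wedge q\in P_{q,\le}(a'_q,a)$ for $p\in P_\le(a',a)$ and computing $[(a-a'_q)_+]_q=J_q((a-a')_+)=(a-a')_+\wedge q$; you pass to the universal group $G=G_q\oplus G_{q'}$ of a C-block containing $a$ and $q$ and compute positive parts componentwise. That is transparent, but it silently uses that the universal group of $C\cap[0,q]$, viewed as a C-block of $[0,q]$, is the ideal $G_q$, and that $B\cap[0,q]$ is a block of $P\cap[0,q]$; both claims are true and easy (the first because $G_q$ is the $\ell$-subgroup generated by $[0,q]$, the second by regularity), but each deserves a sentence --- they sit at the same level of detail as the paper's ``it is easily checked''.
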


\begin{proof} For (i), let $b=J_q(b)$, then $J_r(b)\oplus J_{r'_q}(b)=J_r(b)\oplus J_{r'\wedge q}(b)=J_r(b)\oplus J_{r'}(b)$, this implies the first statement in (i) and also that
$PC_q(a):=\{s\in P\cap [0,q], a\in C_q(s)\}=PC(a)\cap [0,q]$.
Let $p\in P(a)$, then $p$ commutes with $q$ and $p\wedge q$ exists. For the second statement of (i), we have to prove that $p\wedge q\in C_q(PC(a)\cap [0,q])$ which amounts to showing that $p\wedge q\in C(PC(a)\cap [0,q])$. Since clearly $p,q\in C(PC(a)\cap [0,q])$, we have the same for $p\wedge q$.

For (ii), it is enough to note that if $p\leftrightarrow q$, then $(p\wedge q)'_q=q\wedge(p\wedge q)'=q\wedge p'$ and hence
\[
J_{p\wedge q}(a)\oplus J_{(p\wedge q)'_q}(a)=J_p(a)\oplus J_{p'}(a).
\]

For (iii), let us prove the b-property. By (i), it is enough to show that  $P_q(a)\subseteq  C(r)$ if and only if $a\in C(r)$. So assume that $P_q(a)\subseteq C(r)$, then for any $p\in P(a)$,we have $p\wedge q\in C(r)$ which by (ii) (applied to $a=r$) implies that $p\in C(r)$. By the b-property in $E$, $a\in C(r)$. The converse is clear by definition of $P_q(a)$, this shows the b-property in  $[0,q]$.  If $a$ commutes with $b$ in $[0,q]$ then by (i) we must have
$P(a)\wedge q\leftrightarrow P(b)\wedge q$. Using (ii), we obtain that for all $p\in P(a)$, $s\in P(b)$ we must have $p\in C(s\wedge q)$, and since  $a\in C(s\wedge q')$, $p$ commutes
with $s\wedge q'$ as well. It follows that $P(a)\leftrightarrow P(b)$ and
 $aCb$ in $E$. Hence  there is some $t\in P_\le(a,b)$, it is easily seen that we may use the projection
$t\wedge q$ to obtain b-comparability in $[0,q]$.

For (iv), it is easily checked that  for any $p\in
P_\le(a',a)$ we have $p\wedge q\in P_{q,\le} (a'_q,a)$ so that
\begin{align*}
[(a-a'_q)_+]_q&=J_{p\wedge q}(a)\ominus J_{p\wedge q}(a'_q)=J_q(J_p(a)\ominus
J_p(1-a))=J_q((a-a')_+)=(a-a')_+\wedge q.
\end{align*}
The proof for $[(a'_q-a)_+]_q$ is the same.

The projection cover in  $[0,q]$ is clearly the same as in $E$, we only have to prove the
second statement of (v).
It is clear that $(b\wedge q)\dg\le b\dg\wedge q$.
 Conversely, since  $b\dg\in C(q)$, we have
 \[
b=b\wedge q\oplus b\wedge q'\le (b\wedge q)\dg\oplus (b\wedge q')\dg\le b\dg \wedge
q\oplus b\dg \wedge q'=b\dg.
 \]
Since $b\dg$ is the smallest projection majorizing $b$, the second inequality must be an
equality, this implies $(b\wedge q)\dg=b\dg \wedge q$.

\end{proof}

\section{Spectral effect algebras}

We are now ready to introduce a notion of spectrality in effect algebras.

\label{sec:spect}
\begin{definition}\cite{Pucompr} Let $E$ be an effect algebra with compression base $(J_p)_{p\in P}$. We say that $E$ is spectral if $(J_p)_{p\in P}$ has both the b-comparability and projection cover property.

\end{definition}

Let us remind the reader about some consequences of this definition. First, $E$ is endowed
with a compression base $(J_p)_{p\in P}$ such that the set of projections $P=E_S$. This
implies that the compression base is maximal and must contain the central compression base
$(U_p)_{p\in \Gamma(E)}$. Another consequence is that the set of sharp elements $E_S$ is a
normal subalgebra which is an OML with the structure inherited from $E$. Every element
$a\in E$ is contained in a C-block $C\subseteq E$, consisting of all elements compatible
with all sharp elements in some block $B$ of $P$. Any C-block is an MV-effect algebra that is spectral with respect to its
unique maximal compression base $(U_p)_{p\in C_S}$.  Spectral MV-effect algebras will be treated (among others) in the examples below. It will be also shown (Example \ref{ex:OMP_spec}) that the covering by spectral MV-effect algebras is in general not enough for the whole effect algebra $E$ to be spectral.

Note that most of the above properties are obtained from b-comparability. Without the
projection cover property we would have a similar structure, but the subalgebra $P=E_S$
might not be a lattice and the MV-effect algebras might not be spectral. Notice also that
 our construction of the spectral resolution below, based on the splitting,  still may be done, 
 but with no unique choice for the spectral projections (cf. Lemma \ref{lemma:app_sum}
 below or \cite[Thm. 3.22]{JenPul} for order unit spaces). 

\begin{example} \label{ex:hilb_spec} Let $E=E(\mathcal H)$ for some Hilbert space $\mathcal H$ (or more generally, we may assume that
$E$ is the interval $[0,1]$ in any von Neumann algebra or in a JBW algebra), with the compression base $(U_p)_{p\in P(\mathcal H)}$
defined as in Example \ref{ex:compr_effects}. For any $a\in E$, let $a\dg$ be the support projection of $a$, then
$a^{\circ}$ is a projection cover of $a$. As we have seen in Example \ref{ex:bprop_quantum}, $E$ has the b-property and $aCb$ if and only if
$ab=ba$. In that case, taking the positive part of the self-adjoint operator $a-b$, we have $(a-b)_+\dg\in P_\le(b,a)$. We conclude that $E$ has both the projection cover and
b-comparability property, so that $E$ is spectral.

\end{example}

\begin{example}\label{ex:MV} Let $E$ be an MV-effect algebra and let $(G,u)$ be the universal group.
By Theorem \ref{thm:cop_group}, $E$ has the b-comparability property if and only if $G$ has the comparability property. In this case, the projection cover property in $E$ is equivalent to the \emph{Rickart property} in $G$,  see \cite[Thm. 6.5]{Fgc} and Appendix \ref{app:spec}. Hence $E$ is spectral if and only if $G$ is spectral.  Assume further that $E$ is monotone $\sigma$-complete, then $(G,u)$ is a Dedekind $\sigma$-complete lattice ordered group with order unit.
By \cite[Thm. 9.9]{Good}, $(G,u)$ satisfies general comparability, so $E$ has the b-comparability property. Moreover, it follows by
\cite[Lemma 9.8]{Good} that $E$ has the projection cover property, so that $G$ and also $E$ is spectral. We will see in Example \ref{ex:convexMV} that under additional conditions a spectral MV-algebra must be monotone $\sigma$-complete.  However, this is not always the case: as we will see next, any Boolean algebra is spectral.

\end{example}

\begin{example}\label{ex:OMP_spec} Let $E$ be a Boolean algebra, with the central compression base $(U_p)_{p\in E}$. It is easily seen that  $E$ has b-comparability: the b-property is obtained by setting $B(q)=\{0,q,q',1\}$, moreover,   for $p,q\in E$, we have $P(p,q)=E$ and
\begin{align*}
U_q(p)=q\wedge p\le q=U_q(q),\qquad  
U_{q'}(q)=0\le q'\wedge p=U_{q'}(p),
\end{align*}
so that $q\in P_\le (p,q)\ne \emptyset$. It is also clear that any Boolean algebra has the projection cover property, by
setting $p\dg=p$, $p\in E$. Hence any Boolean algebra is spectral.

More generally, let $E$ be an OMP. Since in this case $E=E_S$, we see by Theorem \ref{th:proper} that
 if $E$ has b-comparability, then $P=E_S=E$, so that $E$ must be a Boolean algebra (see Example \ref{ex:omp}). Note that any OMP is covered by blocks which are Boolean algebras, hence spectral MV-effect algebras, but is itself not spectral unless it is Boolean.

\end{example}

\begin{example}\label{ex:directpr} Let $E_1$, $E_2$ be effect algebras with compression bases $(J_{1,p})_{p\in P_1}$ and $(J_{2,p})_{p\in P_2}$.
Let $E=E_1\times E_2$ be the direct product and let $P=P_1\times P_2$. For $p=(p_1,p_2)\in P$, put $J_p=(J_{1,p_1},J_{2,p_2})$. It is easily checked that each $J_p$ is a compression with focus $p$ and $(J_p)_{p\in P}$ is a compression base. Since all the operations are taken pointwise, it can be checked that $E$ is spectral if and only if both $E_1$ and $E_2$ are spectral.

\end{example}

\begin{example}\label{ex:horsum_spectral}
 Let $E=E(\mathcal H)\dot{\cup} E(\mathcal H)$ be the horizontal sum as in Example \ref{ex:horsum}, with the compression base $(J_p)_{p\in P}$, $P=P(\mathcal H)\dot{\cup}P(\mathcal H)$,  constructed from faithful states on $E(\mathcal H)$. Since any element of $E$ can be compatible with a projection only if both belong to the same component, it is easily seen that $E$ is spectral. More generally, if $E_1$ and $E_2$ are effect algebras with compression bases $(J_{1,p})_{p\in P_1}$ and $(J_{2,p})_{p\in P_2}$ such that we can construct a compression base with $P=P_1\dot{\cup} P_2$ as in Example \ref{ex:horsum}, then $E_1\dot{\cup} E_2$ is spectral if and only if $E_1$ and $E_2$ are spectral.

\end{example}

All the properties involved in the definition of spectrality depend on the choice of the
compression base. As remarked above, such a compression base must satisfy $P=E_S$ and
therefore must be maximal. We have seen in Example \ref{ex:horsum} and
\ref{ex:horsum_spectral} that such a compression base may be not unique. The next result
shows that spectrality does not depend from the choice of such a compression base.

\begin{prop}\label{prop:compbase_unique} Let $E$ be spectral and let $\{\bar J_p\}_{p\in
P}$ be a compression base with $P=E_S$. Then $E$ is also spectral with respect to
$\{\bar J_p\}$.

\end{prop}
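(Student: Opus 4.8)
The plan is to show that both defining properties of spectrality---the projection cover property and b-comparability---are determined by the common set of projections $P=E_S$ rather than by the individual compressions, and hence transfer automatically from $\{J_p\}$ to $\{\bar J_p\}$. The whole argument rests on two base-independence observations, after which very little computation remains.

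First I would dispose of the projection cover property. By Definition \ref{de:projcov}, $p$ is a projection cover of $a$ precisely when, for all $q\in P$, one has $a\le q\iff p\le q$; this refers only to the partial order on $E$ and to the subset $P$, and makes no reference to the maps $J_p$. Since $\{\bar J_p\}$ has the same projection set $P=E_S$ as $\{J_p\}$, the projection cover property holds for $\{\bar J_p\}$ exactly because it holds for $\{J_p\}$, and each cover $a\dg$ is the same element. The crucial second observation concerns the commutants: by Lemma \ref{le:comE}, the condition $a\in C(p)$ is equivalent to the Mackey compatibility $a\leftrightarrow p$, which depends only on the effect-algebra structure of $E$ and not on the choice of compression with focus $p$. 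Consequently the commutants $C(p)$, and therefore all the derived objects $PC(a)$, $CPC(a)$, the bicommutants $P(a)$ and $P(e,f)$, the notion of a b-element, the b-property itself, and the commuting relation $eCf$ defined by $P(e)\leftrightarrow P(f)$, all coincide for the two bases. In particular part (a) of b-comparability (Definition \ref{de:b-compar}) transfers immediately, and for any $e,f\in E$ we have $eCf$ with respect to $\{\bar J_p\}$ if and only if $eCf$ with respect to $\{J_p\}$.

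It remains to handle part (b), the nonemptiness of $P_\le(e,f)$. Here the sets $P_\le(e,f)$ involve the compressions directly, but the point is that on the commutant of its focus a compression is forced to agree with the infimum map: if $p\in P(e,f)$ then $e,f\in C(p)$, so Lemma \ref{le:comE}(v) gives $J_p(e)=p\wedge e=\bar J_p(e)$ and $J_p(f)=p\wedge f=\bar J_p(f)$, and the same holds with $p$ replaced by $p'=1-p$ since $e,f\in C(p')=C(p)$. Thus the defining inequalities $J_p(e)\le J_p(f)$, $J_{1-p}(f)\le J_{1-p}(e)$ are literally identical to those for $\bar J_p$, so $P_\le(e,f)$ is the same set for both bases; it is nonempty because $E$ is spectral with respect to $\{J_p\}$. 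This establishes b-comparability for $\{\bar J_p\}$ and completes the proof.

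The step I expect to require the most care is the second one: making explicit that $C(p)$ equals $\{a\in E:a\leftrightarrow p\}$ via Lemma \ref{le:comE}, and hence that the entire hierarchy of commutants and bicommutants is intrinsic to $E$ and $P$. Once this base-independence is isolated, together with the fact that a compression restricted to the commutant of its focus coincides with $p\wedge(\cdot)$, the remaining verifications for both b-comparability and the projection cover property are immediate, and no genuine obstacle remains.
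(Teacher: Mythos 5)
Your proposal is correct and follows essentially the same route as the paper's own proof: the projection cover property transfers because it is a purely order-theoretic condition on the common set $P=E_S$, and b-comparability transfers because the commutants $C(p)$, bicommutants $P(a)$, and hence the sets $P_\le(e,f)$ are base-independent, with $J_p(a)=\bar J_p(a)=p\wedge a$ for $a\in C(p)$ by Lemma \ref{le:comE}. The only difference is that you spell out (via Lemma \ref{le:comE}(iv),(v)) the details the paper dismisses as ``easily seen,'' which if anything makes the argument more transparent.
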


\begin{proof} By assumption, the compression bases $(J_p)$ and $(\bar J_p)$ have the
same set of projections. It is easily seen that the projection cover for any element is the same for
both compression bases, since these  only depends on the properties of the
projections and not the corresponding compressions. Hence $E$ with $\{\bar J_p\}$ has the
projection cover property. By Lemma \ref{lemma:compatible_projs}, we also see that the
commutants $C(p)$ and bicommutants $P(a)$ are the same for both compression bases, and
that for $a\in C(p)$, $J_p(a)=\bar J_p(a)=p\wedge a$. This shows that $E$ with $\{\bar
J_p\}$ has the b-comparability property.

\end{proof}

It would be interesting to know if this is true for any maximal compression base, in other
words, if for a spectral effect algebra $E$ any maximal compression base must satisfy
$P=E_S$.

\subsection{Spectral resolutions}

We will define the \emph{binary spectral resolution} of an element $a\in E$  as a family of projections
$\{p_\lambda\}$, indexed by dyadic rationals in $[0,1]$. A dyadic rational 
$\lambda\in [0,1)$ has an expansion
\begin{align*}
\lambda=\lambda(w):=\sum_{j=1}^{l(w)} w_j2^{-j}=k(w)/2^{l(w)},\qquad k(w):=\sum_{j=1}^{l(w)}w_j2^{l(w)-j}
\end{align*}
 for a binary string  $w=w_1\dots w_{l(w)}\in \mathcal B:=\bigcup_{n\ge 0} \{0,1\}^n$,
 here $l(w)$ is the length of $w$.  We will denote the empty string as $\varepsilon$ and the
 concatenation of $w_1,w_2\in \mathcal B$ as $w_1w_2$. The
lexicographical order on binary strings will be denoted by $\le$.

 Any dyadic rational $\lambda\in (0,1)$  has a unique expansion of the
 form  $\lambda=\lambda(w1)$ for some $w\in \mathcal B$, in this case
\[
\lambda=\lambda(w1)=(2k(w)+1)/2^{-(l(w)+1)}.
\]
Note that $\lambda(w1)$ is
the middle point of the interval $[\lambda(w),\lambda(w+1)]$ where $w+1$ is the consecutive binary
string of the same length as $w$ (if it exists, if $w$ is a (possibly empty) string of
ones $w=1\dots 1$, we put $\lambda(w+1)=1$). Similarly, we denote by $w-1$ the previous
binary string of the same length as $w$ and if $w$ is already the smallest element in
$\{0,1\}^{l(w)}$, then $\lambda(w-1)=0$.

We first construct two families $\{c_w\}$ and $\{u_w\}$ indexed by binary strings:
\begin{itemize}
\item we put $c_\varepsilon=a$, $u_\varepsilon:=a\dg$,
\item for any $w\in\{0,1\}^n$, $n=0,1,\dots$, we define $c_{w0}$ and $c_{w1}$ by the
splitting of $c_w$ in $[0,u_w]$:
\begin{align*}
u_{w0}:=&((c_w-c_w')_+\dg)'\wedge u_w,\qquad  u_{w1}:=u_w\wedge u_{w0}',\\
c_{w0}:=&u_{w0}\ominus [(c_w'-c_w)_+\wedge u_w]=2J_{u_{w0}}(c_w),\\
c_{w1}:=&(c_w-c_w')_+\wedge u_w =u_{w1}\ominus 2 J_{u_{w1}}(u_w\ominus c_w).
\end{align*}

\end{itemize}
Note that we have used the properties of splitting in \eqref{eq:split1}, \eqref{eq:split2} and Lemmas \ref{lemma:splitting}, \ref{lemma:subinterval}.

\begin{remark} It might be useful to write down the elements $u_w$ and $c_w$ explicitly in a simple
example. Let $E=[0,1]^n$, then $E$ is a monotone $\sigma$-complete $MV$-algebra, hence $E$
is spectral. Let $a=(a_1,\dots, a_n)\in E$. Note that an element $u\in E$ is sharp
precisely if $u_i\in \{0,1\}$ for all $i$, hence can be identified with a subset
$U\subseteq \{1,\dots, n\}$, $U=\{i, u_i=1\}$. Then we have, for any binary string $w$ of
length $n$:
\begin{align*}
U_w=\{i, \frac{k(w)}{2^n}< a_i \le \frac{k(w)+1}{2^n}\},\qquad   c_w(i)= \begin{dcases}
2^na_i-k(w), & i\in U_w\\
0,& i\ne U_w.
\end{dcases}
\end{align*}
See also Lemma \ref{lemma:cw} below.

\end{remark}

We collect some easily observed  properties of these families.

\begin{lemma}\label{lemma:uw}
\begin{enumerate}
\item[(i)] For all $w$, we have $u_w\in P(a)$, $c_w\le u_w$ and $c_w$ is contained in all
C-blocks containing $a$.
\item[(ii)] For any strings $w$, $\tilde w$, we have $u_{w\tilde w}\le u_w$.
\item[(iii)] For binary strings of fixed length $l(w)=n$, $u_w$ are pairwise orthogonal and we have
\[
\bigoplus_{w\in \{0,1\}^n} u_w=a\dg.
\]
\end{enumerate}

\end{lemma}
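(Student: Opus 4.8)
The plan is to prove all three items by a single induction on the length $n=l(w)$, with the real work concentrated in (i). For the base case $w=\varepsilon$ everything is immediate from the definitions: $u_\varepsilon=a\dg\in P(a)$ by Proposition \ref{prop:pc_commutant}, $c_\varepsilon=a\le a\dg=u_\varepsilon$ since the projection cover majorizes $a$, and $a$ trivially lies in every C-block containing $a$; the orthosum in (iii) is the single term $a\dg$. Throughout I would carry the slightly stronger inductive hypothesis that $c_w\in CPC(a)$ and $u_w\in P(a)$; note that $c_w\in CPC(a)$ forces $c_w\in C(B)$ for every block $B\subseteq PC(a)$, i.e. for every C-block containing $a$ (by Theorem \ref{thm:cblock}), so this implies the containment asserted in (i).

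For the inductive step, assume $c_w\le u_w$, $c_w\in CPC(a)$ and $u_w\in P(a)$. The subinterval $[0,u_w]$ again has b-comparability and the projection cover property (Lemma \ref{lemma:subinterval}(iii),(v)), so the splitting of $c_w$ in $[0,u_w]$ is well defined; the inequalities $c_{w0}\le u_{w0}$ and $c_{w1}\le u_{w1}$ are then exactly the splitting relations \eqref{eq:split1}, \eqref{eq:split2} read inside $[0,u_w]$, together with $c_{w1}=(c_w-c_w')_+\wedge u_w\le (c_w-c_w')_+\dg\wedge u_w=u_{w1}$. To see that $c_{w0},c_{w1}\in CPC(a)$, I would use that $(c_w-c_w')_+$ lies in every C-block containing $c_w$ and $c_w'$ by Lemma \ref{lemma:bminusa_props}(i), hence in every C-block through $a$, while $J_{u_{w0}}$ acts on such elements as $\wedge\,u_{w0}$ by Lemma \ref{le:comE}(v); the defining formulas then exhibit $c_{w0},c_{w1}$ as built from elements of $CPC(a)$ and projections in $P(a)$.

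The step I expect to be the main obstacle is verifying that the new projections again lie in $P(a)$, i.e. that $(c_w-c_w')_+\dg\in P(a)$; granting this, $u_{w0}=((c_w-c_w')_+\dg)'\wedge u_w$ and $u_{w1}=u_w\wedge u_{w0}'$ lie in $P(a)$ because $P(a)$ is a Boolean subalgebra of $P$. I would argue in three moves. First, $(c_w-c_w')_+\in CPC(a)$: for each $p\in PC(a)$ the hypothesis gives $c_w,c_w'\in C(p)$ and hence $P(c_w)\subseteq C(p)$, so $\{p\}\cup P(c_w)$ is pairwise compatible; extending it to a maximal pairwise compatible set inside $\{q\in P: q\leftrightarrow P(c_w)\}$ produces a block $B$ of $P$ with $p\in B$ and, by the b-property (Lemma \ref{lemma:BaPa}(ii)), $c_w\in C(B)$, whence $(c_w-c_w')_+\in C(B)\subseteq C(p)$ by Lemma \ref{lemma:bminusa_props}(i); intersecting over $p\in PC(a)$ yields $(c_w-c_w')_+\in CPC(a)$. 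Second, the projection cover inherits this: $(c_w-c_w')_+\dg\in P((c_w-c_w')_+)\subseteq C(p)$ for every $p\in PC(a)$, by Proposition \ref{prop:pc_commutant} together with the fact that $y\in C(p)$ entails $P(y)\subseteq C(p)$, so $(c_w-c_w')_+\dg\in CPC(a)$. Third, any projection $r\in CPC(a)$ automatically lies in $PC(a)$: it commutes with all of $P(a)\subseteq PC(a)$, so $P(a)\subseteq C(r)$, whence $a\in C(r)$ by Lemma \ref{lemma:BaPa}(ii); thus $r\in CPC(a)\cap PC(a)=P(a)$. Applying this with $r=(c_w-c_w')_+\dg$ closes (i).

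With (i) established, (ii) and (iii) are routine. For (ii) one reads off $u_{w0}\le u_w$ and $u_{w1}=u_w\wedge u_{w0}'\le u_w$ directly from the definitions and iterates along $\tilde w$. For (iii) I would induct on $n$: since $u_{w1}\le u_{w0}'$ we get $u_{w0}\perp u_{w1}$, and as $u_{w0}\le u_w$ in the OML $P$ (Theorem \ref{th:projcovoml}) orthomodularity gives $u_{w0}\oplus u_{w1}=u_{w0}\oplus(u_w\ominus u_{w0})=u_w$; distinct length-$n$ strings yield orthogonal $u_w$ by hypothesis, $u_{wb}\le u_w$ propagates orthogonality to length $n+1$, and associativity of $\oplus$ refines $\bigoplus_{w\in\{0,1\}^n}u_w=a\dg$ into $\bigoplus_{v\in\{0,1\}^{n+1}}u_v=a\dg$.
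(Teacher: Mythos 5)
Your proof is correct and follows essentially the same route as the paper's: induction on the length of $w$, with $c_w\le u_w$ read off from the splitting relations, containment in C-blocks obtained from Lemmas \ref{lemma:splitting} and \ref{lemma:bminusa_props}, and the same orthomodularity/telescoping arguments for (ii) and (iii). The one place you go beyond the paper is the explicit three-step verification that $(c_w-c_w')_+\dg$, and hence $u_{w0}, u_{w1}$, lie in $P(a)$ (passing through $CPC(a)$ and Lemma \ref{lemma:BaPa}, plus the observation that a projection in $CPC(a)$ automatically lies in $PC(a)$); the paper compresses exactly this point into the words ``which implies (i)'', so your argument is a correct filling-in of its implicit step rather than a genuinely different method.
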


\begin{proof}
It is clear from the construction that $c_w\le u_w$ for all $w$.
By Lemma \ref{lemma:splitting}, all $u_{w0}$, $u_{w1}$, $c_{w0}$,
$c_{w1}$ are contained in any C-block containing $c_w$, which implies (i).

The statement (ii) can be proved by induction on the length of $\tilde w$: for any $w$, we
have $u_{w0}, u_{w1}\le u_w$ by construction. If the statement holds for any $\tilde w$ of
length $n$, then $u_{w\tilde wi}\le u_{w\tilde w}\le u_w$ for $i\in \{0,1\}$.

We use induction on the length of $w$ to prove (iii): by construction $u_{w0}\oplus
u_{w1}=u_w$ for any $w$, this also proves the statement for $n=1$. Let $w_1,w_2\in\{0,1\}^n$ and assume that the statement holds
for any $k<n$. To show orthogonality we only have to deal with the
situation when $w_1$ and $w_2$ have some prefix $\tilde w_1$ resp.
$\tilde w_2$, of length $k<n$ and such that $\tilde w_1\ne \tilde w_2$, in which case  by (ii),
$u_{w_1}\le u_{\tilde w_1}$, $u_{w_2}\le u_{\tilde w_2}$ and $u_{\tilde w_1}\perp
u_{\tilde w_2}$ by the induction assumption. Moreover,
\begin{align*}
\bigoplus_{w\in \{0,1\}^n} u_w=\bigoplus_{\tilde w\in \{0,1\}^{n-1}} (u_{\tilde w0}\oplus
u_{\tilde w 1}) =\bigoplus_{\tilde w\in \{0,1\}^{n-1}}u_{\tilde w}=a\dg.
\end{align*}
\end{proof}

We now define the binary spectral resolution of $a$:
\begin{itemize}
\item $p_0= (a\dg)'$,
\item for  $l(w)=n$, $n=0,1,\dots $, we put
\[
p_{\lambda(w1)}=(a\dg)' \oplus \bigoplus_{\tilde w\in \{0,1\}^{n+1}, \tilde w\le w0} u_{\tilde w}
\]
\item $p_1=1$.
\end{itemize}

\begin{lemma}\label{lemma:pw1} For any $w\in \{0,1\}^n$, we have
\begin{enumerate}
\item[(i)] $p_{\lambda(w1)}\wedge u_w=u_{w0}$,
\item[(ii)]  $(a\dg)'\oplus \bigoplus_{\tilde w\in \{0,1\}^{l(w)},\tilde w\le w} u_{\tilde
w}=p_{\lambda(w+1)}=p_{\lambda(w)}\oplus u_w$,
\item[(iii)] if $w_1,w_2$ are two binary strings such that $w_1\le w_2$, then
$p_{\lambda(w_1)}\le p_{\lambda(w_2)}$.
\end{enumerate}

\end{lemma}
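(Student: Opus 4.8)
The plan is to do everything inside the Boolean subalgebra $P(a)\subseteq P$, which by Lemma~\ref{lemma:uw}(i) contains every $u_w$ and $a\dg$; there the orthogonal sum of orthogonal sharp elements is their join, orthocomplementation is Boolean complementation, $\wedge$ distributes over $\vee$, and $\oplus$ is monotone. Write $q_w:=(a\dg)'\oplus\bigoplus_{v\le w,\,l(v)=l(w)}u_v$ for the cumulative partial sum taken at the length of $w$, so that the defining formula for the resolution is exactly $p_{\lambda(w1)}=q_{w0}$. I will use throughout the three facts from Lemma~\ref{lemma:uw}: $u_{w0}\oplus u_{w1}=u_w$, the orthogonality of $\{u_v:l(v)=n\}$ together with $\bigoplus_{l(v)=n}u_v=a\dg$, and $u_{w1}\le u_w\le a\dg$; thus the length-$n$ strings index a partition of $1$ that is refined one level at a time.

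For (i) I first determine the length-$(n+1)$ strings $\tilde w\le w0$: writing $\tilde w=vb$ with $l(v)=n$, lexicographic comparison shows $\tilde w\le w0$ iff $v<w$, or $v=w$ and $b=0$. Splitting $u_v=u_{v0}\oplus u_{v1}$ for each $v<w$ collapses the defining sum to $p_{\lambda(w1)}=(a\dg)'\oplus\bigoplus_{v<w}u_v\oplus u_{w0}$. Every summand is orthogonal to $u_{w1}$, so $p_{\lambda(w1)}\le u_{w1}'$; since $u_{w0}$ is one of these summands and lies below $u_w$, I have $u_{w0}\le p_{\lambda(w1)}\wedge u_w$, while distributivity gives $p_{\lambda(w1)}\wedge u_w\le u_{w1}'\wedge(u_{w0}\vee u_{w1})=u_{w0}$, proving (i).

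The heart of (ii) is the identity $p_{\lambda(w+1)}=q_w$, which I prove by induction on $n=l(w)$; the base case is $p_1=(a\dg)'\oplus a\dg=1=q_\varepsilon$. The one auxiliary computation needed is the append-$1$ invariance $q_{w1}=q_w$: the length-$(n+1)$ strings $\le w1$ are precisely the $vb$ with $v\le w$, so splitting each $u_v$ yields $q_{w1}=(a\dg)'\oplus\bigoplus_{v\le w}u_v=q_w$. For the inductive step I test a length-$(n+1)$ string $\tilde w=wb$. If $b=0$, then $\lambda(\tilde w+1)=\lambda(w1)$, whose canonical midpoint expansion is $w1$ itself, so $p_{\lambda(\tilde w+1)}=p_{\lambda(w1)}=q_{w0}=q_{\tilde w}$ directly from the definition. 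If $b=1$ and $w$ is not all ones, then $\lambda(\tilde w+1)=\lambda(w+1)$ and $q_{\tilde w}=q_{w1}=q_w=p_{\lambda(w+1)}$ by invariance and the induction hypothesis; if $w$ is all ones then $\lambda(\tilde w+1)=1$ and $q_{\tilde w}=1=p_1$. Granting $p_{\lambda(w+1)}=q_w$, the rest of (ii) is immediate: $q_w=q_{w-1}\oplus u_w$ holds by definition of $q_w$ (with $q_{w-1}=(a\dg)'$ when $w$ is least), and applying the identity to $w-1$ gives $q_{w-1}=p_{\lambda(w)}$, so $q_w=p_{\lambda(w)}\oplus u_w$.

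For (iii) I reduce the lexicographic hypothesis to a comparison of dyadics: a routine estimate of binary expansions shows that $w_1\le w_2$ implies $\lambda(w_1)\le\lambda(w_2)$. Given $\mu=\lambda(w_1)\le\nu=\lambda(w_2)$, I rewrite both at a common length $n$ (any $n$ for which $2^n$ is a common denominator), obtaining length-$n$ strings $x\le y$ with $\lambda(x)=\mu$ and $\lambda(y)=\nu$; by the identity just proved, $p_\mu=q_{x-1}$ and $p_\nu=q_{y-1}$, and since $x-1\le y-1$ the sum defining $q_{x-1}$ runs over a subfamily of that for $q_{y-1}$, so monotonicity of $\oplus$ gives $p_\mu\le p_\nu$. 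I expect the only delicate part to be the index bookkeeping in (ii)---matching $\lambda(w+1)$ and $\lambda(w1)$ with the canonical expansion behind the definition of $p$, and checking via the append-$1$ invariance that $q_w$ depends only on the dyadic $\lambda(w+1)$ so that $p_{\lambda(w)}$ is unambiguous; every other step is a distributive computation in $P(a)$.
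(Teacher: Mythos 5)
Your proof is correct and follows essentially the same route as the paper's: both proceed by induction on string length, collapsing pairs $u_{v0}\oplus u_{v1}=u_v$ (your append-$1$ invariance $q_{w1}=q_w$ is exactly the paper's ``repeating the argument'' step), and both handle (iii) for strings of different lengths by passing to a common length. Your write-up merely makes explicit some bookkeeping the paper leaves implicit --- the $q_w$ notation, the well-definedness of $p_\lambda$ via the unique trailing-$1$ expansion, and the Boolean computation in $P(a)$ behind part (i).
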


\begin{proof} The statement (i) follows easily from the definition and the fact that
$u_w=u_{w0}\oplus u_{w1}$.
For (ii), notice first that if $w=1\dots 1$ is the largest element in $\{0,1\}^n$,
then by Lemma \ref{lemma:uw}, we have
\[
(a\dg)'\oplus\bigoplus_{\substack{\tilde w\in \{0,1\}^n\\ \tilde w\le w}} u_{\tilde
w}=1=p_1=p_{\lambda(w+1)},
\]
so we may assume $w\ne 1\dots 1$. This also shows that the statement holds for $l(w)=0$.
Assume it is true for strings of length $n$ and let $w$ be a binary string of length $n+1$.  If $w=w_10$ for
some $w_1\in \{0,1\}^n$, then we have by definition that the left hand side is equal to
$p_{\lambda(w_11)}=p_{\lambda(w+1)}$. Let now $w=w_11$, then notice that
\begin{align*}
\bigoplus_{\substack{\tilde w\in \{0,1\}^{n+1}\\ \tilde w\le w}} u_{\tilde
w}=\bigoplus_{\substack{\tilde w\in \{0,1\}^{n+1}\\\tilde w\le (w_1-1)1}} u_{\tilde
w}\oplus u_{w_10}\oplus u_{w_11} =\bigoplus_{\substack{\tilde w\in \{0,1\}^{n+1}\\ \tilde
w\le (w_1-1)1}} u_{\tilde
w}\oplus u_{w_1}
\end{align*}
and repeating the argument, we obtain that
\begin{align*}
(a\dg)'\oplus\bigoplus_{\substack{\tilde w\in \{0,1\}^{n+1}\\ \tilde w\le {w_11}}}
u_{\tilde w}=(a\dg)'\oplus\bigoplus_{\substack{\tilde
w\in \{0,1\}^n\\ \tilde w\le {w_1}}} u_{\tilde
w}
=p_{\lambda(w_1+1)}=p_{\lambda(w+1)},
\end{align*}
using the induction assumption and $\lambda(w0)=\lambda(w)$. The second equality is now
clear. If $w_1$, $w_2$ have the same length, the statement (iii) follows from
$p_{\lambda(w)}\le p_{\lambda(w+1)}$ which is obvious from (ii). Otherwise we may add a
string of zeros at the end of one of $w_1$, $w_2$, which does not change the order nor the
values of $p_{\lambda(w_1)}$ and $p_{\lambda(w_2)}$.

\end{proof}

The element  $a$ is contained in some C-block $C=C(B)$ of $E$, which is a spectral
MV-effect algebra with the compression base $(J_p|_C=U_p)_{p\in B}$.  As we have seen in Example
\ref{ex:MV}, the universal group $G$ of $C$ is a spectral lattice ordered abelian group $G$ with order unit, with the compression base $(\tilde U_p)_{p\in B}$, extending the compression base in $C$. Hence there
is a {rational spectral resolution} of $a$ in $G$, given by 
\begin{equation}\label{eq:spectprojs}
p^C_{a,\lambda}:= ((na-mu)_+)^*,\qquad \lambda=\frac{m}{n},\ n>0,
\end{equation}
see Appendix \ref{app:spec}. For $g\in G$, $g^*$  denotes the Rickart mapping in $G$.
The notation  highlights the  possible dependence on the choice of the C-block
$C$. If the
element $a$ is clear, we will skip the index from the notation. 

As it will turn out below, the spectral resolution does not depend on $C$.
We will first settle the result for dyadic rationals, relating the spectral projection
$p^C_\lambda$ to the corresponding projection $p_{\lambda(w1)}$ constructed above.

\begin{lemma}\label{lemma:cw} Let $w=w_1\dots w_{l(w)}$ and let $C=C(B)$ be a C-block
containing $a$. Then $u_w\in B$ and
\[
c_w=\tilde U_{u_w}(2^{l(w)}a-k(w)1),
\]
computed in the universal group $G$ of $C$.
\end{lemma}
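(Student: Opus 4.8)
The plan is to reduce the whole statement to a computation inside the universal group $G$ of the fixed C-block $C=C(B)$ and then run an induction on the length $l(w)$. First I would dispose of the claim $u_w\in B$. By Lemma~\ref{lemma:uw}(i), $u_w\in P(a)=PC(a)\cap CPC(a)$; and since $a\in C(B)$, every $p\in B$ commutes with $a$, so $B\subseteq PC(a)$. As $u_w\in CPC(a)=C(PC(a))$, the projection $u_w$ is Mackey compatible with every element of $B$, hence $B\cup\{u_w\}$ is a pairwise compatible set of projections; maximality of the block $B$ forces $u_w\in B$. Thus $u_w$ is a sharp element of $C$, the restriction $J_{u_w}|_C=U_{u_w}$ extends to the ideal projection $\tilde U_{u_w}$ on $G$ (Theorem~\ref{th:bcompar}, Example~\ref{ex:MV}), and $c_w\in C$ by Lemma~\ref{lemma:uw}(i), so the asserted identity is meaningful in $G$.

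Throughout I will use three elementary properties of the group compressions: each $\tilde U_p$ is a group homomorphism equal to the projection of $G$ onto $G_p$ with kernel $G_{p'}$; hence $\tilde U_p(u)=\tilde U_p(p+p')=p$, and for $p\le q$ in $B$ one has $\tilde U_p\circ\tilde U_q=\tilde U_p$ because $G_p\subseteq G_q$. For the base case $w=\varepsilon$ we have $l(w)=0$, $k(w)=0$, $u_\varepsilon=a\dg$, $c_\varepsilon=a$; since $0\le a\le a\dg$, the formula $\tilde U_{a\dg}(a)=a\wedge n a\dg=a$ gives the claim. For the inductive step write $n=l(w)$, $k=k(w)$ and assume $c_w=\tilde U_{u_w}(2^na-ku)$. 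The binary labels satisfy $k(w0)=2k$, $k(w1)=2k+1$ and $l(w0)=l(w1)=n+1$, so the two cases amount to doubling, respectively doubling-and-shifting, the argument of a compression.

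For $w0$: by the defining formula of the construction, $c_{w0}=2J_{u_{w0}}(c_w)=2\tilde U_{u_{w0}}(c_w)$ in $G$. Since $u_{w0}\le u_w$, nesting gives $\tilde U_{u_{w0}}\circ\tilde U_{u_w}=\tilde U_{u_{w0}}$, whence by the induction hypothesis and linearity
\[
c_{w0}=2\tilde U_{u_{w0}}\bigl(\tilde U_{u_w}(2^na-ku)\bigr)=\tilde U_{u_{w0}}\bigl(2^{n+1}a-2k\,u\bigr)=\tilde U_{u_{w0}}\bigl(2^{l(w0)}a-k(w0)u\bigr).
\]
For $w1$: by Lemma~\ref{lemma:subinterval}(iv) the defining formula $c_{w1}=(c_w-c_w')_+\wedge u_w$ equals the splitting of $c_w$ in $[0,u_w]$, so in $G$ it is $c_{w1}=(c_w-(u_w\ominus c_w))_+=(2c_w-u_w)_+$; moreover $u_{w1}=u_w\wedge u_{w0}'=(c_w-c_w')_+\dg\wedge u_w$, which by Lemma~\ref{lemma:subinterval}(v) is the projection cover of $c_{w1}$, i.e. the support (Rickart) projection of $(2c_w-u_w)_+$. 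Using $\tilde U_{u_w}(u)=u_w$ and the induction hypothesis,
\[
2c_w-u_w=\tilde U_{u_w}\bigl(2^{n+1}a-(2k+1)u\bigr)=:y\in G_{u_w}.
\]
Writing $s:=u_{w1}$ for the support of $y_+$, the negative part $y_-$ is carried by the complementary projection, so $\tilde U_s(y)=y_+$; combined with $s\le u_w$ and nesting this yields
\[
c_{w1}=(2c_w-u_w)_+=y_+=\tilde U_s(y)=\tilde U_{u_{w1}}\bigl(2^{n+1}a-(2k+1)u\bigr)=\tilde U_{u_{w1}}\bigl(2^{l(w1)}a-k(w1)u\bigr),
\]
completing the induction.

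The step that demands the most care is the passage from the effect-algebra splitting data to these group statements: identifying $u_{w1}$ with the support projection of $(2c_w-u_w)_+$ out of the definitions $u_{w0}=((c_w-c_w')_+\dg)'\wedge u_w$ and $u_{w1}=u_w\wedge u_{w0}'$, which I would carry out through Lemma~\ref{lemma:subinterval}(iv)--(v) together with the splitting description in Lemma~\ref{lemma:splitting}, and justifying the support identity $\tilde U_s(y)=y_+$ in the lattice-ordered group $G_{u_w}$. Once these identifications are in place, the arithmetic of the binary labels and the fact that the $\tilde U_p$ are homomorphisms projecting onto nested ideals make the remainder of the induction routine.
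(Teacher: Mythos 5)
Your proof is correct, and its skeleton coincides with the paper's: the membership $u_w\in B$ via $u_w\in P(a)$ and $P(a)\subseteq PC(B)=B$ (yours just unwinds the bicommutant argument that the paper imports from the proof of Theorem~\ref{thm:cblock}), induction on $l(w)$, the base case $\tilde U_{u_\varepsilon}(a)=a$, and the $w0$ step via nesting $\tilde U_{u_{w0}}\circ\tilde U_{u_w}=\tilde U_{u_{w0}}$ plus the homomorphism property. The genuine divergence is the $w1$ step. The paper takes the \emph{second} expression in the construction, $c_{w1}=u_{w1}\ominus 2J_{u_{w1}}(u_w\ominus c_w)$, which translates verbatim in $G$ (using $\tilde U_{u_{w1}}(1)=u_{w1}$) into $c_{w1}=\tilde U_{u_{w1}}(2c_w-1)$, after which the induction hypothesis and nesting finish in one line, with no spectral-theoretic input. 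You instead start from the \emph{first} expression $c_{w1}=(c_w-c_w')_+\wedge u_w$, identify it with the group positive part $(2c_w-u_w)_+$ via Lemmas~\ref{lemma:splitting} and~\ref{lemma:subinterval}(iv), identify $u_{w1}$ with its projection cover, hence its Rickart support, via Lemma~\ref{lemma:subinterval}(v) together with a distributive rewriting of $u_w\wedge u_{w0}'$, and then invoke Lemma~\ref{lemma:orthog_decomp}(ii) to get $\tilde U_{u_{w1}}(y)=y_+$. This route is heavier: it needs the Rickart property of $G$ (spectrality of the C-block, which does hold here by Theorem~\ref{th:bcompar} and Example~\ref{ex:MV}) and the commutation justifying the distributive step (all projections involved lie in $B$, being in bicommutants of elements of $C(B)$) --- two points you flag but do not fully carry out, though both go through. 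What it buys is that it makes explicit the identity $u_{w1}=c_{w1}\dg=\bigl((2c_w-u_w)_+\bigr)^{**}$, i.e.\ precisely the support-projection fact that the paper only exploits later, in the proof of Theorem~\ref{thm:spectral}; the paper's choice keeps the present lemma self-contained and purely algebraic.
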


\begin{proof} By Lemma \ref{lemma:uw}, we have  $u_w\in P(a)\subseteq C\cap P=B$ for all
$w$. For $w=\varepsilon$, we have $l(w)=k(\varepsilon)=0$ and
$c_\varepsilon=a=U_{a\dg}(a)=U_{u_\varepsilon}(a)=\tilde U_{u_\varepsilon}(a)$. Assume that the
equality holds for  $l(w)=n$. We have $k(w0)=2k(w)$ and by construction and induction assumption,
\begin{align*}
c_{w0}&=2U_{u_{w0}}(c_w)=2\tilde U_{u_{w0}}(c_w)=2\tilde U_{u_{w0}}(2^na-k(w)1)
=\tilde U_{u_{w0}}(2^{n+1}a-2k(w)1)\\
&=\tilde U_{u_{w0}}(2^{l(w0)}a-k(w0)1),
\end{align*}
here we used the fact that $u_{w0}\le u_w$ and that $\tilde U_p$ is  a group homomorphism
on $G$. Similarly, $k(w1)=2k(w)+1$ and
\begin{align*}
c_{w1}=\tilde U_{w1}(2c_w-1)=\tilde U_{w1}(2(2^{l(w)}a-k(w)1)-1)=\tilde U_{w1}(2^{l(w1)}a-k(w1)).
\end{align*}

\end{proof}

\begin{theorem}\label{thm:spectral} For any C-block $C$ such that  $a\in C$ and any dyadic
rational $\lambda\in [0,1]$  we have
\[
p^C_{\lambda}=p_{\lambda}.
\]

\end{theorem}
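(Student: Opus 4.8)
The plan is to carry out the entire argument inside the fixed C-block $C$, whose universal group $G$ is a spectral $\ell$-group carrying the rational spectral resolution $p^C_{m/N}=((Na-mu)_+)^*$ of \eqref{eq:spectprojs}; all the elements $u_w,c_w$ lie in $C$, and the orthosums $\bigoplus$ in $E$ coincide with addition in $G$ on orthogonal families from $C$. Since every dyadic $\lambda\in(0,1)$ is uniquely of the form $\lambda(w1)$, and the endpoints are handled by $p_0=(a\dg)'=(Na)^*=p^C_0$ (support of $Na$ equals $a\dg$ for $a\ge 0$) and $p_1=1=p^C_1$ (as $a\le u$ in $C$ forces $(a-u)_+=0$), it suffices to prove $p_{\lambda(w1)}=p^C_{\lambda(w1)}$ for every $w$. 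The organizing fact is the \emph{band identity}
\begin{equation}\label{eq:bandid}
u_w=p^C_{(k(w)+1)/2^{l(w)}}\ominus p^C_{k(w)/2^{l(w)}}\qquad(w\in\mathcal B),
\end{equation}
asserting that $u_w$ is exactly the spectral band of $a$ over the dyadic interval $\bigl(k(w)2^{-l(w)},(k(w)+1)2^{-l(w)}\bigr]$.

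To establish \eqref{eq:bandid} I would induct on $l(w)$. For $w=\varepsilon$ we have $u_\varepsilon=a\dg=1\ominus(a\dg)'=p^C_1\ominus p^C_0$. For the step, fix $w$ with $l(w)=n$ satisfying \eqref{eq:bandid}. By Lemma \ref{lemma:cw}, $c_w=\tilde U_{u_w}(2^na-k(w)u)$ in $G$; since $c_w'=u_w\ominus c_w$ in $[0,u_w]$ and $u_w=\tilde U_{u_w}(u)$, the band projection gives $c_w-c_w'=2c_w-u_w=\tilde U_{u_w}(g)$ with $g:=2^{n+1}a-(2k(w)+1)u$. Because $\tilde U_{u_w}$ projects onto an $\ell$-ideal it commutes with the positive part and its support is dominated by $u_w$, whence $(c_w-c_w')_+=\tilde U_{u_w}(g_+)$ has support $u_w\wedge g_+\dg$; therefore
\[
u_{w0}=\bigl((c_w-c_w')_+\dg\bigr)'\wedge u_w=u_w\wedge(g_+)^*=u_w\wedge p^C_{(2k(w)+1)/2^{n+1}}.
\]
All elements involved lie in the Boolean algebra $B$, and since $p^C$ is monotone with $k(w)/2^n=2k(w)/2^{n+1}\le(2k(w)+1)/2^{n+1}\le(k(w)+1)/2^n$, the hypothesis \eqref{eq:bandid} reduces the right-hand side, by a routine Boolean computation, to $p^C_{(2k(w)+1)/2^{n+1}}\ominus p^C_{2k(w)/2^{n+1}}$; then $u_{w1}=u_w\ominus u_{w0}=p^C_{(2k(w)+2)/2^{n+1}}\ominus p^C_{(2k(w)+1)/2^{n+1}}$. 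As $k(w0)=2k(w)$, $k(w1)=2k(w)+1$ and $l(w0)=l(w1)=n+1$, this is exactly \eqref{eq:bandid} for $w0$ and $w1$.

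Finally, from the definition of the binary resolution, \eqref{eq:bandid}, the orthogonality in Lemma \ref{lemma:uw}, and $p_0=p^C_0$, the sum telescopes in $G$:
\[
p_{\lambda(w1)}=(a\dg)'\oplus\!\!\bigoplus_{\substack{\tilde w\in\{0,1\}^{l(w)+1}\\ \tilde w\le w0}}\!\! u_{\tilde w}=p^C_0+\sum_{j=0}^{2k(w)}\bigl(p^C_{(j+1)/2^{l(w)+1}}-p^C_{j/2^{l(w)+1}}\bigr)=p^C_{(2k(w)+1)/2^{l(w)+1}}=p^C_{\lambda(w1)},
\]
which proves the claim for all dyadic $\lambda$. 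The one genuinely delicate point is the inductive computation of $u_{w0}$: it hinges on matching the projection cover $\dg$ taken in the subinterval $[0,u_w]$ of $E$ with the Rickart map $*$ and the band projection $\tilde U_{u_w}$ in $G$, i.e.\ on the facts that support, positive part and the band projection commute and restrict consistently from $G$ to $C$ to $[0,u_w]$. These compatibilities are precisely what Lemmas \ref{lemma:cw}, \ref{lemma:splitting} and \ref{lemma:subinterval} supply, so once they are invoked the remaining manipulations are Boolean and routine.
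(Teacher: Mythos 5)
Your proposal is correct, and at its core it is the same argument as the paper's: both proceed by induction on the length of the binary string, pass to the universal group of a C-block via Lemma \ref{lemma:cw}, and both hinge on the identical key identity $u_{w0}=u_w\wedge p^C_{\lambda(w1)}$, established through the orthogonal decomposition (Lemma \ref{lemma:orthog_decomp} (iii)) and the Rickart mapping. The difference is organizational: the paper inducts directly on the midpoint equalities $p_{\lambda(w1)}=p^C_{\lambda(w1)}$, sandwiching $p_{\lambda}$ and $p^C_{\lambda}$ between the already-identified endpoint projections $p_{\lambda(w)}$ and $p_{\lambda(w+1)}$ (Lemma \ref{lemma:pw1}), whereas you induct on the ``band identity'' $u_w=p^C_{(k(w)+1)/2^{l(w)}}\ominus p^C_{k(w)/2^{l(w)}}$ and then recover the theorem by telescoping --- the telescoping being essentially the paper's Lemma \ref{lemma:pw1} (ii) read inside the group. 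What your packaging buys is a cleaner induction statement; what it costs is that your one-line justification of $\bigl(\tilde U_{u_w}(g_+)\bigr)\dg=u_w\wedge g_+\dg$ (``it commutes with the positive part and its support is dominated by $u_w$'') is the entire technical content of the inductive step and needs an actual proof: it follows from the decomposition $g_+=\tilde U_{u_w}(g_+)+\tilde U_{u_w'}(g_+)$, Lemma \ref{le:rickmap} (iv)--(v) giving $g_+^{**}=(\tilde U_{u_w}(g_+))^{**}\vee(\tilde U_{u_w'}(g_+))^{**}$ with the two supports lying under $u_w$ and $u_w'$ respectively, and Foulis--Holland distributivity in the OML $P$. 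This is precisely the content the paper establishes by its two-sided Rickart computation (showing $\tilde U_{u_{w0}}((g_w)_+)=0$, hence $u_{w0}\le p^C_{\lambda}$, and the converse inequality). With that point expanded, your proof is complete and fully within the paper's toolkit.
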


\begin{proof} It is easily seen from \eqref{eq:spectprojs}  that $p^C_{0}=1-a\dg=p_0$
and $p^C_{1}=1=p_1$. Let us now prove that $p^C_{\frac12}=p_{\lambda(1)}$. By definition, we
have
\[
p_{\lambda(1)}=(a\dg)'\oplus u_0=(a\dg)'\oplus ((a-a')_+\dg)'\wedge a\dg.
\]
Notice that, for any $p\in P_\le(a',a)$, we have
\begin{align*}
(a-a')_+=J_p(a)\ominus J_p(a')\le J_p(a)\le J_p(a\dg)=a\dg\wedge p\le a\dg,
\end{align*}
so that $(a-a')_+\dg\le a\dg$ and $((a-a')_+\dg)'\wedge (a\dg)'=(a\dg)'$. It follows that
\begin{align*}
p_{\lambda(1)}=((a-a')_+\dg)'\wedge(a\dg)'\oplus  ((a-a')_+\dg)'\wedge
a\dg=((a-a')_+\dg)'=p^C_{\frac12}.
\end{align*}
We will now show the statement by induction on $l(w)$ in the expansion
$\lambda=\lambda(w1)$,
the above paragraph shows that it is true for $l(w)=0$.

Assume that the statement holds if $l(w)<n$ and let $w$ be such that $l(w)=n$. We have
$\lambda(w)\le \lambda\le \lambda(w+1)$ and
by Lemma \ref{lemma:pw1}, $p_{\lambda(w)}\le
p_\lambda\le p_{\lambda(w+1)}$ and $u_w=p_{\lambda(w+1)}\wedge p_{\lambda(w)}'$.  By the induction assumption we have
$p_{\lambda(w)}=p^C_{\lambda(w)}$ and $p_{\lambda(w+1)}=p^C_{\lambda(w+1)}$, so that using
 the properties of spectral projections in Lemma \ref{lemma:pw1}, we obtain
\[
p_{\lambda(w)}\le p_{\lambda},p^C_{\lambda}\le p_{\lambda(w+1)}.
\]
This implies $p_{\lambda}\wedge u_w'=p_{\lambda}^C\wedge u_w'=p_{\lambda(w)}$, so that what
we have to prove is
\[
u_{w0}=p_{\lambda(w1)}\wedge u_w=p^C_{\lambda(w1)}\wedge u_w.
\]
Put $g_w:=2^{l(w)+1}a-(2k(w)+1)1\in G$, then
$p^C_{\lambda}=(g_w)_+^*$ by \eqref{eq:spectprojs}. Using Lemmas \ref{lemma:orthog_decomp}
(iii)
and \ref{lemma:cw}, and the definition of $u_{w0}$, we have
\begin{align*}
\tilde U_{u_{w0}}((g_w)_+)=\tilde U_{u_{w0}}\tilde U_{u_w}((g_w)_+)=\tilde U_{w0}((\tilde
U_{u_w}(g_w))_+)=J_{u_{w0}}((c_w-(u_w\ominus c_w))_+)=0.
\end{align*}
By the properties of the Rickart mapping, this implies that $u_{w0}\le
p^C_{\lambda}$ and since also $u_{w0}\le u_w$,  $u_{w0}\le p^C_{\lambda}\wedge u_w$.
The converse follows similarly from
\begin{align*}
J_{p_{\lambda}^C\wedge u_w}((c_w-(u_w\ominus
c_w))_+)=\tilde U_{p_{\lambda}^C\wedge u_w}((\tilde U_{u_w}(g_w))_+)=\tilde
U_{u_w}\tilde U_{p_\lambda^C}((g_w)_+)=0,
\end{align*}
the definition of $u_{w0}$ and properties of the projection cover.

\end{proof}

To obtain the result for values $\lambda\in \mathbb Q\cap [0,1]$ other than dyadic
rationals, it will be  convenient to use an additional assumption on $E$, namely that $E$ is archimedean. The next result shows that this assumption is necessary for an important right continuity property of the binary spectral resolution.

\begin{prop}\label{prop:right_cont} Let $E$ be spectral and let $\{p_{a,\lambda}\}$ be the binary spectral resolution of $a\in E$. Then
\[
p_{a,\lambda}=\bigwedge_{\mu>\lambda} p_{a,\mu}
\]
holds for any $a$ if and only if $E$ is archimedean.

\end{prop}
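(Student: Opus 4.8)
The plan is to push everything into a single C-block and its universal group, where the spectral projections have the explicit Rickart form, and then to read off right continuity as the archimedean property of that group. Fix $a\in E$ and a C-block $C=C(B)$ containing $a$, with universal group $(G,u)$; by Theorem~\ref{thm:spectral} the binary spectral projections are $p_{a,\lambda}=p^C_\lambda=((na-mu)_+)^*$ for dyadic $\lambda=m/n$, and these all lie, together with $a$, in $C$ (indeed $p_{a,\lambda}\in P(a)\subseteq B$). I would first record the transfer of the archimedean property: since $C$ is a sub-effect algebra of $E$, if $E$ is archimedean then so is $C$, and by Lemma~\ref{lemma:app_arch} the universal group of the archimedean MV-effect algebra $C$ is archimedean; conversely any witness of non-archimedeanity of $E$ sits in some C-block and makes its group non-archimedean. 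I would also use the identification of the Rickart map with the complement of the carrier, namely $g^*=(g\dg)'$ for $g\ge 0$, so that in particular $0^*=1$.

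For the \textbf{only if} direction I argue by contraposition. Suppose $E$ is not archimedean and pick $a\neq 0$ with $na\le u$ for all $n$. For any dyadic $\mu>0$, writing $\mu=k/2^{j}$ with $k\ge 1$, we have $2^{j}a\le u\le ku$, hence $(2^{j}a-ku)_+=0$ in $G$ and therefore $p_{a,\mu}=0^*=1$ by \eqref{eq:spectprojs}. On the other hand $p_{a,0}=(a\dg)'\neq 1$, since $a\neq 0$ forces $a\dg\neq 0$. Thus $\bigwedge_{\mu>0}p_{a,\mu}=1\neq p_{a,0}$, so right continuity already fails at $\lambda=0$ for this single element.

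For the \textbf{if} direction, assume $E$ (hence each relevant $G$) archimedean and fix a dyadic $\lambda$. Monotonicity of the resolution (Lemma~\ref{lemma:pw1}(iii), together with the endpoints) shows $p_{a,\lambda}$ is a lower bound of $\{p_{a,\mu}:\mu>\lambda\}$, so it suffices to show every lower bound is $\le p_{a,\lambda}$; replacing an arbitrary lower bound $b$ by its projection cover $b\dg$, it is enough to treat a projection $q$ with $q\le p_{a,\mu}$ for all dyadic $\mu>\lambda$. Set $r:=q\wedge p_{a,\lambda}'$; I must show $r=0$. The defining properties of the Rickart projections give a two-sided ``$\lambda\le a\le\mu$'' bound on the commuting region $e_\mu:=p_{a,\mu}\ominus p_{a,\lambda}$: from $r\le p_{a,\lambda}'=((na-mu)_+)\dg$ one gets $\tilde U_r(na-mu)=\tilde U_r((na-mu)_+)\ge 0$, while from $r\le e_\mu\le p_{a,\mu}$ one gets the matching upper bound $\tilde U_r(n'a-m'u)\le 0$ for $\mu=m'/n'$. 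Choosing $\mu=\lambda+2^{-k}$ and clearing denominators, these combine to $2^{t}x\le u$ for all $t\ge 1$, where $x:=\tilde U_r(na-mu)\ge 0$ is independent of $k$; the archimedean property of $G$ then forces $x=0$. Since $r\le p_{a,\lambda}'$ the negative part of $na-mu$ vanishes under $\tilde U_r$, so $\tilde U_r((na-mu)_+)=0$ while $r\le((na-mu)_+)\dg$, and the carrier argument already used in the proof of Theorem~\ref{thm:spectral} (via Lemma~\ref{lemma:orthog_decomp}) yields $r=0$. Hence $q\le p_{a,\lambda}$ and $p_{a,\lambda}=\bigwedge_{\mu>\lambda}p_{a,\mu}$.

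The main obstacle is the \textbf{if} direction, and within it the squeeze-and-archimedean step followed by the implication $\tilde U_r(na-mu)=0\Rightarrow r=0$: one must extract, from the scalar inequality produced by the bound at level $\lambda$ and the bounds at the shrinking levels $\mu\downarrow\lambda$, a family $2^{t}x\le u$ whose only archimedean solution is $x=0$, and then translate this back into a statement about projections through the carrier. A secondary technical point I would be careful about is that $r$ and the maps $\tilde U_r$ must be interpreted inside the block $C$; here I would reduce the claim to $\bigwedge_{\mu>\lambda}e_\mu=0$, using that $p_{a,\mu},p_{a,\lambda},e_\mu$ all lie in $B\subseteq C$, and invoke sup/inf-closedness of $P$ (Theorem~\ref{th:projcovoml}) so that the infimum, once shown to be majorized by $p_{a,\lambda}$, is genuinely attained there.
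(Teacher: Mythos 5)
Your ``only if'' half is correct and is essentially the paper's own argument in contrapositive form: the paper takes $a$ with $2^na$ defined for all $n$, computes $p_{a,2^{-n}}=((2^na-u)_+)^*=0^*=1$, and lets right continuity at $0$ force $(a\dg)'=\bigwedge_n p_{a,2^{-n}}=1$, i.e.\ $a=0$; your computation is the same. The ``if'' half is where you depart from the paper, and there is a genuine gap. The paper does not reprove right continuity at all: it transfers it from Theorem~\ref{thm:rsd}(iii) (Foulis's group-level result) via Lemma~\ref{lemma:app_arch}, Theorem~\ref{thm:spectral} and density of the dyadics. You instead argue inline, and your argument fails on the treatment of arbitrary lower bounds. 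To identify $\bigwedge_{\mu>\lambda}p_{a,\mu}$ you must show $q\le p_{a,\lambda}$ for \emph{every} projection lower bound $q$, and you reduce this to $r:=q\wedge p_{a,\lambda}'=0$. In an OML this reduction is invalid unless $q$ commutes with $p_{a,\lambda}$: in general $q\wedge p'=0$ does not imply $q\le p$ (take two distinct non-orthogonal rank-one projections in $E(\complex^2)$), and nothing you assume about $q$ yields that commutativity --- compatibility with each $p_{a,\mu}$ does not pass to their infimum.

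The same missing hypothesis undermines the squeeze itself: you apply $\tilde U_r$ to elements $na-mu$ of the universal group $G$ of the C-block $C$, but the compression base of $G$ is indexed only by the block $B=P\cap C$, and an arbitrary lower bound $q$ (hence $r$) need not lie in $B$ nor commute with $a$, so $\tilde U_r(na-mu)$ is not even defined. You flag this as a ``secondary technical point,'' but the proposed repair --- reduce to $\bigwedge_{\mu>\lambda}e_\mu=0$ and invoke sup/inf-closedness of $P$ --- does not fix it: passing from $\bigwedge_\mu e_\mu=0$ back to $\bigwedge_\mu p_{a,\mu}=p_{a,\lambda}$ needs exactly the same commutativity of $q$ with $p_{a,\lambda}$, and establishing $\bigwedge_\mu e_\mu=0$ against all lower bounds in $E$ (not merely those in $B$) is the same unsolved problem; Theorem~\ref{th:projcovoml} only says that infima of projections existing in $E$ are projections, which is not the issue. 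This difficulty --- lower bounds that a priori commute with nothing --- is precisely what the machinery behind Theorem~\ref{thm:rsd} handles: at the group level every projection of the compression base satisfies the Rickart equivalence $q\le g^*\iff g\in C(q),\ J_q(g)=0$, which supplies the needed compatibility for free, and one additionally uses norm-closedness of commutants \cite[Cor.~3.2]{Forc} together with archimedeanity. Your squeeze correctly isolates \emph{why} archimedeanity gives right continuity, but without that compatibility input it is not a proof; citing Theorem~\ref{thm:rsd} as the paper does, or reproducing its proof, is what is required.
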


\begin{proof}  Let  $C$ be any C-block containing $a$ and let $G$ be its universal group. If
$E$ is archimedean, then $C$ is archimedean as well, so that $G$ is archimedean by Lemma
\ref{lemma:app_arch}.  By
Theorem \ref{thm:rsd}, we have for all $\lambda\in \mathbb Q\cap [0,1]$ that
\[
p^C_{a,\lambda}=\bigwedge_{\mu>\lambda} p^C_{a,\mu},
\]
where the infimum is over all $\mu\in \mathbb Q\cap [0,1]$.
Restricting to dyadic rationals, this implies the statement by Theorem \ref{thm:spectral}
and the fact that dyadic rationals are dense in $[0,1]$. Assume conversely that 
the equality holds and let $a\in E$ be such that $2^na$ exists for all $n\in \mathbb N$. Then for all $n$, we have
\[
p_{a,2^{-n}}=p^C_{a,2^{-n}}=(2^na-1)_+^*=1.
\]
Consequently,
\[
(a\dg)'=p_0=\bigwedge_n p_{a,2^{-n}}=1,
\]
so that $a=0$. Hence $E$ must be archimedean.
\end{proof}

\begin{theorem}\label{thm:spectral_archimedean}  Let $E$ be an archimedean spectral effect algebra and $a\in E$. Then $p^C_{a,\lambda}$ does not depend on the C-block $C$ for any $\lambda\in [0,1]\cap \mathbb Q$, in particular, we have
\[
p^C_{a,\lambda}=\bigwedge_{\mu>\lambda} p_{a,\mu}=:p_{a,\lambda},
\]
here the infimum is over all dyadic rationals $\mu>\lambda$.

\end{theorem}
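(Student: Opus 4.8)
The plan is to split the statement into the case of a dyadic $\lambda$, which is already essentially settled, and the case of a non-dyadic rational $\lambda$, which carries the real content, and then to reduce the latter to the rational spectral resolution inside a single C-block. For dyadic $\lambda$ there is nothing new to do: Theorem \ref{thm:spectral} gives $p^C_{a,\lambda}=p_{a,\lambda}$ independently of $C$, while Proposition \ref{prop:right_cont}, applicable because $E$ is archimedean, yields $p_{a,\lambda}=\bigwedge_{\mu>\lambda}p_{a,\mu}$ with the infimum over dyadic $\mu$. So both assertions hold on the dyadic rationals, and I would treat the non-dyadic case next.

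Fix a non-dyadic rational $\lambda$ and a C-block $C=C(B)$ with $a\in C$, and let $G$ be its universal group. Since $E$ is archimedean so is $C$, hence $G$ is archimedean by Lemma \ref{lemma:app_arch}. By the right-continuity of the rational spectral resolution in an archimedean group (Theorem \ref{thm:rsd}) I have $p^C_{a,\lambda}=\bigwedge_{\nu>\lambda}p^C_{a,\nu}$, the infimum taken over all rational $\nu>\lambda$ in the projection lattice $B$ of $G$. The map $\nu\mapsto p^C_{a,\nu}$ is nondecreasing (readily checked from \eqref{eq:spectprojs}), so for every rational $\nu>\lambda$ density of the dyadics furnishes a dyadic $\mu$ with $\lambda<\mu<\nu$ and $p^C_{a,\mu}\le p^C_{a,\nu}$; therefore the infimum over dyadic $\mu>\lambda$ coincides with the one over all rational $\nu>\lambda$. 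Replacing the dyadic projections by the $C$-free ones via Theorem \ref{thm:spectral}, this gives, inside $C$, the identity $p^C_{a,\lambda}=\bigwedge_{\mu>\lambda}p_{a,\mu}$, the infimum computed in $B$ over dyadic $\mu$.

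The crux is to promote this identity from $C$ to $E$: I must show that the same infimum, now taken in $E$, exists and equals $p:=p^C_{a,\lambda}\in B\subseteq E_S=P$, for then the right-hand side no longer refers to $C$ and the claimed independence of the C-block is immediate. That $p$ is a lower bound of $\{p_{a,\mu}\}$ in $E$ is clear. For the reverse inequality, let $b\in E$ satisfy $b\le p_{a,\mu}$ for every dyadic $\mu>\lambda$; by Lemma \ref{le:comE} each inequality gives $b\in C(p_{a,\mu})$ and $b\wedge p'_{a,\mu}=0$. All the $p_{a,\mu}$ lie in the Boolean subalgebra $P(a)$ (Lemma \ref{lemma:uw}), so $\{p'_{a,\mu}\}$ is a compatible family whose supremum in $C$ is $p'$; since $P$ is sup/inf-closed in $E$ (Theorem \ref{th:projcovoml}), this should also be its supremum in $P$. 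Granting that $b$ then commutes with $p'=\bigvee_\mu p'_{a,\mu}$ and that the distributive law holds for mutually commuting elements, I obtain $b\wedge p'=\bigvee_\mu(b\wedge p'_{a,\mu})=0$, whence $J_{p'}(b)=b\wedge p'=0$ and hence $b\le p$. This would establish $p=\bigwedge_{\mu>\lambda}p_{a,\mu}$ in $E$ and finish the argument.

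The step I expect to be genuinely delicate is precisely this last transfer: verifying that the supremum $\bigvee_\mu p'_{a,\mu}$ and the relevant (infinite) distributive law survive the passage from the block $B$ to the ambient regular OML $P$, and that commutation of $b$ with each $p'_{a,\mu}$ forces commutation with their join. This is exactly where the sup/inf-closedness of $P$ in $E$, the compatibility of the whole family inside a single block, and the characterizations of commutativity in Lemma \ref{le:comE} must be combined carefully, while scrupulously avoiding any appeal to $\sigma$-completeness, which is \emph{not} assumed here.
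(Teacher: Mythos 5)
Your first two paragraphs are, in substance, the paper's entire proof: archimedeanity of $E$ passes to any C-block $C$ containing $a$ and hence to its universal group $G$ (Lemma \ref{lemma:app_arch}); Theorem \ref{thm:rsd} then gives $p^C_{a,\lambda}=\bigwedge_{\nu>\lambda}p^C_{a,\nu}$ over all rationals $\nu$; monotonicity and density of the dyadics let one restrict the infimum to dyadic $\mu$; and Theorem \ref{thm:spectral} replaces $p^C_{a,\mu}$ by the $C$-free projections $p_{a,\mu}$. The paper stops exactly there: it implicitly reads the infimum in the statement as the one furnished by Theorem \ref{thm:rsd}, i.e.\ computed among the projections (resp.\ elements) of the C-block at hand, and never discusses promoting it to an infimum in $E$ or in $P$.

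The step you call the crux is therefore a genuine subtlety that the paper passes over in silence, and you are right that it is not cosmetic: without it, neither the displayed formula read with the infimum in $E$, nor even the independence of $C$ (two infima of the same family computed in two different blocks $B_1,B_2$ of an OML need not coincide) is literally established. However, your repair fails at precisely the two points you flag. First, Theorem \ref{th:projcovoml} is applied in the wrong direction: sup/inf-closedness of $P$ in $E$ says that \emph{if} a supremum or infimum of a family of projections exists \emph{in $E$}, then it lies in $P$; it gives no way to promote a supremum computed inside $C$ (or inside $B$, or in the universal group of $C$) to a supremum in $P$ or in $E$ -- that implication is false in general posets and is exactly what is at issue. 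Second, the commutation of $b$ with $\bigvee_\mu p'_{a,\mu}$ and the infinite distributive law $b\wedge\bigvee_\mu p'_{a,\mu}=\bigvee_\mu\bigl(b\wedge p'_{a,\mu}\bigr)$ have no justification here: $E$ is not a lattice, is not assumed $\sigma$-orthocomplete, and even in an OML compatibility and distributivity do not pass to infinite suprema without completeness-type hypotheses; these are precisely the ``continuity'' properties whose absence forces the paper to assume a separating set of states in Theorem \ref{thm:separating}. What can be salvaged cheaply: a lower bound $b\in E$ reduces to the projection lower bound $b\dg$ (the projection cover is monotone and fixes projections), and projection lower bounds lying in $B$ are already covered by Theorem \ref{thm:rsd}; the irreducible difficulty is a projection $q\in P\setminus B$ below all $p_{a,\mu}$, for which $J_q$ does not act on the universal group of $C$ and the Rickart characterization $q\le x^*\iff x\in C_G(q),\ J_q(x)=0$ is unavailable. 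Neither your argument nor the paper's resolves that case, so either state the theorem with the infimum understood as in Theorem \ref{thm:rsd} (and stop after your second paragraph, as the paper does), or supply a new argument for this last case.
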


\begin{proof} As in the proof of Proposition \ref{prop:right_cont}, we have  for  that
\[
p^C_{a,\lambda}=\bigwedge_{\lambda<\mu} p^C_{a,\mu}
\]
for any C-block $C$ containing $a$ and all $\lambda\in \mathbb Q\cap [0,1]$.  The
statement now follows from Theorem \ref{thm:spectral} and the fact that dyadic rationals  are dense in $[0,1]$.

\end{proof}

The family of projections $\{p_{a,\lambda}\}_{\lambda\in \mathbb Q\cap[0,1]}$ obtained in the
above theorem will be called the \emph{rational spectral resolution} of $a$ in $E$.

We introduce  partially defined maps $f_0,f_1$ on an effect algebra $E$, such that
\[
f_0(b)=2b,\qquad
f_1(b)=(2b')'
\]
defined if the respective elements exist, that is, $f_0(b)$ exists if $b\le b'$ and
$f_1(b)$ exists if $b'\le b$.  For a binary string $w=w_1\dots
w_n$ we define $f_w:=f_{w_n}\circ\dots\circ f_{w_1}$, with the understanding that
$f_\varepsilon$ is the identity map and if  $f_w(b)$ exists then
$f_{\tilde w}(b)$ must exist for all prefixes $\tilde w$ of $w$. It is easily seen that
for $p\in P$, $J_p(f_w(a))=J_p(f_w(J_p(a)))$ if $f_w(a)$ exists.

We have the following characterization of the rational spectral resolution.

\begin{theorem}\label{thm:spectral_char} Let $E$ be an archimedean spectral effect algebra.
Then the rational spectral resolution of $a\in E$ is the unique family
$\{p_\lambda\}_{\lambda\in \mathbb Q\cap[0,1]}$ such that
\begin{enumerate}
\item[(i)] $p_\lambda\in PC(a)$ for all $\lambda$,
\item[(ii)] $p_0\le a'$, $p_1=1$ and if $\lambda\le \mu$ then $p_\lambda\le p_\mu$,
\item[(iii)] $\bigwedge_{\lambda<\mu} p_\mu=p_\lambda$ for all $\lambda$,
\item[(iv)]  for any $w\in \mathcal B$ put $u_w:=p_{\lambda(w+1)}\wedge p_{\lambda(w)}'$
and let $f_w$ be defined as above for the effect algebra $[0,u_w]$, then   $f_w(J_{u_w}(a))$ exists.

\end{enumerate}

\end{theorem}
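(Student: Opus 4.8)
The plan is to prove the two halves separately: first that the rational spectral resolution $\{p_{a,\lambda}\}$ constructed above satisfies (i)--(iv), and then that (i)--(iv) determine it uniquely. For existence, properties (i)--(iii) are essentially already recorded. Each $p_{a,\lambda}$ is an orthogonal sum of elements $u_{\tilde w}$ and $(a\dg)'$, all lying in the Boolean algebra $P(a)$ (Lemma \ref{lemma:uw}(i) and Proposition \ref{prop:pc_commutant}), giving (i); (ii) combines $p_0=(a\dg)'\le a'$, $p_1=1$ and the monotonicity of Lemma \ref{lemma:pw1}(iii) extended by Theorem \ref{thm:spectral_archimedean}; and (iii) is exactly the right-continuity in Theorem \ref{thm:spectral_archimedean}. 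For (iv), I would first note that the element $u_w$ defined there from $\{p_{a,\lambda}\}$ agrees with the $u_w$ of the construction, by Lemma \ref{lemma:pw1}(ii). Passing to a C-block $C\ni a$ with universal group $G$, the maps $f_0,f_1$ correspond on $[0,u_w]$ to the affine maps $x\mapsto 2x$ and $x\mapsto 2x-u_w$, so $f_w$ corresponds to $x\mapsto 2^{l(w)}x-k(w)u_w$; hence $f_w(J_{u_w}(a))=\tilde U_{u_w}(2^{l(w)}a-k(w)1)=c_w$ by Lemma \ref{lemma:cw}, and since $c_w\in E$ by construction, (iv) holds.

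For uniqueness, let $\{p_\lambda\}$ satisfy (i)--(iv). By (i) each $p_\lambda$ commutes with $a$, and by (ii) the $p_\lambda$ form a chain, hence are pairwise compatible; so $\{a\}\cup\{p_\lambda\}$ is a set of mutually commuting elements and by Theorem \ref{thm:cblock} lies in a single C-block $C$, with universal group $G$ and $P\cap C=B$. All computations take place in $G$, where for $p\in B$ one has $J_p|_C=U_p$ with group extension $\tilde U_p$ (Theorem \ref{th:bcompar}). Writing $v_w:=p_{\lambda(w+1)}\wedge p_{\lambda(w)}'$, property (iv) together with the description of $f_w$ gives $0\le 2^{l(w)}\tilde U_{v_w}(a)-k(w)v_w\le v_w$, that is,
\[
\lambda(w)\,v_w\le \tilde U_{v_w}(a)\le \lambda(w+1)\,v_w .
\]
I recall from \eqref{eq:spectprojs} and the properties of the Rickart mapping (Appendix \ref{app:spec}) that for rational $\lambda=m/n$ the group spectral projection $p^C_\lambda=((na-mu)_+)^*$ is the largest projection $p\in B$ with $\tilde U_p(a)\le\lambda p$, and by Theorem \ref{thm:spectral} it equals $p_{a,\lambda}$. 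So it suffices to show each $p_\lambda$ is this same largest projection.

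The core argument has three steps. For the endpoint, (ii) gives $p_0\le a'$, hence $a\dg\le p_0'$ and $p_0\le(a\dg)'$; conversely, since $a$ vanishes on $(a\dg)'$ while $\tilde U_{v_w}(a)\ge\lambda(w)v_w>0$ for every length-$n$ string $w\ne 0^n$, we get $(a\dg)'\perp v_w$ for all such $w$, so $(a\dg)'\le p_0\oplus v_{0^n}=p_{2^{-n}}$ for all $n$, and (iii) yields $(a\dg)'\le p_0$; thus $p_0=(a\dg)'=p_{a,0}$. Next, telescoping the chain gives $p_\lambda=p_0\oplus\bigoplus_{w<w_\lambda}v_w$ for dyadic $\lambda=\lambda(w_\lambda)$, $l(w_\lambda)=n$; summing the displayed upper bounds (and using $\tilde U_{p_0}(a)=0$) yields $\tilde U_{p_\lambda}(a)\le\lambda p_\lambda$, so $p_\lambda\le p^C_\lambda$. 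For maximality, let $p\in B$ satisfy $\tilde U_p(a)\le\lambda p$ and set $q:=p\wedge p_\lambda'$; from $q\le p$ we get $\tilde U_q(a)\le\lambda q$, while $q\le p_\lambda'=\bigoplus_{w\ge w_\lambda}v_w$ and the lower bounds give $\tilde U_q(a)\ge\lambda q$, whence $\tilde U_q(a)=\lambda q$. Then for any dyadic $\mu>\lambda$ the lower bounds on the cells of $p_\mu'$ force $(\mu-\lambda)(q\wedge p_\mu')\le 0$, so $q\wedge p_\mu'=0$, i.e. $q\le p_\mu$; by (iii), $q\le p_\lambda$, and therefore $q\le p_\lambda\wedge p_\lambda'=0$ and $p\le p_\lambda$. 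Thus $p_\lambda=p^C_\lambda=p_{a,\lambda}$ for all dyadic $\lambda$, and as both families are right-continuous by (iii) and the dyadics are dense, the equality extends to all $\lambda\in\mathbb Q\cap[0,1]$.

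I expect the main obstacle to be the maximality step, namely eliminating the residual projection $q$ on which $a$ equals $\lambda$ exactly: the inequalities coming from (iv) only pin down $p_\lambda$ up to the ambiguity at the dyadic value $\lambda$, and it is precisely right-continuity (iii), applied through the strict inequality $\mu>\lambda$, that removes this ambiguity. Some care is also needed to justify translating the existence of $f_w(J_{u_w}(a))$ into the two-sided group inequality, which relies on the nesting of the dyadic intervals indexed by the prefixes of $w$ so that the intermediate doublings are automatically defined.
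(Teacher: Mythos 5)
Your proposal is correct and follows essentially the same route as the paper: pass to the universal group of a C-block containing $a$ and all the $p_\lambda$, translate condition (iv) into two-sided group inequalities, and then identify each $p_\lambda$ with the Rickart-mapping spectral projection $p^C_{a,\lambda}$ using right-continuity (iii) and the density of dyadic rationals. The only differences are organizational --- the paper converts (iv) into the cumulative inequalities (iv') at $p_{\lambda(w)}$ and $p_{\lambda(w)}'$ and then finishes by citing the argument of Theorem \ref{thm:rsd}, whereas you work cell-by-cell on the $v_w$ and inline an equivalent maximality argument (and you also spell out the existence half, which the paper leaves implicit).
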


\begin{proof} Assume $\{p_\lambda\}_{\lambda\in \mathbb Q\cap[0,1]}$ is a family of
projections satisfying (i)-(iv). By (ii), all $p_\lambda$ commute, so that by (i) there is
some C-block $C$ containing $a$ and all $p_\lambda$, with universal group $G$.  We will show that (iv) is equivalent
to
\begin{enumerate}
\item[(iv')]  $2^{l(w)} J_{p_{\lambda(w)}}(a)\le k(w)p_{\lambda(w)}$ and
 $2^{l(w)}J_{p_{\lambda(w)}'}(a)\ge k(w)p'_{\lambda(w)}$, for all $w\in
\mathcal B$, computed in $G$.
\end{enumerate}
To show this, notice that for $w\in \mathcal B$, $i\in \{0,1\}$ and $b\in [0,u_w]$,  $f_i(b)$ exists in $[0,u_w]$ if and only if  $2b-iu_w$ computed in
$G$ belongs to $C$ and then $f_i(b)=2b-iu_w$.
We have by (ii) and (iv) that $u_{wi}\le u_w$ and in $[0,u_{wi}]$
\[
f_{wi}(J_{u_{wi}}(a))=f_i(J_{u_{wi}}(f_w(J_{u_w}(a)))).
\]
Using the fact that  $k(wi)=2k(w)+i$, one can easily prove by induction that (iv) is
equivalent to
\[
\tilde U_{u_w}(2^{l(w)}a-k(w)1)\in C,\qquad \forall w\in \mathcal B,
\]
 here $u_w$ is as in (iv) and $\tilde U_{u_w}$ is the compression extending $J_{u_w}$ in the lattice group  $G$.
By definition of $u_w$, we have
\[
p_{\lambda(w)}=p_0\oplus \bigoplus_{\tilde w<w} u_{\tilde w},\quad
p_{\lambda(w)}'=\bigoplus_{\tilde w\ge w} u_{\tilde w},
\]
where we sum over strings of length $l(\tilde w)=l(w)$. Since $k(\tilde w)$ and $k(w)$ are
integers such that $k(\tilde w)< k(w)$ if $\tilde w<w$, we have in this case
\[
\tilde U_{u_{\tilde w}}(2^{l(w)}a-k(w)1)\le \tilde U_{u_{\tilde w}}(2^{l(\tilde
w)}a-k(\tilde w)1-1)\le 0.
\]
Since also $\tilde U_{p_0}(2^{l(w)}a-k(w)1)=-k(w)p_0\le 0$ by (ii), we obtain
\[
\tilde U_{p_{\lambda(w)}}(2^{l(w)}a-k(w)1) = \tilde U_{p_0}(2^{l(w)}a-k(w)1)  +\sum_{\tilde
w<w} \tilde U_{u_{\tilde w}}(2^{l(w)}a-k(w)1) \le 0.
\]
Similarly, $\tilde U_{p_{\lambda(w)}'}(2^{l(w)}a-k(w)1)\ge 0$, this proves (iv').
Conversely, if (iv') holds, then
\[
\tilde U_{u_w}(2^{l(w)}a-k(w)1) =\tilde U_{p_{\lambda(w+1)}}\tilde
U_{p_{\lambda(w)}'}(2^{l(w)}a-k(w)1)\ge 0
\]
and also
\[
\tilde U_{u_w}(2^{l(w)}a-k(w)1)-u_w =\tilde U_{p'_{\lambda(w)}}\tilde
U_{p_{\lambda(w+1)}}(2^{l(w)}a-k(w+1)1) \le 0,
\]
so that $\tilde U_{u_w}(2^{l(w)}a-k(w)1)\in C$ for all $w\in \mathcal B$, this is
equivalent to (iv).
The proof is now finished similarly as the proof of Theorem \ref{thm:rsd}.

\end{proof}

\subsection{Further properties of the rational spectral resolution}

We have obtained a well defined binary spectral resolution  of elements in a spectral effect
algebra $E$, which can be extended to a rational spectral resolution if $E$ is
archimedean.\footnote{In fact, archimedeanity of $E$ is not needed, as shown in the
previous version of this work \cite{JenPul_previous}, but the constructions are more
complicated.}
There are some properties of the spectral resolution in $E(\mathcal H)$ that we would
like  to hold also in this general case. Specifically, we would expect that
\begin{enumerate}
\item[(a)] the element $a$ is uniquely determined by its spectral resolution,
\item[(b)] for any projection $q\in P$, we have $a\in C(q)$ iff $p_{a,\lambda}\in C(q)$ for all $\lambda\in \mathbb Q$.
\end{enumerate}
At present, we can show these properties under the assumption of existence of a separating
family of states on $E$.

\begin{prop}\label{prop:spectral_states} Let $E$ be spectral, $a\in E$ and  let $\{c_w\}$ and
$\{u_w\}$ be the families associated with the binary spectral resolution
$\{p_{a,\lambda(w)}\}_{w\in \mathcal B}$ of $a$. Then for
any state  $s$ on $E$:
\begin{enumerate}
\item $s(a)=\lim_{n\to \infty} \sum_{l(w)=n}\lambda(w)s(u_w)$ and $s(a)=0$
if and only if $s(u_w)=0$ for all $w\in \mathcal B$ such that $\lambda(w)\ne 0$.

\item If $E$ is archimedean and $s$ is $\sigma$-additive, then $s(a)=0$ if and only if $s(a\dg)=0$.

\item If $b\in E$ is  such that $p_{a,\lambda(w)}=p_{b,\lambda(w)}$ for all $w\in \mathcal
B$, then
$s(a)=s(b)$.
\end{enumerate}

\end{prop}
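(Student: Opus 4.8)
The plan is to reduce all three statements to a single identity that expresses $s(a)$ as a weighted sum of the numbers $s(u_w)$ with a controllable error term, and then to read off each claim. Fix a C-block $C$ containing $a$ with universal group $G$; the restriction $s|_C$ is a state on the MV-effect algebra $C$ and hence extends to a state $\bar s$ on $(G,u)$ agreeing with $s$ on the elements of $E$ lying in $C$. First I would record the two structural facts from the construction: by Lemma~\ref{lemma:uw}(iii) the projections $\{u_w:l(w)=n\}$ are orthogonal with $\bigoplus_{l(w)=n}u_w=a\dg$, and since $a\le a\dg$ this gives the orthosum decomposition $a=\bigoplus_{l(w)=n}J_{u_w}(a)$ in $E$, whence $s(a)=\sum_{l(w)=n}s(J_{u_w}(a))$ by additivity. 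Rewriting Lemma~\ref{lemma:cw} as $2^{n}J_{u_w}(a)=c_w+k(w)u_w$ in $G$ and applying $\bar s$ yields $s(J_{u_w}(a))=\lambda(w)s(u_w)+2^{-n}s(c_w)$. Summing over $l(w)=n$ gives
\[
s(a)=\sum_{l(w)=n}\lambda(w)s(u_w)+2^{-n}\sum_{l(w)=n}s(c_w),
\]
and since $0\le c_w\le u_w$ the error term is bounded by $2^{-n}\sum_{l(w)=n}s(u_w)=2^{-n}s(a\dg)\le 2^{-n}$. Letting $n\to\infty$ produces the limit formula in (1).

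For the equivalence in (1), note that for each fixed $n$ the nonnegativity of the error term gives $\sum_{l(w)=n}\lambda(w)s(u_w)\le s(a)$; hence $s(a)=0$ forces every nonnegative summand $\lambda(w)s(u_w)$ to vanish, i.e.\ $s(u_w)=0$ whenever $\lambda(w)\ne 0$, while the converse is immediate from the limit formula. Statement (3) then follows with essentially no further work: by Lemma~\ref{lemma:pw1}(ii) we have $u_w=p_{\lambda(w+1)}\wedge p_{\lambda(w)}'$, so each $u_w$ is determined by the spectral projections alone; the hypothesis $p_{a,\lambda(w)}=p_{b,\lambda(w)}$ for all $w$ therefore gives $u_w^{(a)}=u_w^{(b)}$, and the limit formula applied to $a$ and to $b$ yields $s(a)=s(b)$.

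The real work is in (2). One direction is trivial, since $a\le a\dg$ forces $s(a)\le s(a\dg)$. For the converse, assume $s(a)=0$. For each fixed length $n$ the all-zero string $0^{n}$ is the only one with $\lambda(w)=0$, so the equivalence in (1) collapses the finite sum $s(a\dg)=\sum_{l(w)=n}s(u_w)$ to $s(a\dg)=s(u_{0^{n}})$; thus the sequence $s(u_{0^{n}})$ is \emph{constant}, equal to $s(a\dg)$. The point is that this sequence nevertheless tends to $0$. Here I would use that $u_{0^{n}}=p_{a,2^{-n}}\wedge a\dg$ is decreasing and that $\bigwedge_n u_{0^{n}}=0$ in $E$: any lower bound lies below $\bigwedge_n p_{a,2^{-n}}=p_{a,0}=(a\dg)'$ by the right-continuity of the rational spectral resolution (Theorem~\ref{thm:spectral_archimedean}, where archimedeanity is used) and below $a\dg$, hence below $a\dg\wedge(a\dg)'=0$ by sharpness of $a\dg$. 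Consequently the differences $d_n:=u_{0^{n}}\ominus u_{0^{n+1}}$ form an orthogonal sequence whose orthosum is $u_{0^{1}}$, and $\sigma$-additivity of $s$ gives $s(u_{0^{1}})=\sum_n s(d_n)=\lim_N\big(s(u_{0^{1}})-s(u_{0^{N+1}})\big)$, forcing $\lim_N s(u_{0^{N+1}})=0$ and hence $s(a\dg)=0$.

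The step I expect to be most delicate is verifying that $\bigwedge_n u_{0^{n}}=0$ genuinely holds as an infimum in $E$ (not merely in the OML $P$) and that the orthosum $\bigoplus_n d_n=u_{0^{1}}$ exists, since these are exactly the places where archimedeanity, right-continuity and sharpness of $a\dg$ must be combined. Once this is established, the passage to the limit via $\sigma$-additivity is routine, and the crucial conceptual observation—that $s(u_{0^{n}})$ is constant so that only a genuine continuity argument, rather than monotonicity, can pin down its value—drives the whole of (2).
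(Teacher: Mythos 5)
Your treatment of (1) and (3) is correct and is essentially the paper's own proof: the same identity $2^nJ_{u_w}(a)=c_w+k(w)u_w$ coming from Lemma \ref{lemma:cw}, summed over $l(w)=n$ and evaluated on the extension of $s|_C$ to the universal group of a C-block, with the error term controlled by $0\le c_w\le u_w$ and $\bigoplus_{l(w)=n}u_w=a\dg$; statement (3) is then read off from the limit formula via $u_w=p_{\lambda(w+1)}\wedge p_{\lambda(w)}'$ (Lemma \ref{lemma:pw1}), exactly as in the paper.

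In (2), however, there is a genuine gap, and it sits precisely at the step you flagged but did not close: the claim that the orthosum $\bigoplus_n d_n$ of $d_n=u_{0^n}\ominus u_{0^{n+1}}$ exists and equals $u_{0^1}$. Your verification that $\bigwedge_n u_{0^n}=0$ as an infimum in $E$ is fine (any lower bound lies under both $a\dg$ and $(a\dg)'$, and $a\dg$ is sharp). But that infimum statement does not yield the orthosum statement, and the word ``Consequently'' hides a false inference. The orthosum is by definition the supremum \emph{in $E$} of the partial sums $u_{0^1}\ominus u_{0^{N+1}}$. The map $x\mapsto u_{0^1}\ominus x$ is an order anti-automorphism only of the interval $[0,u_{0^1}]$, so from $\bigwedge_N u_{0^{N+1}}=0$ you obtain that $u_{0^1}$ is the supremum of the partial sums \emph{within} $[0,u_{0^1}]$; an arbitrary upper bound $e\in E$ of the partial sums need not dominate $u_{0^1}$, because the inequality $u_{0^1}\ominus u_{0^{N+1}}\le e$ cannot be shifted to $u_{0^1}\le e\oplus u_{0^{N+1}}$ unless $e\perp u_{0^{N+1}}$, which is not available. (One can check that \emph{if} the $E$-supremum exists it must equal $u_{0^1}$; what is missing is its existence, and without it $\sigma$-additivity cannot be invoked.)

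The repair is exactly the paper's route, and it is short: anchor the telescoping at $a\dg$ rather than at $u_{0^1}$, i.e.\ work with global orthosupplements. Since $p_{a,2^{-n}}\ge p_{a,0}=(a\dg)'$, one has $p_{a,2^{-n}}'\le a\dg$ and $a\dg\ominus u_{0^n}=a\dg\wedge p_{a,2^{-n}}'=p_{a,2^{-n}}'$, so your constancy observation is the same as saying $s(p_{a,2^{-n}}')=0$ for all $n$. By right-continuity (Proposition \ref{prop:right_cont}, where archimedeanity enters), $\bigwedge_n p_{a,2^{-n}}=p_{a,0}$ holds as an infimum in $E$, and since $x\mapsto x'$ is an order anti-automorphism of all of $E$ (not merely of an interval), $\bigvee_n p_{a,2^{-n}}'=(p_{a,0})'=a\dg$ holds as a supremum in $E$. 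This supremum is precisely the orthosum of $p_{a,1/2}'$ and the increments $p_{a,2^{-(n+1)}}'\ominus p_{a,2^{-n}}'$, so $\sigma$-additivity applies and gives $s(a\dg)=\lim_n s(p_{a,2^{-n}}')=0$.
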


\begin{proof}
Let $C$ be any C-block containing $a$ and let $G$ be the universal group. Since $c_w\le
u_w$, $c_w$ with  $w\in \{0,1\}^n$ are summable in $E$ by Lemma \ref{lemma:uw} and using Lemma
\ref{lemma:cw} we obtain in $G$:
\begin{equation}\label{eq:cn}
c_n:=\bigoplus_{l(w)=n} c_w=\sum_{l(w)=n} \tilde U_{u_w}(2^{n}a-k(w)1) =
2^{n}a-\sum_{l(w)=n} k(w)u_w.
\end{equation}
Note that the restriction  $s|_{C}$ is a state on $C$ and hence extends to a state on
$G$, so that
\[
s(c_n)=2^ns(a)-\sum_{l(w)=n}k(w)s(u_w).
\]
The first part of (i) follows from $c_n\in E$, so that $0\le s(c_n)\le 1$. This also
implies that  if $s(a)=0$, then
 we must have $s(u_w)=0$ for all $w$ such that $\lambda(w)\ne 0$. The converse follows
 from the first part of (i).

For (ii), assume $s(a)=0$. Let $0^n$ denote a string of zeros of length $n$, then by (i)
and Lemma \ref{lemma:uw},
\[
0=s(\bigoplus_{l(w)=n, w\ne 0^n} u_w)=s(a\dg\wedge u'_{0^n})=s(a\dg\wedge
p_{2^{-n}}'),
\]
here we used the fact that $u_w=p_{\lambda(w+1)}\wedge p_{\lambda(w)}'$ and
$\lambda(0^{n}+1)=\lambda(0^{n-1}1)=2^{-n}$. Since $E$ is archimedean and $s$ is
$\sigma$-additive, we get
 \[
0=s(a\dg\wedge(\bigvee_{n} p'_{2^{-n}}))=s(a\dg\wedge p'_{0})=s(a\dg).
\]
The converse is obvious from $a\le a\dg$.
The statement (iii) is immediate from (i).

\end{proof}

\begin{prop}\label{prop:Jcommuting}
Let $q\in P$ be a projection commuting with all binary spectral projections of $a$. Then
for any state $s$ on $E$,
\[
s(a)=s(J_q(a)+J_{q'}(a)).
\]
\end{prop}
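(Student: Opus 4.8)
The plan is to reduce everything to the limiting state formula of Proposition~\ref{prop:spectral_states}(i). Write $\{u_w\}_{w\in\mathcal B}$ for the family associated with the binary spectral resolution of $a$, so that $u_w=p_{a,\lambda(w+1)}\wedge p_{a,\lambda(w)}'$. Since the $p_{a,\lambda(w)}$ form a chain and $q$ commutes with each of them, the projections $q,\,p_{a,\lambda(w)}$ are pairwise Mackey compatible; as $P$ is a regular OMP this gives $q\leftrightarrow u_w$ for every $w$. Because $J_q(a)\le q$ and $J_{q'}(a)\le q'$ with $q\perp q'$, the sum $J_q(a)\oplus J_{q'}(a)$ exists and $s(J_q(a)\oplus J_{q'}(a))=s(J_q(a))+s(J_{q'}(a))$, so it suffices to prove $s(J_q(a))=\lim_{n}\sum_{l(w)=n}\lambda(w)s(q\wedge u_w)$ together with the analogous identity for $q'$: adding them and using $s(q\wedge u_w)+s(q'\wedge u_w)=s(u_w)$ (valid since $q\leftrightarrow u_w$) recovers $s(a)$ through Proposition~\ref{prop:spectral_states}(i).

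First I would record an exact finite identity. From $a\le a\dg$ and $q\leftrightarrow a\dg$ we get $J_q(a)\le q\wedge a\dg=\bigoplus_{l(w)=n}(q\wedge u_w)$, the decomposition following from Lemma~\ref{lemma:compatible_projs} and $\bigoplus_{l(w)=n}u_w=a\dg$ (Lemma~\ref{lemma:uw}). Since $J_q(a)$ commutes with $q$ and with each $u_w$, regularity of $P$ yields $J_q(a)\in C(q\wedge u_w)$ for all $w$; decomposing $J_q(a)$ over this orthogonal family inside a common C-block, and using $J_{q\wedge u_w}\circ J_q=J_{q\wedge u_w}$ (the Corollary to Proposition~\ref{prop:CB_orig}, as $q\wedge u_w\le q$), gives the exact relation $J_q(a)=\bigoplus_{l(w)=n}J_{q\wedge u_w}(a)$, hence $s(J_q(a))=\sum_{l(w)=n}s(J_{q\wedge u_w}(a))$ for every $n$.

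The hard part will be the estimate $\lambda(w)s(q\wedge u_w)\le s(J_{q\wedge u_w}(a))\le(\lambda(w)+2^{-n})s(q\wedge u_w)$. The obstacle is that $q$ need not commute with $a$ itself, so $a$ and $q$ may lie in no common C-block and one cannot apply $J_{q\wedge u_w}$ inside the universal group of the C-block of $a$. To get around this I would transfer the group identity \eqref{eq:cn} from the C-block of $a$ down to $r:=q\wedge u_w$. Working in the effect algebra $[0,u_w]$ with its inherited base, $c_w=f_w(J_{u_w}(a))$ (by construction, cf.\ Lemma~\ref{lemma:cw}); applying the compatibility $J_r\circ f_w=J_r\circ f_w\circ J_r$ noted before Theorem~\ref{thm:spectral_char}, together with $J_r J_{u_w}=J_r$, gives $J_r(c_w)=J_r(f_w(J_r(a)))$. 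Evaluated in the lattice group $G_r$ of a C-block of $[0,r]$ containing $J_r(a)$ this equals $2^{l(w)}J_r(a)-k(w)r$, so that $2^{l(w)}J_r(a)=J_r(c_w)+k(w)r$ with $0\le J_r(c_w)\le r$. Extending $s$ to a state on $G_r$ and dividing by $2^{l(w)}$ yields the desired sandwich, since $\lambda(w)=k(w)/2^{l(w)}$.

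Finally I would assemble the pieces. Summing the sandwich over $l(w)=n$ and using $\sum_{l(w)=n}s(q\wedge u_w)=s(q\wedge a\dg)\le 1$ gives $\bigl|s(J_q(a))-\sum_{l(w)=n}\lambda(w)s(q\wedge u_w)\bigr|\le 2^{-n}$, whence $s(J_q(a))=\lim_n\sum_{l(w)=n}\lambda(w)s(q\wedge u_w)$. The same argument applied to $q'$ gives $s(J_{q'}(a))=\lim_n\sum_{l(w)=n}\lambda(w)s(q'\wedge u_w)$, and adding the two limits produces $\lim_n\sum_{l(w)=n}\lambda(w)s(u_w)=s(a)$ by Proposition~\ref{prop:spectral_states}(i), completing the proof.
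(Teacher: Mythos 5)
Your proposal is correct in substance, but it takes a genuinely different route from the paper's proof. The paper works with the single additive map $J=J_q+J_{q'}$: it shows that $J(a)$ commutes with every element $r$ of the Boolean subalgebra generated by the spectral projections, places $J(a)$ and all such $r$ in one C-block $\tilde C$ of $E$, and proves by induction along the splitting construction that $J(c_w)\in\tilde C$ with the exact group identity $J(c_n)=2^nJ(a)-\sum_{l(w)=n}k(w)u_w$ in the universal group of $\tilde C$; the conclusion then follows as in Proposition \ref{prop:spectral_states}(i). You instead split $J_q(a)=\bigoplus_{l(w)=n}J_{q\wedge u_w}(a)$ and prove the two-sided estimate $\lambda(w)s(q\wedge u_w)\le s(J_{q\wedge u_w}(a))\le(\lambda(w)+2^{-n})s(q\wedge u_w)$ by transporting the dyadic arithmetic (the maps $f_w$) into the interval $[0,q\wedge u_w]$, then sum and pass to the limit. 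Both proofs must dodge the same trap, which the paper makes explicit: one cannot extend $J_q$ (or $J$) to the universal group of a C-block containing $a$. The paper dodges it by an exact induction inside a single C-block of $E$ containing $J(a)$; you dodge it by applying compressions only to elements of $E$ and doing the group computations downstairs, in C-blocks of $[0,r]$ with $r=q\wedge u_w$ (whose spectrality comes from Lemma \ref{lemma:subinterval}). Your version is more local and quantitative -- it yields $s(J_q(a))=\lim_n\sum_w\lambda(w)s(q\wedge u_w)$ for each piece separately -- while the paper's yields an exact algebraic identity in one ambient group.

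Two of your justifications need repair, though both conclusions survive. First, ``regularity of $P$ yields $J_q(a)\in C(q\wedge u_w)$'' is not a valid appeal: regularity concerns pairwise compatible \emph{projections}, and $J_q(a)$ need not be one. In fact you do not need that commutant statement at all: since $a\in C(u_w)$ and $a\dg=\bigoplus_{l(w)=n}u_w$, iterating Lemma \ref{lemma:compatible_projs} gives $a=J_{a\dg}(a)=\bigoplus_{l(w)=n}J_{u_w}(a)$, and applying the additive map $J_q$ together with $J_q\circ J_{u_w}=J_{q\wedge u_w}$ (Corollary after Proposition \ref{prop:CB_orig}, valid since $q\leftrightarrow u_w$) yields your exact finite identity directly; if you do want $J_q(a)\in C(u_w)$, it follows the same way the paper argues for $J(a)$, by applying $J_q$ to $a=J_{u_w}(a)\oplus J_{u_w'}(a)$ and using $J_qJ_{u_w}=J_{u_w}J_q$. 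Second, in $J_r\circ f_w=J_r\circ f_w\circ J_r$ the inner $f_w$ must be read relative to the unit $r$: for instance $f_1(J_r(b))$ taken relative to $u_w$ does not exist when $r<u_w$. The correct relativized statement, including existence of the right-hand side, is $J_r\bigl(f_w(b)\bigr)=f_w^{[0,r]}\bigl(J_r(b)\bigr)$ for $b\in[0,u_w]$, proved by induction on $l(w)$ from $J_r(2b)=2J_r(b)$ and $J_r\bigl(u_w\ominus 2(u_w\ominus b)\bigr)=r\ominus 2\bigl(r\ominus J_r(b)\bigr)$; this is the same gloss as the paper's remark before Theorem \ref{thm:spectral_char}. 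With these repairs your argument is complete. Incidentally, once you know $f_w^{[0,r]}(J_r(a))\in[0,r]$, the sandwich follows from additivity of $s$ alone, via $s\bigl(f_{w'i}^{[0,r]}(b)\bigr)=2s\bigl(f_{w'}^{[0,r]}(b)\bigr)-i\,s(r)$, so the passage through the universal group $G_r$ and the extension of $s$ to it can be bypassed entirely.
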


\begin{proof}
Put  $J:=J_q+J_{q'}$, then $J: E\to E$ is a
unital additive map preserving any element $r$ in the Boolean subalgebra generated by the
binary spectral projections of $a$.
We also have  $J_rJ=JJ_r$ for any such element and  therefore
\[
J(a)=J(J_r(a)+J_{r'}(a))=J_r(J(a))+J_{r'}(J(a)),
\]
so that  $J(a)\in C(r)$. Let $\tilde C$ be a C-block containing $J(a)$ and all $r$ an let $\tilde G$ be its universal
group. Our aim is to show that for $c_n$ as in \eqref{eq:cn}, $J(c_n)\in \tilde C$ for all
$n$ and, as an element of $\tilde G$,
\begin{equation}\label{eq:Jcn}
J(c_n)=2^nJ(a)-\sum_{l(w)=n}k(w)u_{w}.
\end{equation}
Exactly as in the proof of Proposition  \ref{prop:spectral_states}, this then implies that
\[
s(J(a))=\lim_{n\to \infty}\sum_{l(w)=n}\lambda(w)s(u_{w})=s(a).
\]
Note that if $C$ is a C-block containing $a$, then it  is not clear in general whether  $J$ maps $C$ into
$\tilde C$ (or into any ordered unital group). We therefore cannot use any extension of $J$ to the universal group of
$C$.

Let us first show by induction that $J(c_{w})\in \tilde C$ for all $w$. This is clearly true
 for $w=\varepsilon$. Assume that $J(c_w)\in \tilde C$  holds  for some $w$. Then  $J(c_w)\le
 J(u_w)=u_w$ and it is easily seen that, since $u_{w0}\in P_\le(c_w, c_w'\wedge u_w)$, we
  have $u_{w0}\in P_\le(J(c_w),J(c_w)'\wedge u_w)$ and
 \begin{align*}
J((c_w'-c_w)_+\wedge u_w)&=J(J_{u_{w0}}(c_w')\ominus
J_{u_{w0}}(c_w)) =J_{u_{w0}}(J(c_w)'\wedge u_w)\ominus J_{u_{w0}}(J(c_w))\\ &=(J(c_w)'\wedge
u_w-J(c_w))_+=(J(c_w)'-J(c_w))_+\wedge u_w\in \tilde C.
 \end{align*}
Similarly, $J((c_w-c_w'\wedge u_w)_+)=(J(c_w)-J(c_w)')_+\wedge u_w\in \tilde C$ so that $J(c_{w0}), J(c_{w1})\in \tilde C$.

The elements $J(c_{w})$, $l(w)=n$  are summable and we have
\[
J(c_n)=\bigoplus_{l(w)=n} J(c_{w})\in \tilde C.
\]
 We next prove \eqref{eq:Jcn}. Again, we proceed by induction. The statement is clear for
 $l(w)=0$: $J(c_\varepsilon)=J(a)$. Assume that it holds for $l(w)=n$.
 Then
 \begin{align*}
 \sum_{l(w)=n+1}& J(c_w)=\sum_{l(w)=n} (J(c_{w0})+J(c_{w1}))\\
 &=\sum_{l(w)=n}\biggl( u_{w0}- \tilde{\tilde
 U}_{u_w}\bigl((J(c_w)'-J(c_w))_+ + (J(c_w)-J(c_w)')_+\bigr)\biggr)\\
 &=\sum_{l(w)=n}(u_{w0}+\tilde{\tilde U}_{u_w}(J(c_w)-J(c_w)'))=2J(c_n)-\sum_{l(w)=n}u_{w1},
\end{align*}
here $\tilde{\tilde U}_{u_w}$ is the extension of $J_{u_w}|{\tilde C}$ to $\tilde G$. 
Using the induction assumption, we get
\begin{align*}
J(c_{n+1})&=2^{n+1}J(a)-\sum_{l(w)=n}(2k(w)u_w+u_{w1})=2^{n+1}J(a)
   -\sum_{l(w)=n}(2k(w)u_{w0}+(2k(w)+1)u_{w1})\\
&=2^{n+1}J(a)-\sum_{l(w)=n+1}k(w)u_w.
\end{align*}

\end{proof}

\begin{theorem}\label{thm:separating} Assume that $E$ is spectral and has a separating set of states. Then
\begin{enumerate}
\item[(i)] For $q\in P$ and $a\in E$, $a\in C(q)$ if and only if $p_{a,\lambda(w)}\in
C(q)$ for all $w\in \mathcal B$.
\item[(ii)] If $a,b\in E$ are such that $p_{a,\lambda(w)}=p_{b,\lambda(w)}$ for all $w\in
\mathcal B$, then $a=b$.

\end{enumerate}

\end{theorem}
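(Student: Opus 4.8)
The plan is to obtain both assertions directly from Propositions~\ref{prop:spectral_states} and~\ref{prop:Jcommuting} together with the separating hypothesis, so that no further manipulation of the families $\{c_w\}$ and $\{u_w\}$ is required.

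Part~(ii) should be essentially immediate. If $p_{a,\lambda(w)}=p_{b,\lambda(w)}$ for all $w\in\mathcal B$, then Proposition~\ref{prop:spectral_states}(iii) gives $s(a)=s(b)$ for every state $s$, in particular for every $s$ in the separating family; by the very definition of a separating set of states this forces $a=b$.

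For part~(i) I would argue the two implications separately. The forward implication, $a\in C(q)\implies p_{a,\lambda(w)}\in C(q)$, needs no states at all: it suffices to observe that each binary spectral projection lies in the bicommutant $P(a)$. Indeed $a\dg\in P(a)$ by Proposition~\ref{prop:pc_commutant} and each $u_w\in P(a)$ by Lemma~\ref{lemma:uw}(i); since $P(a)$ is a Boolean subalgebra, closed under orthocomplementation and orthogonal sums, the defining formula of $p_{a,\lambda(w)}$ as an orthosum of $(a\dg)'$ and certain $u_{\tilde w}$ shows $p_{a,\lambda(w)}\in P(a)$. Now $a\in C(q)$ means $q\in PC(a)$, whence $P(a)\subseteq CPC(a)\subseteq C(q)$, and so every $p_{a,\lambda(w)}\in C(q)$.

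For the reverse implication I would invoke Proposition~\ref{prop:Jcommuting}: assuming $q$ commutes with all binary spectral projections of $a$, that proposition yields $s(a)=s(J_q(a)+J_{q'}(a))$ for every state $s$. The element $b:=J_q(a)\oplus J_{q'}(a)$ is well defined, since $J_q(a)\le q$ and $J_{q'}(a)\le q'$ are orthogonal, and the displayed identity reads $s(a)=s(b)$ for all $s$. The separating hypothesis then gives $a=b=J_q(a)\oplus J_{q'}(a)$, which by Lemma~\ref{le:comE}(ii) is precisely the statement $a\in C(q)$. Since the analytic content is already carried by Proposition~\ref{prop:Jcommuting}, I do not expect a serious obstacle here; the only points demanding care are the verification that the spectral projections sit inside $P(a)$ (used in the forward direction) and the correct bookkeeping when passing from equality of all state-values to equality of elements via the separating family.
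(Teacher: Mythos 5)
Your proposal is correct and is essentially the paper's own argument: the paper proves this theorem simply by declaring it immediate from Propositions~\ref{prop:spectral_states} and~\ref{prop:Jcommuting}, and your write-up supplies exactly the intended details — part~(ii) and the reverse implication of~(i) from those propositions plus the separating hypothesis (with Lemma~\ref{le:comE}(ii) closing the loop), and the forward implication of~(i) from the fact that the binary spectral projections lie in the Boolean subalgebra $P(a)$, so that $a\in C(q)$ forces $P(a)\subseteq C(q)$.
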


\begin{proof} Immediate from Prop. \ref{prop:spectral_states} and \ref{prop:Jcommuting}.

\end{proof}

\section{Interval effect algebras}

The  notions of compressions, compression bases and their comparability and spectrality
properties were originally introduced in the setting of partially ordered abelian groups.
The necessary definitions and properties  are collected  in Appendix \ref{app:spec}.

Recall that an effect algebra $E$ is an interval effect algebra if it is isomorphic to the
unit interval in a partially ordered unital abelian group (Example \ref{ex:intervalea}).
By \cite{FBinterval}, there is a \emph{universal  group}  $(G,u)$ of $E$ (unique up to
isomorphisms), such that  $E\simeq G[0,u]$ and the following universal property holds: for
any group $H$,  any additive mapping $\Phi: E\to H$ uniquely extends to a group
homomorphism $G\to H$.

It follows from the universal property that any compression $J$ on $E$ extends uniquely to a group
homomorphism $\tilde J$ on $G$ which is again a compression on $G$. Further, any
compression base $(J_p)_{p\in P}$ on $E$ uniquely extends to a compression base $(\tilde
J_p)_{p\in P}$ on $G$. It is clear that also conversely any compression base on $G$
restricts to a compression base on $E$. We obtain a one-to-one correspondence between
compression bases on $E$ and on its universal group. It is a natural question what is the
relation between spectrality on $E$ with a fixed compression base $(J_p)_{p\in P}$ and spectrality of the
universal group with the extension $(\tilde J_p)_{p\in P}$. 

The defining  properties of a spectral group $(G,u)$ with a compression
base are the Rickart property and general comparability property defined in Appendix
\ref{app:spec}. Since the Rickart property of $(G,u)$ is equivalent to the projection
cover property in $E$ if general comparability holds, the basic question to be answered is
whether the b-comparability property in $E$ is equivalent to general comparability for
$G$. We have already answered this question in the affirmative in the case of
interpolation groups resp. effect algebras with RDP in Sec. \ref{sec:rdp}. The next example shows, however, that it is
not true in general. 

\begin{example}\label{ex:horsum_universal}
Let $E$ be the effect algebra as in Example \ref{ex:horsum} (the horizontal sum of a
Hilbert space effect algebra with itself). We have introduced a compression base in $E$
and noticed in Example \ref{ex:horsum_spectral} that $E$ is spectral. As the horizontal
sum of two interval effect algebras, $E$ is an interval effect algebra as well,
\cite{fgb}. Note however that its universal group cannot be spectral: let $e=(1/2,0)$ and
$f=(0,1/2)$, then $e,f\in E$, $e\ne f$ but $2e=2f=u$. It follows that the universal group
is not torsion free, hence it cannot be spectral by \cite[Lemma 4.8]{Fcomgroup}. 

\end{example}

In the next sections we show that the above equivalence holds for archimedean divisible
effect algebras.

\subsection{Divisible effect algebras}
\begin{definition}
An element $a$ in an effect algebra $E$ is \emph{divisible} if for any $n\in \mathbb N$, there is a unique $x\in E$ such
that $nx=a$, this element will be denoted by $a/n$. If every element in $E$ is divisible, we say that $E$ is
\emph{divisible}.

\end{definition}

By \cite{Pudiv1}, any divisible effect algebra is an interval effect algebra whose
universal group is divisible. In
particular, for any $\lambda\in [0,1]\cap \mathbb Q$ and $a\in E$, there is an element $\lambda a\in E$ and the
 map $(\lambda,a)\mapsto \lambda a$ has the properties making it an \emph{effect
 module} over $[0,1]\cap \mathbb Q$, see \cite{jacobs} for the definition.  Conversely,
 the unit interval in a divisible group $(G,u)$ is a divisible effect
 algebra.

\begin{theorem}\label{thm:comparability_divisible} If $E$ has  the b-comparability
property, then $G$ has general comparability. If $E$ is archimedean, the converse is also true.

\end{theorem}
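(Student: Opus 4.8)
The plan is to prove the two implications separately, using divisibility of $G$ to move freely between $E$ and $G$ and invoking archimedeanity only for the converse. Recall that each $J_p$ extends to a group endomorphism $\tilde J_p$ of $G$ with $\tilde J_p|_E = J_p$, and that general comparability of $G$ (Appendix \ref{app:spec}) provides, for each $g \in G$, a projection $p$ commuting with $g$ such that $\tilde J_p(g) \le 0 \le \tilde J_{p'}(g)$; the commutation requirement is essential, since without it general comparability of $G$ would force comparability of all pairs in $E$ and hence, by Remark \ref{rem:b_prop}, collapse $E$ to a single C-block.

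For the forward implication no appeal to archimedeanity is needed. Given $g \in G$, pick $N \in \mathbb N$ with $-Nu \le g \le Nu$ and set $a := g/(2N) + u/2 \in E$; then $a' = u/2 - g/(2N)$ and $g = N(a - a')$. Since $a C a'$ always holds (as noted before \eqref{eq:split1}), b-comparability of $E$ yields some $p \in P_\le(a,a')$, so that $p \in P(a) = P(g)$ and $J_p(a) \le J_p(a')$, $J_{p'}(a') \le J_{p'}(a)$. Applying $\tilde J_p$ and $\tilde J_{p'}$ to $g = N(a - a')$ gives $\tilde J_p(g) \le 0 \le \tilde J_{p'}(g)$ with $p$ commuting with $g$, which is general comparability of $G$.

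For the converse, assume general comparability of $G$ and $E$ archimedean, so $G$ is archimedean by Lemma \ref{lemma:app_arch}. I would first recover the b-property. Fix $a \in E$; for $0 \le k \le n$ apply general comparability to $g_{k,n} := na - ku \in G$ to obtain $p_{k,n}$, necessarily lying in $P(g_{k,n}) = P(a)$, with $\tilde J_{p_{k,n}}(na - ku) \le 0 \le \tilde J_{p_{k,n}'}(na - ku)$; let $B(a) \subseteq P(a)$ be the Boolean subalgebra they generate. The implication $a \in C(p) \Rightarrow B(a) \subseteq C(p)$ is automatic, so the b-property reduces to showing that $B(a) \subseteq C(p)$ forces the order-bounded endomorphism $D_p := \mathrm{id} - \tilde J_p - \tilde J_{p'}$ to vanish at $a$. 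Here divisibility lets me assemble, from the $p_{k,n}$, dyadic step elements $a_n$ in the subalgebra generated by $B(a)$ with $-u/n \le a - a_n \le u/n$; since $D_p(a_n) = 0$ and $D_p$ is order-bounded, $D_p(a) = D_p(a - a_n)$ is dominated by a fixed multiple of $u/n$ for all $n$, and archimedeanity of $G$ gives $D_p(a) = 0$, i.e. $a \in C(p)$. This limiting step is where archimedeanity is indispensable, and it is the main obstacle: one must first arrange the comparing projections $p_{k,n}$ into a nested family so that the step elements $a_n$ exist with the stated bound.

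Once the b-property is established, the C-blocks are well defined and, by Theorem \ref{thm:cblock}, are the maximal sets of pairwise commuting elements. To obtain clause (b) of Definition \ref{de:b-compar}, take $e, f \in E$ with $e C f$ and a block $B \supseteq P(e,f)$, so that $e, f$ lie in $C = C(B)$. I would pass to the subgroup $H \subseteq G$ of elements commuting with $B$, which contains $e, f$ and $B$ and carries the order unit $u$: since $B$ is Boolean and acts centrally on $H$, divisibility and general comparability of $G$ restrict to make $(H, u)$ an interpolation group with general comparability whose unit interval is $C$. Theorem \ref{thm:cop_group} then gives b-comparability of $C$, producing a comparing projection in $B \subseteq P$; being in $B$ it commutes with $e$ and $f$, hence lies in $P_\le(e,f)$, which is thus nonempty. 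Together with the b-property this yields b-comparability of $E$. Besides the archimedean limiting argument above, the delicate points are the commutation bookkeeping—ensuring at each reduction that the comparing projection can be chosen in the relevant bicommutant—and the identification of $(H,u)$ as an interpolation group, which is exactly what permits the use of Theorem \ref{thm:cop_group}.
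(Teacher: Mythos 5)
Your forward implication is correct and is essentially the paper's own argument: divisibility moves $g$ into $E$ (you take $a=g/(2N)+u/2$ and compare $a$ with $a'$; the paper takes $a=(1/N)(g+Ku)$ and compares it with $(K/N)u$), and b-comparability then produces an element of $P_\pm(g)$ (your inequalities actually exhibit $p'$, not $p$, as that element, but $P(g)$ is Boolean, so this is harmless). Your treatment of the b-property in the converse is also close to the paper's: the ``order-bounded $D_p$ plus archimedeanity'' limiting step is a legitimate self-contained substitute for the paper's appeal to norm-closedness of $C_G(q)$, and the ``main obstacle'' you flag --- arranging the comparing projections into a nested family so that step elements $a_n$ with $-u/n\le a-a_n\le u/n$ exist --- is exactly the content of Lemma \ref{lemma:app_sum}, which you could simply have cited instead of leaving open.

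The genuine gap is in your derivation of clause (b) of Definition \ref{de:b-compar}. First, the claim that $(H,u)$ is an interpolation group is precisely the nontrivial point, and the reason you offer (``$B$ is Boolean and acts centrally, plus divisibility'') is not an argument: what actually makes it work is that for $h\in H$ general comparability yields a comparing projection in $P(h)\cap C(B)$, which by maximality of the block lies in $B$, so that $h\vee 0=\tilde J_p(h)$ exists in $H$ and $H$ is lattice-ordered; moreover, to invoke Theorem \ref{thm:cop_group} you would still have to identify the restricted compressions $\tilde J_q|_H$, $q\in B$, with the canonical RDP compressions $\tilde U_q$, and $B$ with the sharp elements of $C$, none of which you address. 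Second, and more seriously, even granting b-comparability of $C$, the comparing projection $t$ it produces lies in $B$ and commutes with $e$ and $f$ --- but that is not enough. By Definition \ref{de:b-compar}, membership in $P_\le(e,f)$ requires $t\in P(e,f)=PC(PC(\{e,f\})\cup\{e,f\})$, i.e.\ $t$ must also commute with \emph{every} projection of $P$ (not merely of $B$) that commutes with both $e$ and $f$; such projections need not lie in $B$, and nothing in your construction controls them. Your final inference ``being in $B$ it commutes with $e$ and $f$, hence lies in $P_\le(e,f)$'' conflates $PC(\{e,f\})$ with the bicommutant $P(e,f)$. The paper avoids this by applying general comparability directly to $g=e-f$ in $G$: the resulting $p\in P_\pm(g)$ lies in $CPC_G(g)$, which automatically gives commutation with every projection commuting with both $e$ and $f$, and the already-established b-property upgrades $P(e)\subseteq C(p)$ to $e\in C(p)$ (likewise for $f$), so $p'\in P_\le(e,f)$. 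Your detour through $H$ and Theorem \ref{thm:cop_group} cannot deliver this bicommutant membership without essentially redoing that direct argument.
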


\begin{proof} For any $g\in G$, there are some  $K,N\in \mathbb N$ such that $K\le N$ and  $a:=(1/N)(g+Ku)\in E$. Clearly, any projection commutes with $a$ if and only if it commutes
with $g$, this implies that $P(a)=P(g)$.
By the b-comparability property, there is some $p\in P_\le(a, (K/N)u)$, which means that $p\in P(a)=P(g)$ and
$J_p(a)\ge J_p((K/N)u)$, $J_{p'}(a)\le J_{p'}((K/N)u)$, which clearly implies that $p\in P_\pm(g)$.

Conversely, assume that $E$ is archimedean and  that $G$ has general comparability. Then
by Lemma \ref{lemma:app_arch}, the order unit seminorm $\|\cdot\|$ is a norm in $G$.  By
Lemma \ref{lemma:app_sum}, we obtain that for every $g\in
G$ and $n\in \mathbb N$, there is a
rational linear combination $\sum_i \xi_iu_i$ of elements in $P(g)$ such that $\|g-\sum_i\xi_iu_i\|\le 1/n$, so that
 $g$ is in the norm closure of the $\mathbb Q$-linear span of $P(g)$. Let $a\in E$ and $q\in P$ be such
that $P(a)\subseteq C(q)$. Then also any $\mathbb Q$-linear combination of elements in $P(a)$ is in $C_G(q)$ and since
$C_G(q)$ is closed in $\|\cdot\|$ by \cite[Cor.~3.2]{Forc}, it follows that $a\in C(q)$.
This shows that $E$ has the b-property. Let $a,b\in E$, $aCb$ and let $g=a-b$,
 $p\in P_\pm(g)$. Clearly, $J_p(a)\ge J_p(b)$ and $J_{p'}(a)\le J_{p'}(b)$. Further, it is easily seen that $g\in
C_G(P(a))$, so that $P(a)\subseteq C(p)$ and hence $a\in C(p)$, similarly $b\in C(p)$. Assume next that
 $q\in P$ is such that
 $a,b\in C(q)$, then $g\in C_G(q)$ so that $p\in C(q)$, it follows that $p\in CP(a,b)$. Putting all together, we obtain
 $p\in P_\le (a,b)\ne \emptyset$. Hence $E$ has the b-comparability property.

\end{proof}

 If  $E$ is divisible then its
universal group $G$ is archimedean (see  Appendix \ref{app:spec}) if and only if for $a,b,c\in E$, $a\le b\oplus \frac 1n
c$ for all $n\in \mathbb N$ implies that $a\le b$. In general this property is stronger
than archimedeanity of $E$, see \cite{GPBB}, so let us call a divisible effect algebra
with this property   \emph{strongly archimedean}. \label{key:sa}  As a
consequence of the above result, we show that for divisible effect algebras with
b-comparability these two notions coincide.

\begin{corollary}\label{coro:Arch} An archimedean divisible effect algebra with b-comparability is strongly
archimedean.

\end{corollary}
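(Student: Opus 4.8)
The plan is to leverage Theorem~\ref{thm:comparability_divisible} to transfer the archimedean hypothesis on $E$ into a statement about its universal group $G$, and then to use the characterization of strong archimedeanity given just above the corollary. Recall that strong archimedeanity was defined to be exactly the condition that $G$ is archimedean, reformulated as: for $a,b,c\in E$, $a\le b\oplus \frac{1}{n}c$ for all $n\in\mathbb N$ implies $a\le b$. So the entire content of the corollary reduces to showing that, under the stated hypotheses, the universal group $G$ is archimedean.

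First I would invoke Theorem~\ref{thm:comparability_divisible}: since $E$ is divisible, archimedean, and has the b-comparability property, the theorem tells us that $G$ has general comparability. This is the crucial step that upgrades b-comparability in $E$ to general comparability in $G$, and it is here that divisibility and archimedeanity of $E$ are both used (the converse direction of that theorem requires $E$ archimedean). Once we know $G$ has general comparability, I would appeal to Lemma~\ref{lemma:app_arch} from the appendix, which (as used repeatedly in the paper, e.g.\ in the proof of Proposition~\ref{prop:right_cont}) asserts that a group with general comparability is archimedean precisely when the effect algebra $E=G[0,u]$ is archimedean. Since $E$ is assumed archimedean, this yields that $G$ is archimedean.

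Finally, I would close the argument by unwinding the definition of strongly archimedean: by the characterization recorded on page~\pageref{key:sa}, a divisible effect algebra $E$ is strongly archimedean exactly when its universal group $G$ is archimedean. Having just established that $G$ is archimedean, we conclude that $E$ is strongly archimedean, which is the assertion of the corollary.

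The main obstacle, such as it is, lies not in any intricate calculation but in correctly identifying that the notion of strong archimedeanity is \emph{defined} through archimedeanity of $G$, so that the corollary is essentially a corollary of Theorem~\ref{thm:comparability_divisible} combined with the appendix lemma relating archimedeanity of $G$ and of $E$ under general comparability. The only subtlety is checking that Lemma~\ref{lemma:app_arch} applies in the direction we need — namely that archimedeanity of $E$ (together with general comparability of $G$) forces archimedeanity of $G$ — rather than merely the easier converse; this is precisely the content invoked in Proposition~\ref{prop:right_cont}, so the step is legitimate. Beyond that, the proof is a short chain of implications and requires no new machinery.
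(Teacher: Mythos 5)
Your proof is correct and follows essentially the same route as the paper: the forward direction of Theorem~\ref{thm:comparability_divisible} gives general comparability of the universal group $G$, Lemma~\ref{lemma:app_arch} then converts archimedeanity of $E$ into archimedeanity of $G$, and strong archimedeanity of a divisible $E$ is by definition equivalent to archimedeanity of $G$. One small slip in your exposition: archimedeanity of $E$ is not actually used in the first step (the implication from b-comparability of $E$ to general comparability of $G$ is the forward direction of the theorem, which needs only divisibility); it enters only afterwards, through Lemma~\ref{lemma:app_arch}.
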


\begin{proof} Since b-comparability in $E$ implies general comparability for the universal
group, the results follows from Lemma \ref{lemma:app_arch}.

\end{proof}

\begin{theorem}\label{thm:divisible_spectral} Let $E$ be an archimedean divisible effect
algebra and let $(G,u)$ be its universal group. Then $E$ is spectral if and only if $G$ is spectral.
 In this case, for $a\in E$:
 \begin{enumerate}
\item the rational spectral resolution obtained in $E$ and in $G$ is the same,
\item $a$ is the norm limit
\[
a=\lim_{n\to \infty} \sum_{l(w)=n} \lambda(w) u_{w},
\]
where $u_{w}=p_{a,\lambda(w+1)}\wedge
p_{a,\lambda(w)}'$,  $w\in \mathcal B$.
\item For $b\in E$, $a=b$ if and only if $p_{a,\lambda(w)}=p_{b,\lambda(w)}$, $w\in \mathcal B$.
\item For $a\in E$, $q\in P$, $a\in C(q)$ if and only if $p_{a,\lambda(w)}\in C(q)$ for
all $w\in \mathcal B$.
\end{enumerate}

\end{theorem}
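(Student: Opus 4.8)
The plan is to first settle the equivalence "$E$ spectral $\iff$ $G$ spectral'' and then deduce the four numbered properties from the C-block description of the spectral resolution together with the order-unit norm that archimedeanity provides. Recall that $(G,u)$ is spectral precisely when it has general comparability and the Rickart property. General comparability of $G$ is equivalent to the b-comparability of $E$ by Theorem \ref{thm:comparability_divisible}, where the direction "$G$ general comparability $\Rightarrow$ $E$ b-comparability'' uses that $E$ is archimedean. Thus the whole equivalence reduces, under the standing assumption of comparability, to matching the Rickart property of $G$ with the projection cover property of $E$.

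For "$G$ Rickart $\Rightarrow$ $E$ projection cover'' I would take $a\in E$ and set $a\dg:=(a^*)'$, where $a^*$ is the Rickart projection characterised by $J_q(a)=0\iff q\le a^*$; since for $p\in P$ one has $a\le p\iff J_{p'}(a)=0\iff p'\le a^*\iff a\dg\le p$, the element $a\dg$ is exactly the projection cover. For the reverse implication I would embed an arbitrary $g\in G$ into a C-block: by divisibility and the order unit there are $K,N\in\mathbb N$ with $a:=(g+Ku)/N\in E$, and any C-block $C\ni a$ has a universal group $G_C$ containing $a$ and $u$, hence $g=Na-Ku\in G_C$. Since $E$ is spectral, $C$ is a spectral MV-effect algebra (Theorem \ref{th:bcompar}), so $G_C$ is a spectral $\ell$-group (Example \ref{ex:MV}) and carries the Rickart projection $g^*$ of $g$. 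The decisive point is that any projection $q$ with $\tilde J_q(g)=0$ forces $g=\tilde J_{q'}(g)\in C_G(q)$, so $q\in PC(g)=PC(a)\subseteq C$; hence the largest annihilating projection computed in $G_C$ is also the largest one in $G$, and $G$ inherits the Rickart property.

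Assuming now that $E$ (equivalently $G$) is spectral, the four items follow as follows. For (i), the resolution of $a$ in $G$ is $((na-mu)_+)^*$ by \eqref{eq:spectprojs}; as $na-mu$ lies in the $\ell$-group $G_C$ of any C-block $C\ni a$, its positive part (via a splitting projection $p\in P_\pm(na-mu)\subseteq C$, well defined by Lemma \ref{lemma:orthog_decomp}) coincides with the lattice positive part in $G_C$, and its Rickart projection is computed inside $G_C$ by the argument above, so the $G$-resolution equals $p^C_{a,\lambda}=p_{a,\lambda}$ by Theorem \ref{thm:spectral_archimedean}. For (ii), I would use \eqref{eq:cn}: $c_n=\bigoplus_{l(w)=n}c_w=2^na-\sum_{l(w)=n}k(w)u_w$ with $0\le c_n\le a\dg\le u$, so dividing by $2^n$ gives $0\le a-\sum_{l(w)=n}\lambda(w)u_w\le 2^{-n}u$; since $E$ is strongly archimedean by Corollary \ref{coro:Arch}, $G$ is archimedean and the order-unit seminorm is a norm (Lemma \ref{lemma:app_arch}), whence $\|a-\sum_{l(w)=n}\lambda(w)u_w\|\le 2^{-n}\to 0$. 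Item (iii) is immediate, since equal spectral projections force the same family $\{u_w\}$ and hence the same norm limit. For (iv), the forward direction uses that every $p_{a,\lambda(w)}$ lies in $P(a)$ (being built from $a\dg$ and the $u_w\in P(a)$ of Lemma \ref{lemma:uw}), so $a\in C(q)$ gives $P(a)\subseteq C(q)$ by Lemma \ref{lemma:BaPa}; the reverse direction uses that each partial sum then lies in $C_G(q)$, which is norm-closed by \cite[Cor.~3.2]{Forc}, so the norm limit $a$ commutes with $q$.

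The step I expect to be the main obstacle is the equivalence of spectrality, and within it the global construction of $g^*$: the difficulty is not existence in some $\ell$-group but the verification that every projection annihilating $g$ must commute with $g$ and therefore lives in a single C-block, so that the $G_C$-computation is genuinely the one in $G$. The same embedding-into-$G_C$ device underlies (i), where one must check that both the positive part and the Rickart mapping are intrinsic to $G_C$ and independent of the ambient group.
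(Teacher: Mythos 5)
Your handling of items (ii)--(iv) is sound: (ii) is the correct unpacking of \eqref{eq:cn} (the paper does exactly this, after securing the norm via Corollary \ref{coro:Arch} and Lemma \ref{lemma:app_arch}), and for (iii)--(iv) you take a genuinely different but valid route, deducing (iii) from the norm limit in (ii) and (iv) from norm-closedness of $C_G(q)$, whereas the paper routes both through states (Theorem \ref{thm:separating}, using the separating states from Lemma \ref{lemma:app_arch}); your version is arguably more self-contained. The genuine gap is in your proof of the spectrality equivalence, and it propagates into (i). Your ``decisive point'' rests on two false claims. First, $\tilde J_q(g)=0$ does \emph{not} force $g=\tilde J_{q'}(g)\in C_G(q)$: annihilation by a compression does not imply commutation with its focus, which is precisely why the definition of the Rickart mapping requires both $g\in C_G(q)$ and $\tilde J_q(g)=0$. (In $E(\mathcal H)$, take $q$ a rank-one projection and $g$ self-adjoint with only off-diagonal blocks relative to $q$: then $qgq=0$ but $q'gq'=0\ne g$.) Second, $PC(g)=PC(a)\subseteq C$ is false: a projection commuting with $a$ need not commute with the whole block $B$ defining $C=C(B)$. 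For $a=u/2$ one has $PC(a)=P$, while $C$ is a single C-block, so the inclusion fails whenever $E$ is not itself an MV-effect algebra. Hence your assertion that ``the largest annihilating projection computed in $G_C$ is also the largest one in $G$'' is unproved, and your argument for (i) leans on this same unestablished intrinsicness of the Rickart mapping (your identification of positive parts via uniqueness of orthogonal decompositions, by contrast, is fine).

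The gap is repairable, but not by the route you chose. The paper sidesteps it entirely: for the equivalence of spectrality it invokes the known fact that, under general comparability, the projection cover property of $E$ is equivalent to the Rickart property of $G$ \cite[Thm. 6.5]{Fgc} (cf. \eqref{eq:projcover_rickart}); and for (i) it needs no intrinsicness statement, because the C-block resolution --- already identified with the $E$-resolution by Theorem \ref{thm:spectral_archimedean} --- satisfies conditions (i)--(iv) of Theorem \ref{thm:rsd} inside $G$, hence equals the $G$-resolution by the \emph{uniqueness} part of that theorem. If you do want a direct proof of the Rickart property, argue globally via the orthogonal decomposition rather than through a C-block: for $g\in G$ put $|g|=g_++g_-$ (available by general comparability), pick $M$ with $|g|\le Mu$, and set $g^*:=\bigl((|g|/M)\dg\bigr)'$; then Lemma \ref{lemma:orthog_decomp} (iii), divisibility and unperforation, together with the fact that for $h\in E$ one has $J_q(h)=0\iff h\le q'$, yield the chain $g\in C_G(q),\ \tilde J_q(g)=0\iff |g|/M\le q'\iff (|g|/M)\dg\le q'\iff q\le g^*$, which is exactly the defining property of the Rickart mapping, with no reference to any C-block.
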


\begin{proof} The first statement is clear from  Theorem \ref{thm:comparability_divisible} and the fact that the projection cover
property of $E$ and $G$ are clearly equivalent.

Assume that $E$ is spectral.
For (i), let $C$ be a C-block containing $a$ and let $\tilde G$ be its universal group, which is a subgroup in $G$ and
a spectral lattice ordered group in its own right. By Theorem
\ref{thm:spectral_archimedean}, the rational spectral resolution of $a$ in $E$ coincides
with the spectral resolution in $\tilde G$, which in turn is the same in $G$, by the
characterization in Theorem \ref{thm:rsd}.

To proceed, notice that $G$ is archimedean by Corollary \ref{coro:Arch}, so that
$\|\cdot\|$ is a norm and $E$ has a separating set of states, by Lemma
\ref{lemma:app_arch}.  The statement  (ii) now follows from  \eqref{eq:cn}, (iii) and (iv)
are
from Theorem \ref{thm:separating}.

\end{proof}

It is easily observed from Theorem \ref{thm:rsd} that for divisible effect algebras, the
rational spectral resolution has a characterization resembling the  spectral resolutions in order unit spaces in spectral
duality, see \cite[Thm. 8.64]{AlSh}, in particular, in von Neumann algebras.

\begin{corollary} Let $E$ be a  divisible archimedean spectral effect algebra. Then the rational spectral resolution of an
element $a\in E$ is the unique parametrized family $\{p_\lambda\}_{\lambda\in \mathbb Q}$ in $P$  such that
\begin{enumerate}
\item $p_\lambda=0$ for $\lambda< 0$ and $p_\lambda=1$ for $\lambda\ge1$,
\item if $\lambda\le \mu$ then $p_\lambda\le p_\mu$,
\item $\bigwedge_{\lambda<\mu}p_\mu=p_\lambda$,
\item $p_\lambda\in PC(a)$,
\item $J_{p_\lambda}(a)\le \lambda p_\lambda$, $J_{p'_\lambda}(a)\ge \lambda p'_\lambda$.

\end{enumerate}
\end{corollary}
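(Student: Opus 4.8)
The plan is to read the characterization off the corresponding group result, Theorem~\ref{thm:rsd}, using the identification of the two spectral resolutions provided by Theorem~\ref{thm:divisible_spectral}. Since $E$ is archimedean, divisible and spectral, that theorem tells us that the universal group $(G,u)$ is a spectral group and that the rational spectral resolution $\{p_{a,\lambda}\}$ of $a$ computed in $E$ agrees with the one computed in $G$. I would also record the dictionary between $E$ and $G$ that makes the translation possible: each $J_p$ is the restriction to $E$ of its unique extension $\tilde J_p=\tilde U_p$ on $G$, so that $J_p(a)=\tilde U_p(a)$ for $a\in E$, $p\in P$; the condition $p\in PC(a)$ (i.e.\ $a\in C(p)$) is the same as commutation of $p$ and $a$ in $G$; and divisibility of $G$ makes the scalar multiples $\lambda p_\lambda$ well defined as elements of $G$, lying in $E$ whenever $\lambda\in[0,1]$.

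With this dictionary in hand, I would invoke Theorem~\ref{thm:rsd}, which characterizes $\{p_{a,\lambda}\}$ in $G$ as the unique family of projections in $P$ that is increasing, right continuous in the sense $\bigwedge_{\lambda<\mu}p_\mu=p_\lambda$, commutes with $a$, and satisfies $\tilde U_{p_\lambda}(a)\le \lambda p_\lambda$ and $\tilde U_{p'_\lambda}(a)\ge \lambda p'_\lambda$. Under the identifications above these are verbatim conditions (ii)--(v) of the corollary, so the only thing left to pin down is the boundary condition (i), which fixes the parametrization over all of $\mathbb{Q}$ rather than just over $[0,1]\cap\mathbb Q$.

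To obtain (i) I would return to the defining formula \eqref{eq:spectprojs}: writing $\lambda=m/n$, the projection $p_{a,\lambda}=((na-mu)_+)^*$ is the largest projection $p$ with $\tilde U_p((na-mu)_+)=0$, i.e.\ the spectral projection ``$a\le \lambda u$''. For $\lambda<0$ the element $na-mu\ge na+u\ge u$ is an order unit, so $(na-mu)_+$ has trivial Rickart complement and $p_{a,\lambda}=0$; for $\lambda\ge 1$ we have $mu\ge nu\ge na$, hence $(na-mu)_+=0$ and $p_{a,\lambda}=1$. This is exactly (i). Uniqueness of the family satisfying (i)--(v) then follows from the uniqueness clause of Theorem~\ref{thm:rsd}: any $\{p_\lambda\}\subseteq P$ meeting (i)--(v) satisfies, after passing to $G$, the hypotheses of that theorem and hence coincides with $\{p_{a,\lambda}\}$.

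The one point requiring care is that conditions (i)--(v) are genuinely \emph{equivalent} to, not merely consequences of, the hypotheses of Theorem~\ref{thm:rsd}; the only nonroutine ingredients here are the equivalence of $PC(a)$ in $E$ with the commutant in $G$ and the well-definedness of $\lambda p_\lambda$, and both rest on divisibility together with the fact that compressions on $E$ extend uniquely to $G$. No new estimate beyond Theorem~\ref{thm:rsd} is needed.
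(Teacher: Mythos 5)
Your proposal is correct and takes essentially the same route the paper intends: the paper gives no explicit proof, saying only that the corollary ``is easily observed from Theorem~\ref{thm:rsd}'' after the identification of the two spectral resolutions in Theorem~\ref{thm:divisible_spectral}, which is precisely the reduction you carry out (with the divisibility dictionary, the boundary computation for $\lambda<0$ and $\lambda\ge 1$ via the Rickart mapping, and the uniqueness transfer made explicit).
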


\subsection{Convex effect algebras}

An effect algebra  $E$ is \emph{convex}  \cite{GuPu} if for every $a\in E$ and $\lambda \in [0,1] \subset {\mathbb R}$ there is an element $\lambda a\in E$ such that for all $a,b\in E$ and all $\lambda, \mu \in [0,1]$ we have

\begin{enumerate}
\item[(C1)] $\mu(\lambda a)=(\lambda \mu)a$.
\item[(C2)] If $\lambda + \mu\leq 1$ then $\lambda a\oplus \mu a\in E$ and $(\lambda +\mu)a=\lambda a\oplus \mu a$.
\item[(C3)] If $a\oplus b\in E$ then $\lambda a\oplus \lambda b \in E$ and $\lambda(a\oplus b)=\lambda a\oplus \lambda b$.
\item[(C4)] $1a=a$.
\end{enumerate}

A convex effect algebra is convex in the usual sense: for any $a,b\in E$, $\lambda \in [0,1]$, the element $\lambda a\oplus
(1-\lambda)b\in E$. An important example of a convex effect algebra is the algebra $E(\mathcal H)$ of Hilbert space
effects, Example \ref{ex:effects}. Clearly, any convex effect algebra is divisible.

Let $V$ be  an ordered real linear space with positive cone $V^+$. Let $u\in V^+$ and let us form the interval effect algebra $V[0,u]$.
 A straightforward
verification shows that  $(\lambda,x)\mapsto \lambda x$ is a convex structure on $V[0,u]$, so $V[0,u]$ is a
convex effect algebra which we call a \emph{linear effect algebra}. By \cite[Thm. 
3.4]{GPBB}, the universal group of a convex effect algebra is the additive group of an
ordered  real linear space $(V,V^+)$ with an order unit $u$, so that any convex effect
algebra is isomorphic to the linear effect algebra $V[0,u]$. Moreover, this isomorphism is
\emph{affine}, which means that it preserves the convex structures.

The next theorem describes the relations between order unit spaces, ordering sets of states and strongly archimedean convex effect algebras (cf. the definition on p. \pageref{key:sa}).

\begin{theorem}\label{th:archim}{\rm \cite[Thm. 3.6]{GPBB}} Let $E\simeq V[0,u]$ be  a
convex effect algebra with universal group given by the ordered linear space $(V,V^+)$
with order unit $u$.  Then the following statements are equivalent. {\rm(a)} $E$ possesses an ordering set of states. {\rm(b)} $E$ is strongly archimedean. {\rm(c)} $(V, V^+,u)$ is an order unit space.
\end{theorem}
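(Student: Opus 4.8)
The plan is to establish the cycle $(a)\Rightarrow(b)\Rightarrow(c)\Rightarrow(a)$, working throughout with the identification $E\simeq V[0,u]$ and with the correspondence, coming from the universal property of \cite{FBinterval}, between states of $E$ and normalized positive linear functionals on $(V,V^+,u)$. Concretely, a state $s\colon E\to\reals[0,1]$ is additive with $s(u)=1$, so it extends uniquely to a group homomorphism $V\to\reals$ which, because $E$ is convex and hence $V$ carries a compatible $\reals$-linear structure by \cite[Thm.~3.4]{GPBB}, is $\reals$-linear, nonnegative on $V^+$, and equal to $1$ at $u$; conversely every such functional restricts to a state. This dictionary is the bridge between the order-theoretic condition (a) and the geometry of $V$.

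First I would prove $(a)\Rightarrow(b)$, which is direct. Assume $E$ has an ordering set of states $S$ and suppose $a\le b\oplus\frac1n c$ for all $n\in\Nat$. Applying an arbitrary $s\in S$ gives $s(a)\le s(b)+\tfrac1n s(c)$, and letting $n\to\infty$ yields $s(a)\le s(b)$ for every $s\in S$. Since $S$ is ordering, $a\le b$, so $E$ is strongly archimedean.

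Next, $(b)\Rightarrow(c)$ is essentially definitional. By the discussion preceding the statement (p.\pageref{key:sa}), strong archimedeanity of the divisible effect algebra $E$ is by definition the archimedeanity of its universal group $V$. As $E$ is convex, $V$ is already an ordered real linear space with order unit $u$ by \cite[Thm.~3.4]{GPBB}, and an archimedean such space is exactly an order unit space; hence $(V,V^+,u)$ is an order unit space. The same remark reverses to give $(c)\Rightarrow(b)$, so in fact $(b)\Leftrightarrow(c)$.

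The substantive step, and the one I expect to be the main obstacle, is $(c)\Rightarrow(a)$. Assuming $(V,V^+,u)$ is an order unit space, I would first record that archimedeanity makes $V^+$ closed in the order-unit norm, since $x\in V^+$ iff $x\ge-\varepsilon u$ for all $\varepsilon>0$, a norm-closed condition. To see that the states are ordering, I argue by contraposition: given $a,b\in E$ with $a\not\le b$, the element $b-a$ lies outside the closed convex cone $V^+$, so Hahn--Banach separation supplies a bounded linear functional $f$ with $f\ge0$ on $V^+$ and $f(b-a)<0$; positivity together with the order-unit property forces $f(u)>0$, whence $s:=f/f(u)$ is a state with $s(a)>s(b)$. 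Restricting to $E$, the set of all states is therefore ordering. The delicate points are the norm-closedness of $V^+$, which genuinely uses archimedeanity (and fails exactly when (b)/(c) fail), and the normalization turning the separating functional into a bona fide state via the dictionary above; the Hahn--Banach separation is the crux, while the bookkeeping with the universal-property correspondence is routine.
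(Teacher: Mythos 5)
The paper does not prove this statement at all: it is imported verbatim from \cite[Thm.~3.6]{GPBB}, so there is no internal proof to compare against and your argument has to stand on its own. It essentially does. The cycle $(a)\Rightarrow(b)\Rightarrow(c)\Rightarrow(a)$ is the standard one, and each step is sound: $(a)\Rightarrow(b)$ works by applying states and letting $n\to\infty$ (note that $s(\tfrac1n c)=\tfrac1n s(c)$ follows from additivity alone, so no extension to $V$ is needed there); $(b)\Leftrightarrow(c)$ rests on the equivalence, asserted in the paper just before the definition on p.~\pageref{key:sa}, between strong archimedeanity of a divisible $E$ and archimedeanity of its universal group --- calling this ``definitional'' overstates it slightly, since that equivalence itself needs a short argument (given $ng\le h$ for all $n$, pick $K$ with $-Ku\le g$ and $h\le Ku$, rescale $a:=\tfrac{1}{K+1}(g+Ku)$, $b:=\tfrac{K}{K+1}u$, $c:=\tfrac{1}{K+1}u$ into $E$ and apply the effect-algebra condition), but it is a fact the paper states and you may use; and $(c)\Rightarrow(a)$ via norm-closedness of $V^+$ under archimedeanity, Hahn--Banach separation, and normalization using $f(u)>0$ is correct as written, including the observation that $a\not\le b$ in $E$ is the same as $b-a\notin V^+$. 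The one imprecision is in your preliminary ``dictionary'': a group homomorphism $V\to\reals$ extending a state is a priori only $\rationals$-linear, and the mere existence of a compatible real-linear structure on $V$ does not make it $\reals$-linear (Hamel-basis pathologies). What does is positivity: the extension is monotone on $V^+$ and $\rationals$-linear, and squeezing $\lambda x$ for $x\in V^+$ between rational multiples forces $\reals$-linearity; alternatively invoke Theorem \ref{th:archaf} with $F=\reals[0,1]$. Fortunately this direction of the dictionary is never load-bearing in your cycle --- $(c)\Rightarrow(a)$ uses only the trivial restriction direction --- so the proof survives with that sentence repaired.
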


Compressions, compression bases and spectrality in order unit spaces were studied in
\cite{FPspectres,JenPul}. In addition to the properties of compressions in
partially ordered unital abelian groups in Appendix \ref{app:spec}, the maps are also
required to be linear, which is equivalent to the condition that the restrictions to the
unit interval are affine. The next result shows that if $E$ is strongly archimedean, this
property always holds.

\begin{theorem}\label{th:archaf} Let $\phi:E \to F$ be an additive mapping from a convex  effect algebra $E$ into a strongly archimedean convex effect algebra $F$. Then $\phi$ is affine.
\end{theorem}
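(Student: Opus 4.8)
The plan is to reduce affineness to the single homogeneity identity $\phi(\lambda a)=\lambda\phi(a)$ for all $a\in E$ and all $\lambda\in[0,1]$: combined with additivity this at once gives $\phi(\lambda a\oplus(1-\lambda)b)=\phi(\lambda a)\oplus\phi((1-\lambda)b)=\lambda\phi(a)\oplus(1-\lambda)\phi(b)$, which is exactly preservation of the convex structure. So the whole argument concentrates on proving homogeneity. First I would settle the \emph{rational} case. Since $E$ is convex it is divisible, and $F$ is likewise divisible with \emph{unique} division. For $a\in E$ and $n\in\Nat$ the iterated sum $n(a/n)=a$ exists in $E$, so applying additivity step by step yields $n\phi(a/n)=\phi(a)$ in $F$; by uniqueness of division in $F$ this forces $\phi(a/n)=(1/n)\phi(a)$. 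Applying additivity once more to $\lambda a=m(a/n)$ for $\lambda=m/n\le 1$ gives $\phi(\lambda a)=\lambda\phi(a)$ for every rational $\lambda\in[0,1]$.

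Next I would pass from rational to real scalars via monotonicity. An additive map is automatically order preserving (if $x\le y$ then $y=x\oplus z$, so $\phi(y)=\phi(x)\oplus\phi(z)\ge\phi(x)$), and by axiom (C2) the map $\lambda\mapsto\lambda a$ is monotone, since $\nu a=\mu a\oplus(\nu-\mu)a$ whenever $\mu\le\nu$. Hence for any real $\lambda\in[0,1]$ and any rationals $\mu\le\lambda\le\nu$ in $[0,1]$ the rational case yields
\[
\mu\phi(a)=\phi(\mu a)\le\phi(\lambda a)\le\phi(\nu a)=\nu\phi(a).
\]
At this point I invoke Theorem \ref{th:archim}: because $F$ is strongly archimedean, its universal group is an order unit space $(V,V^+,u)$, so I may work inside $V$, where real scalar multiplication and the order unit norm are both available.

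To conclude, set $y:=\phi(\lambda a)-\lambda\phi(a)\in V$. Using the displayed inequalities together with $0\le\phi(a)\le u$, I would derive $-(\lambda-\mu)u\le y\le(\nu-\lambda)u$ (the lower bound uses that multiplying $\phi(a)\le u$ by the nonpositive number $\mu-\lambda$ reverses the inequality). Choosing the rationals $\mu,\nu$ within $\varepsilon$ of $\lambda$ gives $-\varepsilon u\le y\le\varepsilon u$ for every $\varepsilon>0$, so the order unit norm satisfies $\|y\|=0$ and hence $y=0$, i.e.\ $\phi(\lambda a)=\lambda\phi(a)$. The boundary values $\lambda\in\{0,1\}$ are immediate, noting $\phi(0)=0$ by additivity. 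I expect the genuine obstacle to be exactly this last extension step: rational homogeneity and monotonicity are routine, but upgrading to all real $\lambda$ essentially \emph{requires} the archimedean order unit structure of $F$ — without strong archimedeanity one could not exclude a nonzero infinitesimal $y$, and in fact the statement would then fail.
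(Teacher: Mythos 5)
Your proof is correct, and its skeleton --- rational homogeneity from additivity, then sandwiching $\phi(\lambda a)$ between $\mu\phi(a)$ and $\nu\phi(a)$ for rationals $\mu\le\lambda\le\nu$ via monotonicity of $\phi$ and of $\lambda\mapsto\lambda a$ --- is exactly the paper's. Where you diverge is the final limiting step. The paper never leaves the effect algebra $F$: from $\phi(\lambda a)\le \nu\phi(a)=\lambda\phi(a)\oplus(\nu-\lambda)\phi(a)$ with $\nu-\lambda\le\frac1n$ it applies the defining implication of strong archimedeanity (with $c=\phi(a)$) to get $\phi(\lambda a)\le\lambda\phi(a)$, then bounds $\lambda\phi(a)\ominus\phi(\lambda a)\le(\nu-\mu)\phi(a)$ and applies the same implication once more to force equality. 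You instead invoke Theorem \ref{th:archim} to realize $F$ as the unit interval of an order unit space $(V,V^+,u)$ and finish with genuine subtraction and the order unit norm there. Both routes are valid, and yours is not circular: Theorem \ref{th:archim} is quoted from the literature and does not depend on Theorem \ref{th:archaf}. The trade-off is that your argument leans on the representation theorem for convex effect algebras (heavier machinery), while the paper's is intrinsic, using nothing beyond the defining inequality $a\le b\oplus\frac1n c\ \Rightarrow\ a\le b$; what each buys is that the paper's version makes clear the result is purely effect-algebraic, whereas yours makes the archimedean mechanism transparent (a would-be infinitesimal $y$ is killed by the norm). Two smaller points: your rational-homogeneity step via unique divisibility is a clean justification of what the paper dismisses as ``easy to show'', and your reduction of affineness to homogeneity plus additivity is correct since $\lambda a\perp(1-\lambda)b$ always holds. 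Your closing claim that the statement would actually fail without strong archimedeanity is plausible but not established by your argument; nothing in the proof depends on it.
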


\begin{proof}  As $\phi$ is additive, it is easy to show that $\alpha \phi(a)=\phi(\alpha a)$ for any rational $\alpha$. Assume that $\alpha\leq \beta\leq \gamma$, where $\alpha$ and $\gamma$ are rational numbers and  $\beta \in [0,1]\subset {\mathbb R}$.
Then

\[
\alpha \phi(a)=\phi(\alpha a)\leq \phi(\beta a)\leq \phi(\gamma a)= \gamma \phi(a).
\]
From this we get
\[
\phi(\beta a)\leq \beta \phi(a)\oplus (\gamma - \beta)\phi(a).
\]
For every $n\in {\mathbb N}$ we find $\gamma \in {\mathbb Q}$ such that $\gamma-\beta \leq
\frac{1}{n}$. Since $E$ is strongly archimedean, this entails that
\[
\phi(\beta a)\leq \beta \phi(a).
\]

Moreover,
\[
\beta \phi(a)\ominus \phi(\beta a) \leq \gamma \phi(a)\ominus \alpha \phi(a)=(\gamma -\alpha)\phi(a),
\]
$\gamma -\alpha =(\gamma -\beta)+(\beta -\alpha)$, and we find $\gamma$ and $\alpha$ such that $\gamma-\alpha \leq \frac{1}{n}$ for every $n$. This yields
\[
\phi(\beta a)=\beta \phi(a).
\]
\end{proof}

\begin{corollary}\label{co:archaf} Every retraction on a convex  strongly archimedean effect algebra $E$ is affine.
\end{corollary}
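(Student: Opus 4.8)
The plan is to recognize that the corollary is an immediate instance of Theorem~\ref{th:archaf}, with the codomain taken to be $E$ itself. By Definition~\ref{de:compr}, a retraction $J$ on $E$ is first of all an \emph{additive} map $J\colon E\to E$. Since $E$ is a convex strongly archimedean effect algebra, it qualifies simultaneously as a valid \emph{domain} for Theorem~\ref{th:archaf} (it is convex) and as a valid \emph{codomain} (it is a strongly archimedean convex effect algebra). Thus I would simply set $\phi=J$ and $F=E$ and invoke the theorem.

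Concretely, Theorem~\ref{th:archaf} yields $J(\beta a)=\beta J(a)$ for every $a\in E$ and every real $\beta\in[0,1]$. Combining this with the additivity of $J$, for $a,b\in E$ and $\lambda\in[0,1]$ one obtains
\[
J(\lambda a\oplus(1-\lambda)b)=J(\lambda a)\oplus J((1-\lambda)b)=\lambda J(a)\oplus(1-\lambda)J(b),
\]
which is exactly the statement that $J$ is affine. So the verification reduces to checking that the hypotheses of the cited theorem are met, and nothing further is required.

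I do not expect any genuine obstacle here: all the analytic content (the use of rational approximations and the strong archimedean property to squeeze $\phi(\beta a)$ between $\alpha\phi(a)$ and $\gamma\phi(a)$) already lives inside the proof of Theorem~\ref{th:archaf}. The only thing one must be slightly careful about is that $J$ maps into $E$ with values in the focus interval $[0,p]$, $p:=J(1)$; if one preferred to take the codomain to be this smaller interval rather than all of $E$, a short preliminary remark would be needed, namely that $[0,p]$ inherits from $E$ a convex structure (for $a\le p$ and $\lambda\in[0,1]$ one has $\lambda a\le a\le p$) and remains strongly archimedean, since its order, the partial operation $\oplus$, and the division maps $c\mapsto \tfrac1n c$ are all restrictions of those on $E$. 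Either route works, but taking $F=E$ is the cleaner one and avoids even this routine verification.
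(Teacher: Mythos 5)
Your proposal is correct and matches the paper's intended argument exactly: the corollary is stated as an immediate consequence of Theorem~\ref{th:archaf}, obtained by taking $\phi=J$ and $F=E$, since a retraction is by definition an additive self-map of $E$ and $E$ itself serves as the strongly archimedean convex codomain. The additional remarks (deducing affinity from homogeneity plus additivity, and the optional restriction of the codomain to $[0,p]$) are fine but not needed.
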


From the above results and the fact that convex effect algebras are divisible,  we now easily derive:

\begin{theorem}\label{thm:convex_spectral} Let $E$ be a convex and strongly archimedean effect algebra and let $(V,V^+,u)$ be the corresponding order unit space. Then
\begin{enumerate}
\item Compressions (compression bases) on $E$ uniquely correspond to compressions (compression bases) on $(V,V^+,u)$
(in the sense of \cite{FPspectres}).
\item $E$ is spectral if and only if $(V,V^+,u)$ is spectral (in the sense of \cite{FPspectres}).

\end{enumerate}

\end{theorem}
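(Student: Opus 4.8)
The plan is to reduce everything to facts already at hand: the one-to-one correspondence between compression bases on an interval effect algebra and on its universal group (established at the start of this section), the affineness result Corollary \ref{co:archaf}, and the spectrality equivalence of Theorem \ref{thm:divisible_spectral}. First I would fix the identifications. Since $E$ is convex it is divisible, and since it is strongly archimedean, Theorem \ref{th:archim} together with \cite[Thm. 3.4]{GPBB} identifies $E$ with the linear effect algebra $V[0,u]$, where $(V,V^+,u)$ is an order unit space and the universal group of $E$ as an interval effect algebra is precisely the partially ordered abelian group $(V,u)$.

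For part (1), the general correspondence for interval effect algebras already supplies a bijection between compression bases $(J_p)_{p\in P}$ on $E$ and compression bases $(\tilde J_p)_{p\in P}$ on $V$, where $\tilde J_p$ is the unique group-homomorphic extension of $J_p$, and likewise for single compressions. The one genuinely new point is the \emph{linearity} requirement built into the definition of a compression on an order unit space in the sense of \cite{FPspectres}: a group compression is a priori only additive, hence only rational-linear on the divisible group $V$. Here I would invoke Corollary \ref{co:archaf}: every retraction on $E$ is affine, so each $J_p$ is affine on $[0,u]$ and in particular satisfies $J_p(\lambda a)=\lambda J_p(a)$ for all real $\lambda\in[0,1]$. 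Combining this real-homogeneity on the unit interval with the additivity of $\tilde J_p$ and the fact that every element of $V$ is a real scalar multiple of a difference of elements of $[0,u]$, the extension $\tilde J_p$ is real-linear, hence a compression in the sense of \cite{FPspectres}. Conversely, restricting a linear compression on $(V,V^+,u)$ to $[0,u]$ yields an additive idempotent retraction whose range is an interval, i.e. a compression on $E$. These two passages are mutually inverse and preserve foci, giving the claimed bijection for compressions and for compression bases.

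For part (2), I would observe that $E$ is archimedean (being strongly archimedean) and divisible, so Theorem \ref{thm:divisible_spectral} applies and gives that $E$ is spectral if and only if its universal group $(V,u)$ is spectral in the sense of \cite{Forc}. It then remains to match spectrality of $(V,u)$ as a partially ordered abelian group with spectrality of $(V,V^+,u)$ as an order unit space in the sense of \cite{FPspectres}. Both notions are the conjunction of the Rickart property and general comparability for a compression base; by part (1) the admissible compression bases, their foci, the Rickart mapping, and the comparability condition are the same objects expressed in the two frameworks, the only formal difference being the linearity demanded in the order unit space setting, which part (1) shows is automatic here. Hence the two spectrality notions coincide, which completes the proof.

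The step I expect to be the main obstacle is the linearity point in part (1): passing from additive (hence merely rational-linear) compressions to genuinely real-linear ones. This is exactly where strong archimedeanity is indispensable, entering through Corollary \ref{co:archaf} and the underlying Theorem \ref{th:archaf}; without it the group compressions need not restrict to order unit space compressions, and the equivalence of the two spectrality notions would break down.
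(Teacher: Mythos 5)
Your proposal is correct and follows essentially the same route as the paper: the paper derives this theorem precisely from the universal-group correspondence for compression bases, the identification $E\simeq V[0,u]$ via \cite[Thm. 3.4]{GPBB} and Theorem \ref{th:archim}, the affineness result (Corollary \ref{co:archaf}) to upgrade additive extensions to real-linear ones, and Theorem \ref{thm:divisible_spectral} for the spectrality equivalence. Your identification of the linearity issue as the crux — and its resolution through strong archimedeanity — is exactly the point the paper's preceding results (Theorem \ref{th:archaf}, Corollary \ref{co:archaf}) were set up to handle.
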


The spectral resolutions in order unit spaces are indexed by all real numbers: the spectral projections are defined by
\[
p_{a,\lambda} = (a-\lambda 1)_+^*,\qquad \lambda\in \mathbb R.
\]
Again by divisibility and the properties of the spectral resolutions in order unit spaces
\cite[Thm. 3.5 and Remark
3.1]{FPspectres}, we obtain using Corollary \ref{coro:Arch}:

\begin{theorem} Let $E$ be a convex archimedean effect algebra with universal group given
by the ordered linear space $(V,V^+)$ with order unit $u$.  Assume that $E$ is spectral  and let $\{p_{a,\lambda}
\}_{\lambda\in \mathbb Q}$ be the rational spectral resolution of an element $a\in E$. For $\lambda \in \mathbb R$, put
\[
p_{a,\lambda}:=\bigwedge_{\mu>\lambda, \ \mu\in \mathbb Q} p_{a,\mu}.
\]
Then
\begin{enumerate}
\item $E$ is strongly archimedean, so that $(V,V^+,u)$ is an order unit space.
\item $(V,V^+,u)$ with the extended compression base is a spectral order unit space.
\item $\{p_{a,\lambda}\}_{\lambda\in \mathbb R}$ coincides with the spectral resolution of $a$ in $(V,V^+,u)$.
\item $a$ is given by the Riemann Stieltjes type integral
\[
a=\int_0^1 \lambda dp_{a,\lambda}.
\]

\end{enumerate}

\end{theorem}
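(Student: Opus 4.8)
The plan is to assemble the statement from the structural results already proved for divisible and convex effect algebras, reducing everything to the known theory of spectral resolutions in order unit spaces, so that parts (1), (2) and (4) become essentially immediate and the real content lies in the identification of the two families of projections in (3).

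First I would establish (1). Since $E$ is convex it is divisible, and since $E$ is spectral it has the b-comparability property. As $E$ is also archimedean, Corollary \ref{coro:Arch} applies and yields that $E$ is strongly archimedean. By the equivalence of (b) and (c) in Theorem \ref{th:archim}, the universal group $(V,V^+,u)$ is then an order unit space. For (2) I would invoke Theorem \ref{thm:convex_spectral}: the fixed compression base $(J_p)_{p\in P}$ extends to a compression base on $(V,V^+,u)$, and spectrality of $E$ is equivalent to spectrality of $(V,V^+,u)$; hence the latter is a spectral order unit space.

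The core of the argument is (3), identifying the two families. Since $E$ is archimedean and divisible, Theorem \ref{thm:divisible_spectral} gives that the rational spectral resolution $\{p_{a,\lambda}\}_{\lambda\in\mathbb Q}$ obtained in $E$ coincides with the one obtained in the universal group $G=V$, namely $p_{a,\lambda}=\bigl((na-mu)_+\bigr)^*$ for $\lambda=m/n$, cf. \eqref{eq:spectprojs}. On the other hand, the spectral resolution of $a$ in the order unit space is $q_\lambda:=(a-\lambda u)_+^*$ for $\lambda\in\mathbb R$. The key step is the computation that for rational $\lambda=m/n$, divisibility gives $na-mu=n(a-\lambda u)$, hence $(na-mu)_+=n(a-\lambda u)_+$, and since the Rickart map is invariant under multiplication by a positive integer (the support being unchanged, $(ng)^*=g^*$), one gets $p_{a,\lambda}=q_\lambda$ for every rational $\lambda$. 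Thus the two families agree on $\mathbb Q$. Finally, the order unit space spectral resolution is right continuous, $q_\lambda=\bigwedge_{\mu>\lambda}q_\mu$, by \cite[Thm. 3.5 and Remark 3.1]{FPspectres}; combining this with density of the rationals, the definition $p_{a,\lambda}:=\bigwedge_{\mu>\lambda,\,\mu\in\mathbb Q}p_{a,\mu}$ yields $p_{a,\lambda}=\bigwedge_{\mu>\lambda,\,\mu\in\mathbb Q}q_\mu=\bigwedge_{\mu>\lambda}q_\mu=q_\lambda$ for every real $\lambda$, establishing (3).

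For (4), the Riemann Stieltjes representation $a=\int_0^1\lambda\,dp_{a,\lambda}$ is a standard property of spectral resolutions in order unit spaces, \cite[Thm. 3.5]{FPspectres}; since by (3) the family $\{p_{a,\lambda}\}_{\lambda\in\mathbb R}$ is exactly the spectral resolution of $a$ in $(V,V^+,u)$, the integral follows, the range being $[0,1]$ because $a\in E=V[0,u]$ has spectrum contained in $[0,1]$. I expect the main obstacle to be the precise matching in (3): verifying the scaling identity $\bigl((na-mu)_+\bigr)^*=(a-\lambda u)_+^*$ for the Rickart map, and then reconciling the two right-continuous extensions from $\mathbb Q$ to $\mathbb R$, so that the group-theoretic rational resolution and the order unit space real resolution are genuinely the same object. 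Once this identification is secured, the remaining parts are direct consequences of the cited theorems.
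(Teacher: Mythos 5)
Your proposal is correct and follows essentially the same route as the paper, which derives the theorem from divisibility (Theorem \ref{thm:divisible_spectral}), Corollary \ref{coro:Arch} with Theorem \ref{th:archim}, Theorem \ref{thm:convex_spectral}, and the properties of spectral resolutions in order unit spaces from \cite[Thm. 3.5 and Remark 3.1]{FPspectres}. The scaling identity $\bigl((na-mu)_+\bigr)^*=(a-\lambda u)_+^*$ that you verify explicitly is exactly the content the paper compresses into the phrase ``by divisibility,'' so your write-up is a faithful expansion of the paper's argument rather than a different one.
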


\begin{example}\label{ex:hilb_convex}
A prominent example of a convex effect algebra is the Hilbert space effect algebra $E(\mathcal H)$. We have already
checked  in
Example \ref{ex:hilb_spec} that $E(\mathcal H)$ is spectral. The corresponding spectral resolution coincides with the
usual spectral resolution of Hilbert space effects.

\end{example}

\begin{example}\label{ex:convexMV} A convex archimedean MV-effect algebra $E$ is isomorphic to a dense subalgebra in the algebra $C(X,[0,1])$ of continuous functions
$X\to [0,1]$ for a compact Hausdorff space $X$, \cite[Thm. 7.3.4]{DvPu}. If $E$ is norm-complete (in the supremum norm) then $E\simeq C(X,[0,1])$.  By the above results, $E$ is spectral if and only if the space $C(X,\mathbb R)$ of all continuous real functions, with its natural order unit space structure, is spectral in the sense
of \cite{FPspectres}. It was proved in \cite{FPmonot} that this is true if and only if $X$ is basically disconnected,
which is equivalent to the fact that $E$ is monotone $\sigma$-complete, see \cite{Good}.

\end{example}

\begin{example}\label{ex:cs} Let $(X,\|\cdot\|)$ be a (real) Banach space and let
\[
E=\{ (\lambda, x):\ \lambda\in \mathbb R, x\in X,\ \|x\|\le \lambda\le 1-\|x\|\}.
\]
Then $E$ is a convex archimedean effect algebra. The corresponding order
unit space was considered already  in \cite{berd} and was subsequently called a \emph{generalized spin factor}
in \cite{beod}.

By Theorem \ref{thm:convex_spectral} and \cite[Thm. 6.5]{JenPul}, $E$ is
spectral if and only if $X$ is reflexive and strictly convex. Moreover, it follows from  \cite[Thm.1]{beod}, see also
\cite[Thm. 6.6]{JenPul}, that in addition $X$ is also smooth if and only if $E$ is spectral
in the stronger sense derived from spectral duality due to Alfsen-Schultz \cite{AlSh}.

\end{example}

%
%
%

\section{Conclusions and future work}

We have shown that in a spectral effect algebra in the sense of \cite{Pucompr}, any element has a well defined binary spectral resolution in terms of specific principal elements called projections. In the case that $E$ is archimedean, we found a characterization of the spectral resolution as the unique family of projections satisfying certain conditions.
There are important properties expected from a spectral resolution: that it uniquely determines the corresponding element as well as all projections compatible with it.
We were able to prove these properties only under the condition that a separating family of states exists. It is a question whether this seemingly rather strong condition can be dropped, but note that even  for the
proof of corresponding properties for groups (see \ref{thm:rsd2}, \cite[Thm 4.4]{Forc}) it
was assumed that $G$ is archimedean, which in this case is equivalent to existence of an
\emph{ordering} set of states.

We also looked at some special cases of interval effect algebras: (effect algebras with
RDP, divisible
and convex effect algebras) and proved that spectrality and spectral resolutions are
equivalent to the respective notions in their universal group. However, we found an
example where this is not true. We conjecture that such equivalence holds for unit intervals
in \emph{unperforated} partially ordered unital abelian groups. Note that under such an equivalence, for the spectral resolution to be right continuous, the effect algebra $E$ must have an ordering set of states, this follows from Proposition \ref{prop:right_cont} and Theorem \ref{th:archim}.

In general, it is still not clear what intrinsic properties of $E$ are necessary and
sufficient for existence of a compression base making it spectral. For this, it is
important to characterize spectrality  for MV-effect algebras, which are the basic building blocks for spectral effect algebras. 
In this case, some sufficient conditions are known, such as $E$ being  monotone  $\sigma$-complete or a 
 boolean algebra, but a general characterization (even under some additional assumptions
 such as archimedeanity or norm completeness) is not known. The question how the different
 C-blocks are connected and intertwined by the compressions in the compression base seems much more challenging.

For convex spectral effect algebras one can consider the possibility of introducing the functional calculus for its elements. A further  possible line of research is
to look at spectrality in some special cases  of (convex) effect algebras. For example, in
\cite{wetering2} a spectral theorem was proved for elements in a  normal  sequential effect algebra, 
one can show that this corresponds to our results. A related question is the characterization of convex sequential effect algebras, in particular the unit interval in a JB-algebra, in the context of spectrality.

\appendix

\section{Spectrality  in ordered groups}\label{app:spec}
 \setcounter{equation}{0}
\renewcommand{\theequation}{\thesection.\arabic{equation}}

The notions considered in this paper for an effect algebra $E$ were inspired by
similar  notions   for an unital ordered abelian group $(G,u)$,  defined and studied by Foulis in
\cite{Funig, Forc}. Here we collect some basic results on spectrality in this setting.

If $E=G[0,u]$ is the unit interval in $(G,u)$,  we can say, in short, that
a compression on $G$ is an order-preserving group endomorphism  $J: G\to G$ such that the restriction of $J$ to $E$ is a
compression on $E$. Similarly, a compression base  $(J_p)_{p\in P}$ in $G$  is a family of compressions in $G$ such that the restrictions
$(J_p|_E)_{p\in P}$ form a compression base in $E$.

We say that $G$ is a \emph{compressible group} if every retraction is a compression and each compression
is  uniquely determined by its focus. In this case, the set of all compressions is a compression base, see
\cite{Funig} for the definition, proofs and further results.

Note that in \cite{Forc} spectrality was introduced in the setting of compressible groups. We will use a slightly more
general assumption that we have a distinguished  compression base $(J_p)_{p\in P}$ in
$G$, that will be fixed throughout the present section. The proofs of the statements below remain the same.

The definitions of compatibility and commutants for $G$ are
analogous to those in Sec. \ref{sec:commutants} for effect algebras. For $p\in P$, we will use the notation
$C_G(p)$ for the set of all elements $g\in G$ compatible with $p$. The definitions of $PC(g)$ and the bicommutant $P(g)$ extend straightforwardly to all $g\in G$.

\medskip
\noindent
\textbf{General comparability.}
 We say that $G$ has \emph{general comparability} if for any $g\in G$, the set
\[
P_\pm(g):=\{p\in P(g):\ J_{p'}(g)\le 0 \le J_p(g)\}
\]
is nonempty. In this case, $G$ is   unperforated  (that is, $ng\in G^+$ implies $g\in G^+$ for $g\in G$ and $n\in
\mathbb N$) \cite[Lemma 4.8]{Fcomgroup}.

Since $u$ is an order unit in $G$, it defines an order unit seminorm in $G$ as
\[
\|g\|:=\inf\{\frac nk:\ k,n\in \mathbb N, -nu\le kg\le nu\}.
\]
We say that $G$ is \emph{archimedean} \cite{Good} if
 $g,h \in G$ and $g\le nh$ for all $n\in \mathbb N$ implies $g\le 0$, we remind the reader
 that this is in general stronger than archimedeanity of $E$.

The following result was basically proved in \cite{Forc}.

\begin{lemma}\label{lemma:app_arch} Assume that $G$ has general comparability. Then the following are equivalent.
\begin{enumerate}
\item $E$ is archimedean.
\item $G$ is archimedean.
\item $\|\cdot\|$ is a norm on $G$.
\item $G$ has an order determining set of states.
\item $G$ has a separating set of states.

\end{enumerate}

\end{lemma}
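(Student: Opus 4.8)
The plan is to prove the five conditions equivalent by a cycle that isolates the one genuinely new implication, namely (i) $\Rightarrow$ (iii), and to delegate the remaining links to standard facts about archimedean order-unit groups together with the standing hypothesis of general comparability. First I would record the classical and the easy links among (i), (ii), (iii): the equivalence (ii) $\Leftrightarrow$ (iii) is the standard statement that the order-unit seminorm of an order-unit group is a norm precisely when the group is archimedean \cite{Good}, while (iii) $\Rightarrow$ (i) is immediate, since for $a\in E$ with $na\le u$ for all $n$ one has $0\le na\le u$, whence $n\|a\|=\|na\|\le 1$ for every $n$ and so $\|a\|=0$, forcing $a=0$ once $\|\cdot\|$ is a norm.

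The core of the argument is (i) $\Rightarrow$ (iii), where I would use general comparability twice. First, general comparability makes $G$ unperforated \cite[Lemma 4.8]{Fcomgroup}. Second, for a fixed $g\in G$ the set $P_\pm(g)$ is nonempty; choosing $p\in P_\pm(g)$ and using $p\in P(g)$ gives the Jordan-type decomposition $g=J_p(g)+J_{p'}(g)$ with $g_+:=J_p(g)\ge 0$ and $g_-:=-J_{p'}(g)\ge 0$. Since each $J_p$ is an order-preserving group homomorphism with $J_p(u)=p\le u$, it is contractive for the order-unit seminorm, so $\|g\|=0$ forces $\|g_+\|=\|g_-\|=0$, and it suffices to show $g_+=0$ (symmetrically $g_-=0$). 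From $\|g_+\|=0$ and $g_+\ge 0$, for each $m$ I can pick $n,k$ with $mn\le k$ and $kg_+\le nu$; multiplying by $m$ and using $mn\le k$ gives $k(mg_+-u)\le 0$, and unperforatedness yields $mg_+\le u$ for every $m$. In particular $g_+\le u$, so $g_+\in E$, and the relations $mg_+\le u$ for all $m$ are exactly the hypothesis of archimedeanity of $E$, giving $g_+=0$. Hence $\|\cdot\|$ is a norm.

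To bring in the states I would close the cycle through (ii) $\Rightarrow$ (iv) $\Rightarrow$ (v) $\Rightarrow$ (iii). The implication (ii) $\Rightarrow$ (iv) is the classical fact that an archimedean order-unit group has an order-determining set of states \cite{Good}; (iv) $\Rightarrow$ (v) is trivial, since an ordering family is automatically separating. For (v) $\Rightarrow$ (iii) I would use that every state $s$ is contractive, $|s(g)|\le\|g\|$ (immediate from $-nu\le kg\le nu \Rightarrow |s(g)|\le n/k$); thus $\|g\|=0$ forces $s(g)=0=s(0)$ for all $s$ in a separating family, whence $g=0$ and $\|\cdot\|$ is a norm. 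Together with (i) $\Leftrightarrow$ (iii) and (ii) $\Leftrightarrow$ (iii) this yields all equivalences.

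I expect the single real obstacle to be (i) $\Rightarrow$ (iii): passing from archimedeanity of the bounded piece $E$ to that of the whole group $G$. The difficulty is that an arbitrary $g\in G$ need not be comparable to a bounded element, and without extra structure $E$-archimedeanity does not control $G$, as the horizontal-sum example and its non-torsion-free universal group show. General comparability is exactly what resolves this: it supplies the positive/negative splitting $g=J_p(g)+J_{p'}(g)$ that pushes the question into $E$, and the unperforatedness it entails is what converts the approximate bound $kg_+\le nu$ into the exact bound $mg_+\le u$. The only points needing care are the verifications that compressions and states are contractive for the order-unit seminorm and the bookkeeping with $m,n,k$ in the unperforatedness step.
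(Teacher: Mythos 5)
Your proof is correct and rests on the same mechanism as the paper's: general comparability supplies $p\in P_\pm(g)$, the compressions $J_p$, $J_{p'}$ split off a positive part, unperforation (via \cite[Lemma 4.8]{Fcomgroup}) upgrades the approximate bounds to exact bounds $mJ_p(g)\le u$ for all $m$, and archimedeanity of $E$ then forces this part to vanish. The only difference is routing: the paper proves (i)~$\Rightarrow$~(ii) directly (from $ng\le h$ it deduces $J_p(g)=0$, hence $g=J_{p'}(g)\le 0$) and quotes (ii)~$\Leftrightarrow$~(iii)~$\Leftrightarrow$~(iv) from Goodearl, whereas you prove (i)~$\Rightarrow$~(iii) and (v)~$\Rightarrow$~(iii) instead --- an inessential reshuffling of the cycle of implications.
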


\begin{proof} Assume (i) and let $g,h\in G$ be such that $ng\le h$ for all $n\in \mathbb N$. Let $p\in P_\pm(g)$, then
 $J_p(g)\in G^+$ and hence $0\le nJ_p(g)\le J_p(h)$ for all $n\in \mathbb N$. Since $u$ is an order unit, there is some
$m\in \mathbb N$ such that $J_p(h)\le mu$, and then $nJ_p(g)\le mu$ for all $n$. In particular,
$mkJ_p(g)\le mu$ for all $k\in \mathbb N$ and since $G$ is unperforated, we get $kJ_p(g)\le u$, for all $k\in \mathbb
N$. By (i), this implies that $J_p(g)=0$, hence
\[
g=J_{p'}(g)\le 0,
\]
so that (ii) holds. The equivalence of (ii)-(iv) was proved in \cite[Thm. 4.14, Thm. 7.12]{Good}, see also \cite[Thm. 3.3]{Forc}. The
implications (iv) $\implies$ (v) and (v) $\implies$ (i) are easy.

\end{proof}

\begin{lemma}\label{lemma:app_sum} Assume that $G$ has general comparability and let $g\in G$. Let $n\in
\mathbb N$, $\epsilon >0$  and $m_0,\dots, m_N\in \mathbb Z$ be such that  $m_i\le m_{i+1}$ and $m_0\le -\|g\|$, $m_N\ge \|g\|$.
Then there are elements $u_1,\dots,u_N\in P(g)$, $\sum_i u_i=u$ such that
\[
\|ng-\sum_i m_i u_i\|\le \max_i (m_{i+1}-m_i).
\]

\end{lemma}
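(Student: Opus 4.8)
The plan is to realise the $u_i$ as the \emph{spectral bands} of the single element $ng$: I would produce a decreasing chain of level projections $q_0\ge q_1\ge\dots\ge q_N$ lying in one Boolean piece of $P$, and then take the $u_i$ to be their consecutive differences. First I pass to $ng$, recalling that the order unit seminorm is homogeneous (so $\|ng\|=n\|g\|$) and that general comparability makes $G$ unperforated. For each $i=0,\dots,N$ I apply general comparability to $ng-m_iu$, obtaining a projection $q_i\in P_\pm(ng-m_iu)$, i.e.
\[
J_{q_i}(ng)\ge m_iq_i,\qquad J_{q_i'}(ng)\le m_iq_i'.
\]
The crucial point is that $u$ commutes with every projection and that translating or scaling by integer multiples of $u$ does not change the commutant, so $P(ng-m_iu)=P(g)$, exactly as in the proof of Theorem~\ref{thm:comparability_divisible}. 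Hence all the $q_i$ lie in the single Boolean subalgebra $P(g)\subseteq P$ and pairwise commute, and I may use $\wedge,\vee,{}'$ on them freely.

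Next I would establish that the chain is decreasing. For $i<j$ set $r:=q_i'\wedge q_j\in P(g)$; using $J_r\circ J_{q_i'}=J_{r\wedge q_i'}=J_r$ and $J_r\circ J_{q_j}=J_r$, the two inequalities above compress to $m_jr\le J_r(ng)\le m_ir$, whence $(m_j-m_i)r\le 0$. Since $m_j>m_i$ and $G$ is unperforated, this forces $r=0$, i.e. $q_j\le q_i$ (equal thresholds are handled by choosing equal projections). For the endpoints, the hypotheses $m_0\le-\|ng\|$ and $m_N\ge\|ng\|$, combined with the definition of the seminorm and unperforation, force $q_0=u$ and $q_N=0$; in particular this yields the global bounds $ng\ge m_0u$ and $ng\le m_Nu$. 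Thus $u=q_0\ge q_1\ge\dots\ge q_N=0$.

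I then set $u_i:=q_{i-1}\wedge q_i'\in P(g)$ for $1\le i\le N$. These are pairwise orthogonal and telescope to $\sum_iu_i=q_0-q_N=u$, and since each $u_i\in P(g)$ the element decomposes as $ng=\sum_iJ_{u_i}(ng)$. On the band $u_i$ the relations $u_i\le q_{i-1}$ and $u_i\le q_i'$ give, after compressing the displayed inequalities, $m_{i-1}u_i\le J_{u_i}(ng)\le m_iu_i$. Assigning the value $m_i$ to the band $u_i$ therefore yields $-(m_i-m_{i-1})u_i\le J_{u_i}(ng)-m_iu_i\le 0$; summing over $i$ and bounding each gap by $d:=\max_i(m_{i+1}-m_i)$ gives $-du\le ng-\sum_im_iu_i\le 0$, so that $\|ng-\sum_im_iu_i\|\le d$, which is the claim.

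The main obstacle is the very first step: confining all the level projections $q_i$ to a single commutative (Boolean) part of $P$, since only there do the lattice operations, the telescoping of the bands, and the orthogonal decomposition of $ng$ make sense. This is precisely what $P_\pm(ng-m_iu)\subseteq P(ng-m_iu)=P(g)$ together with the regularity of the OMP $P$ provides. The technical heart is then the monotonicity step, where unperforation — itself a consequence of general comparability — is exactly what upgrades the order sandwich $(m_j-m_i)r\le 0$ into $r=0$.
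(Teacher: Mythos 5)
Your proof is correct, and it follows the same skeleton as the paper's: level projections $q_i$ attached to the thresholds $m_i$ by general comparability, all lying in the Boolean subalgebra $P(g)$, bands $u_i:=q_{i-1}\wedge q_i'$ summing to $u$, the band estimates $m_{i-1}u_i\le J_{u_i}(ng)\le m_iu_i$, and summation. The genuine difference is in how the nested chain $u=q_0\ge q_1\ge\dots\ge q_N=0$ is produced, which is the technical heart of the argument. The paper never proves that arbitrary choices are comparable; it \emph{builds} nesting into the construction, recursively setting $q_{i+1}:=r_{i+1}\wedge q_i$ for an arbitrary $r_{i+1}\in P_\pm(ng-m_{i+1}u)$, and then verifies the closure property that this meet again lies in $P_\pm(ng-m_{i+1}u)$, using the additive decomposition $J_{q_{i+1}'}=J_{r_{i+1}'}+J_{r_{i+1}}J_{q_i'}$ (cf.\ \cite[Thm. 3.7]{Fgc}). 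You instead show that nesting is \emph{automatic}: for $m_i<m_j$, any $q_i\in P_\pm(ng-m_iu)$ and $q_j\in P_\pm(ng-m_ju)$ commute, since both lie in $P(ng-m_iu)=P(ng-m_ju)=P(g)$ (an identification that needs torsion-freeness, which you correctly extract from unperforation), and compressing the two defining inequalities by $r:=q_i'\wedge q_j$ gives $(m_j-m_i)r\le 0$, whence $r=0$ by unperforation. Your route is arguably cleaner and yields a small bonus (at strictly increasing thresholds \emph{every} choice of comparability projections is automatically nested), while the paper's recursive-meet device needs no unperforation at this step and treats repeated thresholds uniformly; you must handle $m_i=m_{i+1}$ separately, and your fix of choosing equal projections is legitimate because $P_\pm(ng-m_iu)=P_\pm(ng-m_{i+1}u)$ in that case.

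One point deserves care, though it affects the published proof equally: both arguments need $m_0u\le ng\le m_Nu$, so that $u\in P_\pm(ng-m_0u)$ and $0\in P_\pm(ng-m_Nu)$. The paper assumes this tacitly when it puts $q_0=u$, $q_N=0$; you derive it from the seminorm and unperforation, which works when the bounds $m_0\le-\|ng\|$, $m_N\ge\|ng\|$ hold strictly, but can fail in the boundary case since the infimum defining $\|\cdot\|$ need not be attained in a non-archimedean group. This, like the unused $\epsilon$ and the $\|g\|$ versus $\|ng\|$ normalization, is a defect of the lemma's statement rather than of your argument. Finally, saying the hypotheses ``force'' $q_0=u$ and $q_N=0$ is too strong --- these are not the only elements of the respective sets $P_\pm$ --- but all your construction needs is that they are admissible choices, which is what you in fact use.
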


\begin{proof} This proof is very similar to the proof of \cite[Thm. 3.22]{JenPul}, we give the proof here for
completeness. Similarly as in \cite[Lemma 3.21]{JenPul}, we may construct a nonincreasing sequence $q_i$, $i=0,\dots N$ such that
$q_i\in P_\pm(ng-m_{i}u)$. Put $q_0=u$, $q_N=0$  and for $i=0,\dots,N-2$ put
\[
q_{i+1}:= r_{i+1}\wedge q_i
\]
where $r_{i+1}$ is any element in $P_\pm(ng-m_{i+1}u)$. We will check that $q_{i+1}\in P_\pm(ng-m_{i+1}u)$ (cf \cite[Thm.
3.7]{Fgc}). Indeed, we have
\[
J_{q_{i+1}}(ng-m_{i+1}u)=J_{q_{i+1}}J_{r_{i+1}}(ng-m_{i+1}u)\ge 0
\]
and since $q_{i+1}'=r_{i+1}'+r_{i+1}\wedge q_i'$ and all these elements are in $P(g)$, we obtain
\[
J_{q_{i+1}'}(ng-m_{i+1}u)=(J_{r_{i+1}'}+J_{r_{i+1}}J_{q_i'})(ng-m_{i+1}u) \le J_{r_{i+1}'}(ng-m_{i+1}u)+J_{r_{i+1}}J_{q_i'}(ng-m_{i}u)\le
0.
\]
Next, for $i=1,\dots,N$, put
\[
u_i:=q_{i-1}-q_i=q_{i-1}\wedge q_i',
\]
then $u_i\in P(g)$ and $\sum_{i=1}^N u_i= u$. We also have
\begin{align*}
J_{u_i}(ng-m_iu)=J_{q_{i-1}}J_{q_i'}(ng-m_iu)\le 0, \quad 
J_{u_i}(ng-m_{i-1}u)= J_{q_i'} J_{q_{i-1}}(ng-m_{i-1}u)\ge 0,
\end{align*}
so that $m_{i-1}u_i\le J_{u_i}(ng)\le m_i u_i$ and hence
\[
 -(m_i-m_{i-1})u_i\le J_{u_i}(ng)-m_iu_i\le 0.
\]
Summing over $i$ now gives the result.

\end{proof}

\medskip
\noindent
\textbf{Rickart property.}
We say that $G$ has the \emph{Rickart property} (or that $G$ is \emph{Rickart}) if
there is a mapping $*: G\to P$, called the \emph{Rickart mapping}, such that for all $g\in G$ and $p\in P$, $p\leq
g^*\,\Leftrightarrow\, g\in C(p), J_p(g)=0$. In this case, the unit interval $E$ has the projection cover property, with the projection cover obtained as
\begin{equation}\label{eq:projcover_rickart}
a\dg=a^{**}=(a^*)',\qquad \mbox{for } a\in E.
\end{equation}
In particular, $P$ is an OML. If $G$ has general comparability, then the projection cover property of $E$ is equivalent to the Rickart property of $G$, see \cite[Thm. 6.5]{Fgc}.

Some important properties of the Rickart mapping are collected in the following lemma.
\begin{lemma}\label{le:rickmap} {\rm \cite[Lemma 6.2]{Fgc}} For all $g,h\in G$ and all $p\in P$ we have:
{\rm(i)} $g^*\in P(g)$ and
$J_{g^*}(g)=0$. {\rm(ii)} $p^*=p'$. {\rm(iii)} $g^{**}:=(g^*)^*=(g^*)'$.   {\rm(iv)}
$0\leq g\leq h \implies g^{**}\leq h^{**}$. {\rm(v)} $J_p(g)=g \,\Leftrightarrow \, g^{**}\leq p$.
\end{lemma}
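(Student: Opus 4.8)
The plan is to base every item on the defining property of the Rickart mapping, namely that for $g\in G$ and $p\in P$,
\[
p\le g^* \iff g\in C_G(p)\ \text{and}\ J_p(g)=0,
\]
together with the supplement relation $\ker J_p=J_{p'}(G)$, which yields the useful equivalences $J_p(g)=0\iff g\in J_{p'}(G)$ and $J_p(g)=g\iff J_{p'}(g)=0$, each of the latter two forcing $g\in C_G(p)$. I would prove the statements in the order (ii), (iii), (v), (iv), leaving the substantive half of (i) for last.

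For (ii), fix $p\in P$ and take $q\in P$ in the defining property. Here $J_q(p)=0\iff p\le q'$, and $p\le q'$ already entails $p\perp q$, hence $p\leftrightarrow q$, so the clause $p\in C_G(q)$ is automatic; thus $q\le p^*\iff q\le p'$ for all $q$, and comparing down-sets in the poset $P$ gives $p^*=p'$. Part (iii) is then immediate, since applying (ii) to the projection $g^*$ gives $g^{**}=(g^*)^*=(g^*)'$. For (v) I would rewrite $g^{**}\le p$ as $p'\le g^*$ using (iii) and order-reversal of complementation in the OML $P$, and then apply the defining property (together with $C_G(p')=C_G(p)$) to obtain $g^{**}\le p\iff g\in C_G(p)\ \text{and}\ J_{p'}(g)=0$. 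The supplement relation collapses these two clauses into one: if $g\in C_G(p)$ then $g=J_p(g)+J_{p'}(g)$, so $J_{p'}(g)=0$ is the same as $J_p(g)=g$, while conversely $J_p(g)=g$ places $g$ in $J_p(G)=\ker J_{p'}$, giving $J_{p'}(g)=0$ and $g\in C_G(p)$. Hence $g^{**}\le p\iff J_p(g)=g$.

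Part (iv) falls out of (v). Given $0\le g\le h$, apply (v) to $h$ to get $J_{h^{**}}(h)=h$, equivalently $J_{h^*}(h)=0$ since $h^*=(h^{**})'$. As $J_{h^*}$ is order preserving and $J_{h^*}(0)=0$, sandwiching gives $0\le J_{h^*}(g)\le J_{h^*}(h)=0$, so $J_{h^*}(g)=0$; thus $g\in\ker J_{h^*}=J_{h^{**}}(G)$ and $J_{h^{**}}(g)=g$ by idempotency, whence $g^{**}\le h^{**}$ by one more use of (v). Note that no commutation hypothesis is needed here, precisely because membership of $g$ in the range of $J_{h^{**}}$ already forces $J_{h^{**}}(g)=g$.

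It remains to prove (i), where the real work lies. Taking $p=g^*$ in the defining property at once gives $J_{g^*}(g)=0$ and $g\in C_G(g^*)$, that is, $g^*\in PC(g)$; the point is to show that $g^*$ commutes with every $r\in PC(g)$, so that $g^*\in P(g)$. Equivalently, the support $g^{**}=(g^*)'$ must commute with each such $r$. For $r\in PC(g)$ one has $g=J_r(g)+J_{r'}(g)$ with $J_r(g)\in G_r=J_r(G)$ and $J_{r'}(g)\in G_{r'}=J_{r'}(G)$; writing $s_1,s_2$ for the supports of $J_r(g)$ and $J_{r'}(g)$ (so $s_1\le r$, $s_2\le r'$ by (v), and $s_1\perp s_2$), the goal is to prove $g^{**}=s_1\oplus s_2$, which exhibits $g^{**}$ as compatible with $r$. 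The inequality $g^{**}\le s_1\oplus s_2$ is a short computation from (v) and the compatibility calculus $J_pJ_q=J_{p\wedge q}$ (the Corollary to Proposition \ref{prop:CB_orig}), giving $J_{s_1\oplus s_2}(g)=g$. The reverse inequality $s_1\oplus s_2\le g^{**}$ is the main obstacle, since extracting $J_r(g^{**})$ seems to presuppose the very compatibility $g^{**}\leftrightarrow r$ being sought. I expect to break the circularity using regularity of the OML $P$ and general comparability of $G$ (through the approximation of Lemma \ref{lemma:app_sum}), or else to bypass it entirely by invoking that $g^{**}$ is the projection cover of $g$ extended to $G$ and therefore lies in $P(g)$ by the group analogue of Proposition \ref{prop:pc_commutant}; for $a\in E$ this is exactly $a\dg=a^{**}=(a^*)'\in P(a)$ by \eqref{eq:projcover_rickart} and Proposition \ref{prop:pc_commutant}. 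Once $g^{**}\in P(g)$ is secured, $g^*=(g^{**})'$ lies in the same complement-closed commutant, and with $g^*\in PC(g)$ this yields $g^*\in P(g)$, completing (i).
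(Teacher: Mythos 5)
Parts (ii), (iii) and (v) of your argument are correct, and they are the standard manipulations of the defining property of the Rickart mapping; for context, the paper itself offers no proof of this lemma --- it is quoted from \cite[Lemma 6.2]{Fgc} --- so a blind proof has to actually supply Foulis's argument. Two problems remain. First, a soft spot: you assert the supplement relations $\ker J_p=J_{p'}(G)$ and $J_p(G)=\ker J_{p'}$ at the level of the whole group. These equalities are false for general $g\in G$ (in $B(\mathcal H)_{sa}$ with $J_p(a)=pap$, any nonzero off-diagonal self-adjoint element $a=pxp'+p'x^*p$ lies in $\ker J_p\cap\ker J_{p'}$ but in neither range); only the inclusions $J_{p'}(G)\subseteq\ker J_p$ follow from $J_{p}\circ J_{p'}=J_{p\wedge p'}=J_0=0$. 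Your uses happen to be salvageable: in (v) you need only the true inclusion, and in (iv) you apply the kernel equality to a \emph{positive} element, where it does hold --- but only via Foulis's convention that in a unital group every element of $G^+$ is a finite sum of elements of $E=G[0,u]$, so that $J_{h^*}(g)=0$ with $g=\sum_i e_i$, $e_i\in E$, forces $J_{h^*}(e_i)=0$, hence $e_i\le h^{**}$ and $J_{h^{**}}(g)=g$. This justification is missing and should be made explicit, since the compression axiom $J_p(a)=0\Leftrightarrow a\le p'$ lives on the unit interval, not on $G$.

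Second, and decisively: part (i) is not proved. You correctly isolate the substantive claim $g^*\in CPC(g)$, reduce it for $r\in PC(g)$ to showing $g^{**}=s_1\oplus s_2$ with $s_1=(J_r(g))^{**}\le r$ and $s_2=(J_{r'}(g))^{**}\le r'$, and your easy direction $g^{**}\le s_1\oplus s_2$ does go through via Lemma \ref{lemma:compatible_projs} and (v). But the reverse inequality $s_1\oplus s_2\le g^{**}$ --- equivalently that $J_{g^{**}}$ fixes $J_r(g)$ and $J_{r'}(g)$, which is essentially the compatibility $g^{**}\leftrightarrow r$ being sought --- is left as ``I expect to break the circularity\dots'', with two strategies sketched and neither executed. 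That reverse inequality is the whole content of (i). Moreover, neither fallback obviously closes it: Proposition \ref{prop:pc_commutant} and \eqref{eq:projcover_rickart} concern $a\in E$, so at best (after scaling) they would handle $g\in G^+$, whereas (i) is asserted for all $g\in G$, where $g^{**}$ is not the projection cover of anything; and the route through Lemma \ref{lemma:app_sum} imports general comparability and (for norm-closure of commutants, \cite[Cor.~3.2]{Forc}) archimedeanity --- general comparability is indeed a standing hypothesis in \cite{Fgc}, where the actual proof lives, but you would still have to carry the argument out. As it stands, the proposal establishes (ii), (iii), (v), establishes (iv) modulo the unjustified positive-cone kernel fact, and leaves the main assertion of (i) open.
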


\medskip
\noindent
\textbf{Orthogonal decompositions.} An \emph{orthogonal decomposition} of an element $g\in
G$ is defined as
\[
g=g_+-g_-,\qquad g_+,g_-\in G^+
\]
such that there is a projection $p\in P$ satisfying $J_p(g)=g_+$ and $J_{p'}(g)=-g_-$. A projection $p\in P$ defines an
orthogonal decomposition of $g$  if and only if $g\in C_G(p)$ and $J_{p'}(g)\le 0\le J_p(g)$.

\begin{lemma}\label{lemma:orthog_decomp} Assume that $G$ has general comparability. Then
\begin{enumerate}
 \item[(i)]\cite[Lemma 4.2]{Fgc}  For all $g\in G$, there is a unique orthogonal decomposition. This decomposition is
 defined by any element in $P_\pm(g)$.
\item[(ii)]\cite[Thm. 3.1]{Forc} If $G$ has also the Rickart property, then $g_+^{**}\in P_\pm(g)$. Moreover,
$g_+^{**}$  is the smallest projection defining the orthogonal decomposition of $g$.
\item[(iii)] Let $q$ be any projection such that $g\in C_G(q)$. Then
$(J_q(q))_\pm=J_q(g_\pm)$.
\end{enumerate}

\end{lemma}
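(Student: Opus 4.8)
The plan is to prove part (iii) (parts (i) and (ii) are quoted from \cite{Fgc,Forc}) by producing an explicit projection that defines the orthogonal decomposition of $J_q(g)$ and then appealing to the uniqueness asserted in (i). (Here I read the claim as $(J_q(g))_\pm = J_q(g_\pm)$, correcting the evident typo in $(J_q(q))_\pm$.) First I would fix a projection $p \in P_\pm(g)$, which exists by general comparability; by (i) it defines the orthogonal decomposition of $g$, so that $g_+ = J_p(g)$ and $g_- = -J_{p'}(g)$, with $J_{p'}(g)\le 0\le J_p(g)$ and $g\in C_G(p)$.

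The key preliminary observation is that $p$ commutes with $q$. Indeed, $g\in C_G(q)$ means $q\in PC(g)$, while $p\in P_\pm(g)\subseteq P(g)=CPC(g)\cap PC(g)\subseteq CPC(g)$, so $p$ commutes with every element of $PC(g)$, in particular $p\leftrightarrow q$. Consequently $p$ and $q$ generate a Boolean subalgebra of $P$, all the relevant meets $p\wedge q$, $p'\wedge q$ exist, and the composition rule $J_a\circ J_b=J_{a\wedge b}$ for commuting foci (the group analogue of the Corollary following Proposition \ref{prop:CB_orig}, valid since the group compressions are the homomorphic extensions of the effect-algebra ones) is available.

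I would then set $r:=p\wedge q$ and verify that $r\in P_\pm(h)$ for $h:=J_q(g)$, i.e.\ that $r$ defines the orthogonal decomposition of $h$. There are three things to check. (1) Using $p\wedge q\le q$, one has $J_r(h)=J_{p\wedge q}\circ J_q(g)=J_{p\wedge q}(g)=J_q(J_p(g))=J_q(g_+)\ge 0$, the last inequality by positivity of $J_q$. (2) Since $(p\wedge q)'=p'\vee q'$ and, by the Boolean distributive law, $(p'\vee q')\wedge q=p'\wedge q$, one gets $J_{r'}(h)=J_{(p\wedge q)'}\circ J_q(g)=J_{p'\wedge q}(g)=J_q(J_{p'}(g))=-J_q(g_-)\le 0$. (3) For $h\in C_G(r)$ it suffices that $J_r(h)+J_{r'}(h)=h$; as $p\wedge q\perp p'\wedge q$ with $(p\wedge q)\oplus(p'\wedge q)=(p\vee p')\wedge q=q$, and $g$ commutes with both foci, the group analogue of Lemma \ref{lemma:compatible_projs} yields $J_{p\wedge q}(g)+J_{p'\wedge q}(g)=J_q(g)=h$.

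Having established $r\in P_\pm(h)$, part (i) identifies $(J_q(g))_+=J_r(h)=J_q(g_+)$ and $(J_q(g))_-=-J_{r'}(h)=J_q(g_-)$, which is the assertion. The computations are routine; the only points demanding care are the commutativity $p\leftrightarrow q$, without which the meets and the composition rule would not be legitimate, and the Boolean identity $(p'\vee q')\wedge q=p'\wedge q$. I do not expect a genuine obstacle beyond this bookkeeping with the compression-composition identities.
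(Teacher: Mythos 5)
Your proof is in essence the paper's: fix $p\in P_\pm(g)$, prove $p\leftrightarrow q$ via the bicommutant (the paper asserts this commutation without spelling out your justification, which is the right one), commute the compressions, and invoke the uniqueness in (i). The one divergence is your choice of witnessing projection: the paper uses $p$ itself rather than $r=p\wedge q$. Since ``defines an orthogonal decomposition'' requires only $J_q(g)\in C_G(p)$ plus the sign conditions --- the projection need not lie below $q$, nor in the bicommutant of $J_q(g)$ --- the paper simply checks $J_p(J_q(g))=J_q(J_p(g))=J_q(g_+)\ge 0$ and $J_{p'}(J_q(g))=J_q(J_{p'}(g))=-J_q(g_-)\le 0$, with $J_q(g)\in C_G(p)$ obtained by applying $J_q$ to $g=J_p(g)+J_{p'}(g)$ and commuting; uniqueness then gives the claim with no meets, no Boolean distributivity, and no appeal to Lemma \ref{lemma:compatible_projs}. (Your opening conflation of ``$r\in P_\pm(h)$'' with ``$r$ defines the orthogonal decomposition of $h$'' is harmless for the same reason: uniqueness in (i) applies to any defining projection, as the paper's proof of (ii) emphasizes.)

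The detour through $r=p\wedge q$ is where your only real gap sits: in step (3) you assert that ``$g$ commutes with both foci'', i.e. $g\in C_G(p\wedge q)$ and $g\in C_G(p'\wedge q)$, without proof. This is not automatic: compatibility of a general element $g$ with the meet of two compatible projections, each compatible with $g$, is a regularity-type fact; the regularity of $P$ invoked in the paper concerns triples of \emph{projections} only, and the paper elsewhere (e.g. Lemma \ref{lemma:subinterval}) treats such commutant statements as things to be proved, not assumed. The claim is true and can be patched: write $(p\wedge q)'=p'\oplus(p\wedge q')$, use the group analogue of Lemma \ref{lemma:compatible_projs} together with $g\in C_G(p')$ to get $J_{(p\wedge q)'}(g)=J_{p'}(g)+J_pJ_{q'}(g)$, and conclude $J_{p\wedge q}(g)+J_{(p\wedge q)'}(g)=J_p\bigl(J_q(g)+J_{q'}(g)\bigr)+J_{p'}(g)=J_p(g)+J_{p'}(g)=g$; the case of $p'\wedge q$ is symmetric. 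With that inserted your argument closes, but notice that the patch uses exactly the ingredients that make the direct use of $p$ work, which is why the paper's route is shorter.
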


\begin{proof} The last part of statement (ii) was proved in \cite[Thm. 3.1]{Forc} in a slightly weaker form, namely that
$g_+^{**}$ is the smallest element in $P_\pm(g)$. The difference is that not all projections defining the orthogonal
decomposition are in the bicommutant $P(g)$. However, the proof remains the same: it follows immediately from
$J_p(g_+)=g_+$, which implies $g_+^{**}\le p$, see Lemma \ref{le:rickmap} (v).
For (iii), let $p\in P_\pm(g)$, then $p$ commutes with $q$ and $J_p(g)=g_+$,
$J_{p'}(g)=-g_-$. We then have
\begin{align*}
J_p(J_q(g))=J_q(J_p(g))=J_q(g_+)\ge 0,\qquad  J_{p'}(J_q(g))=J_q(J_{p'}(g))=-J_q(g_-)\le 0.
\end{align*}
It follows that $p$ defines the orthogonal decomposition of $J_q(g)$ and
\begin{align*}
J_q(g)_+=J_pJ_q(g)=J_q(g_+),\qquad J_q(g)_-=-J_{p'}(J_q(g))=J_q(-J_{p'}(g))=J_q(g_-).
\end{align*}

\end{proof}

\medskip
\noindent\textbf{Spectrality.}
Assume that $G$ has both the general comparability and the Rickart property. In that case, we will say that $G$ is
\emph{spectral}.
For  $g\in G$ and $\lambda\in \mathbb Q$, let
\begin{equation}\label{eq:app_spectprojs}
p_{g,\lambda}:= ((ng-mu)_+)^*,\qquad \lambda=\frac{m}{n},\ n>0.
\end{equation}
The element  $p_{g,\lambda}$ is well defined, in the sense that it does not depend on the expression
$\lambda=\frac{m}{n}$. The family of projections $(p_{g,\lambda})_{\lambda\in \mathbb Q}$, introduced in
\cite[Def. 4.1]{Forc}, is called the  \emph{rational spectral resolution}  of $g$.
Put
\begin{align*}
l_g&:=\sup\{m/n:\ m,n\in \mathbb Z, 0<n, mu\le ng\}=\sup\{\lambda\in \mathbb Q:\ p_{g,\lambda}=0\}\\
u_g&:=\inf\{m/n:\ m,n\in \mathbb Z, 0<n, mu\ge ng\}=\inf\{\lambda\in \mathbb Q:\ p_{g,\lambda}=u\}.
\end{align*}

The rational spectral resolution has a number of properties resembling the spectral
resolution for self-adjoint operators, see \cite[Thms. 4.1, 4.4]{Forc}.
If $G$ is archimedean, we have the following characterization.

\begin{theorem}\label{thm:rsd} Let $G$ be archimedean and spectral. For $g\in G$, the rational spectral
resolution is the unique family of projections $\{p_\lambda\}_{\lambda\in \mathbb Q}$ such
that
\begin{enumerate}
\item for $\lambda<l_g$, $p_\lambda=0$ and for $\lambda\ge u_g$, $p_{g,\lambda} =u$,
\item for $\lambda<\mu$ we have $p_{g,\lambda}\le p_{g,\mu}$.
\item $\bigwedge_{\lambda<\mu} p_{g,\mu}=p_{g,\lambda}$, for all $\lambda\in \mathbb Q$,
\item for $\lambda=\frac mn$,  $n\in \mathbb N$, $m\in \mathbb Z$, we have
$nJ_{p_{g,\lambda}}(g)\le mp_{g,\lambda}$,  $mp_{g,\lambda}'\le nJ_{p_{g,\lambda}'}(g)$.
\end{enumerate}

\end{theorem}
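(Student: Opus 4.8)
The plan is to prove the two assertions packaged in the theorem separately: first that the rational spectral resolution $(p_{g,\lambda})$ defined in \eqref{eq:app_spectprojs} satisfies (1)--(4), and then that these four properties determine the family uniquely. Throughout I would work inside the Boolean bicommutant $P(g)$, in which all projections built from $g$ (the Rickart images $((ng-mu)_+)^*$ as well as the orthogonal-decomposition parts of $ng-mu$) commute, so that meets, joins and complements behave as in a Boolean algebra.

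For the existence half, properties (1)--(3) for $(p_{g,\lambda})$ are the monotonicity and right-continuity of the rational spectral resolution established in \cite[Thm. 4.1]{Forc}: the boundary values in (1) are immediate from the definitions of $l_g$ and $u_g$; monotonicity (2) follows from $0\le x\le y\Rightarrow x^{**}\le y^{**}$ (Lemma \ref{le:rickmap}(iv)) applied to $(ng-m'u)_+\le(ng-mu)_+$ together with complementation; and the right-continuity (3) is exactly the place where archimedeanity of $G$ is invoked. The only property I would verify by hand is (4). Writing $h:=ng-mu$, I would use that $p_{g,\lambda}=(h_+)^*$, so $p_{g,\lambda}'=(h_+)^{**}$; by Lemma \ref{lemma:orthog_decomp}(ii) this projection lies in $P_\pm(h)$ and defines the orthogonal decomposition, whence $J_{p_{g,\lambda}'}(h)=h_+\ge 0$ and $J_{p_{g,\lambda}}(h)=-h_-\le 0$. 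Expanding $J_p(h)=nJ_p(g)-mp$ and $J_{p'}(h)=nJ_{p'}(g)-mp'$ turns these two inequalities into precisely the two inequalities of (4).

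For uniqueness, let $(p_\lambda)$ be any family satisfying (1)--(4). Rewriting (4) at $\lambda=m/n$ as $J_{p_\lambda}(ng-mu)\le 0\le J_{p_\lambda'}(ng-mu)$ says that $p_\lambda'\in P_\pm(ng-mu)$; since $(ng-mu)_+^{**}=p_{g,\lambda}'$ is the \emph{smallest} element of $P_\pm(ng-mu)$ (Lemma \ref{lemma:orthog_decomp}(ii)), I get $p_\lambda'\ge p_{g,\lambda}'$, i.e. $p_\lambda\le p_{g,\lambda}$. For the reverse inequality I would fix a rational $\mu>\lambda$, pass to a common denominator $\lambda=m/n$, $\mu=m'/n$ with integers $m'-m\ge 1$, and show that in $P(g)$ the meet $e:=((ng-mu)_+)^*\wedge((ng-m'u)_-)^*$ vanishes: from $e\le((ng-mu)_+)^*$ one gets $J_e(ng-mu)\le 0$ and from $e\le((ng-m'u)_-)^*$ one gets $J_e(ng-m'u)\ge 0$; since $J_e(ng-m'u)=J_e(ng-mu)-(m'-m)e$, this forces $(m'-m)e\le J_e(ng-mu)\le 0$, hence $e=0$ because $m'-m\ge 1$. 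Boolean complementation then gives $p_{g,\lambda}=((ng-mu)_+)^*\le((ng-m'u)_-)^{**}$, while uniqueness of the orthogonal decomposition defined by $p_\mu'$ shows $J_{p_\mu'}((ng-m'u)_-)=0$ and hence $((ng-m'u)_-)^{**}\le p_\mu$; thus $p_{g,\lambda}\le p_\mu$ for every rational $\mu>\lambda$. Applying the right-continuity (3) to $(p_\mu)$ yields $p_{g,\lambda}\le\bigwedge_{\mu>\lambda}p_\mu=p_\lambda$, and combined with $p_\lambda\le p_{g,\lambda}$ this gives $p_\lambda=p_{g,\lambda}$.

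The main obstacle is the reverse inequality in the uniqueness argument. Property (4) by itself only pins $p_\lambda$ between the smallest and largest elements of $P_\pm(ng-mu)$, and these differ exactly by the kernel of $ng-mu$ (the ``eigenspace at $\lambda$''), so one cannot recover $p_{g,\lambda}$ from a single value of $\lambda$; the information must be extracted from values $\mu>\lambda$ and transported back through right-continuity, which is also where archimedeanity is genuinely needed. A secondary subtlety, to be handled carefully, is that $G$ is only assumed to have general comparability and need not be lattice ordered, so the positive and negative parts and the support computations must be manipulated through Lemma \ref{le:rickmap} and Lemma \ref{lemma:orthog_decomp} inside the Boolean algebra $P(g)$ rather than with ambient lattice operations; here one also uses that the candidate family commutes with $g$ (as the rational spectral resolution does), which is what legitimizes the reductions to $P(g)$.
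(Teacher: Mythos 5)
Your proof is correct at the same level of rigor as the paper's, and the skeleton is the same: existence is delegated to \cite{Forc} plus the observation that $p_{g,\lambda}'=((ng-mu)_+)^{**}$ defines the orthogonal decomposition of $ng-mu$ (Lemma \ref{lemma:orthog_decomp}(ii)), and uniqueness is a sandwich --- $p_\lambda\le p_{g,\lambda}$ from the minimality statement of Lemma \ref{lemma:orthog_decomp}(ii), then $p_{g,\lambda}\le p_\mu$ for every rational $\mu>\lambda$, finished off by right-continuity (iii). The two arguments diverge only in how the second inequality is produced. The paper passes to a common denominator $\lambda=l_1/k$, $\mu=l_2/k$, uses monotonicity (ii) of the candidate family to write $J_{p_\mu'}=J_{p_\mu'}J_{p_\lambda'}$, and derives the estimate $J_{p_\mu'}((kg-l_1u)_+)\ge(l_2-l_1)p_\mu'$, from which $p_\mu'\le (kg-l_1u)_+^{**}$ follows. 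You instead prove the candidate-independent disjointness $((ng-mu)_+)^*\wedge((ng-m'u)_-)^*=0$ by a direct Rickart-property computation, and feed in the candidate family only through $((ng-m'u)_-)^{**}\le p_\mu$, which uses (iv) at $\mu$ alone. Your variant buys two small things: the uniqueness half never invokes (ii), and it cleanly separates facts about the spectral resolution of $g$ from assumptions on the candidate family; the paper's variant is more compact because one estimate does both jobs. Your verification of (iv) in the existence half is also a legitimate filling-in of what the paper dismisses as ``clear''.

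One caveat, which you share with the paper (and, to your credit, you state openly what the paper uses tacitly): condition (iv) at a single $\lambda$ does \emph{not} imply $g\in C_G(p_\lambda)$, which is needed to assert that $p_\lambda'$ defines the orthogonal decomposition of $ng-mu$, equivalently to place $p_\lambda'$ in $P_\pm(ng-mu)$ and to carry out your reductions inside $P(g)$. For instance, in the additive group of self-adjoint complex $2\times 2$ matrices with $g=\mathrm{diag}(1,-1)$ and $p$ the projection onto the span of $(1,1)$, one computes $J_p(g)=pgp=0$ and $J_{p'}(g)=p'gp'=0$, so both inequalities of (iv) hold at $\lambda=0$, yet $p$ does not commute with $g$ and $p'$ does not define the orthogonal decomposition of $g$. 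Hence, as written, both your argument and the paper's establish uniqueness only among families of projections compatible with $g$; strictly speaking the hypothesis $p_\lambda\in PC(g)$ should be added to (i)--(iv), exactly as it appears in the effect-algebra analogue, Theorem \ref{thm:spectral_char}(i). Whether (i)--(iv) by themselves force this compatibility is addressed by neither proof, so this is a shared imprecision rather than a gap in your attempt relative to the paper.
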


\begin{proof} Assume that $\{p_\lambda\}$ satisfies the properties (i)-(iv). Let
$\lambda=\frac mn$, $m\in \mathbb Z$, then we have from (iv) that $J_{p_\lambda}(ng-mu)\le
0$ and $J_{p_\lambda'}(ng-pu)\ge 0$, so that $p_\lambda'$ defines the orthogonal
decomposition of $ng-mu$, $J_{p_\lambda'}(ng-pu)=(ng-pu)_+$. Let $\mu>\lambda$, $\mu\in
\mathbb Q$, then there are some $k\in \mathbb N$ and $l_1,l_2\in \mathbb Z$, $l_1<l_2$
such that $\lambda=\frac{l_1}k$ and $\mu=\frac{l_2}k$. By (ii), $p_\mu'\le p_\lambda'$ and
therefore
\begin{align*}
J_{p_\mu'}((kg-l_1u)_+)&=J_{p_\mu'}J_{p_\lambda'}(kg-l_1u)=J_{p_\mu'}(kg-l_2u+(l_2-l_1)u)
=(kg-l_2u)_++(l_2-l_1)p_\mu'\\ &\ge (l_2-l_1)p_\mu'.
\end{align*}

Let $r=(kg-l_1u)_+\dg=(kg-l_1u)_+^{**}\in P_\pm(kg-l_1u)$, then since $r\in P(g)$, $r$
commutes with $p_\mu$ and we obtain
$(l_2-l_1)p_\mu'\le J_{p_\mu'}((kg-l_1u)_+)\le r$. Since $r$ is principal, this entails
$p_\mu'\le r$ for all $\mu>\lambda$. It follows that $r'\le \bigwedge_{\mu>\lambda}
p_\mu=p_\lambda$. On the other hand, by Lemma \ref{lemma:orthog_decomp} (ii), we obtain
$r\le P_\lambda'$, so that $p_\lambda=r'=(kg-l_1u)_+^*$ is the spectral projection of $g$.
The converse is clear from the properties of the spectral resolution, \cite[Thm. 4.1,
4.4]{Forc}.

\end{proof}

\begin{theorem}\label{thm:rsd2} \cite[Thm. 4.4]{Forc}
Let $G$ be archimedean and spectral, $g\in G$.
Then
\begin{enumerate}

\item Each element $g\in G$ is uniquely determined by its rational spectral resolution.
\item For $q\in P$, we have $g\in C(q)$ if and only if $p_{g,\lambda}\in C(q)$ for all $\lambda\in \mathbb Q$.

\end{enumerate}

\end{theorem}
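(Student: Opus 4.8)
The plan is to derive both statements from a single norm-convergent Riemann--Stieltjes representation of $g$ built entirely out of the spectral projections $p_{g,\lambda}$. Since $G$ is archimedean and has general comparability, Lemma \ref{lemma:app_arch} guarantees that $\|\cdot\|$ is a genuine norm, so norm limits are unique; this is what will let the spectral resolution pin down $g$ and also force membership in $C(q)$. The key approximation tool is Lemma \ref{lemma:app_sum}, and the crucial observation is that its building blocks can be chosen to be the spectral projections themselves.

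First I would set up the representation. Fix $g\in G$ and $n\in\mathbb N$, and for each integer $m$ with $-n\|g\|\le m\le n\|g\|$ put $q_m:=p_{g,m/n}'$. By \eqref{eq:app_spectprojs} and Lemma \ref{lemma:orthog_decomp}(ii), $q_m=((ng-mu)_+)^{**}$ is the smallest element of $P_\pm(ng-mu)$; in particular $q_m\in P(g)$, and by item (ii) of Theorem \ref{thm:rsd} the family $(q_m)_m$ is nonincreasing in $m$. These are precisely admissible choices for the chain in Lemma \ref{lemma:app_sum} applied with consecutive integers $m_i=m_0+i$, whose mesh $\max_i(m_{i+1}-m_i)$ equals $1$. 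That lemma then produces $u_i:=q_{m_{i-1}}-q_{m_i}=p_{g,m_i/n}-p_{g,m_{i-1}/n}\in P(g)$ with $\sum_i u_i=u$ and $\|ng-\sum_i m_i u_i\|\le 1$. Dividing by $n$ gives
\[
\Bigl\| g-\tfrac1n\sum_m m\,(p_{g,m/n}-p_{g,(m-1)/n})\Bigr\|\le \tfrac1n,
\]
which tends to $0$ as $n\to\infty$. Thus $g$ is a norm limit of Riemann--Stieltjes sums whose terms depend only on the spectral projections, and since $\|\cdot\|$ is a norm this limit is unique, proving statement (i).

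For statement (ii), the forward implication is immediate. Each $p_{g,\lambda}=((ng-mu)_+)^*$ lies in $P(g)$: its orthocomplement is $(ng-mu)_+^{**}\in P_\pm(ng-mu)\subseteq P(ng-mu)$ by Lemma \ref{lemma:orthog_decomp}(ii), and $P(ng-mu)=P(g)$ because adding multiples of $u$ and positive integer scaling preserve commutants (so $PC(g)=PC(ng-mu)$). Hence if $g\in C(q)$, i.e.\ $q\in PC(g)$, then $p_{g,\lambda}\in P(g)\subseteq C(PC(g))\subseteq C(q)$. Conversely, suppose $p_{g,\lambda}\in C(q)$ for all $\lambda$. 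As $C_G(q)$ is a subgroup, every increment $u_i=p_{g,m_i/n}-p_{g,m_{i-1}/n}$, and hence every partial sum $\tfrac1n\sum_m m\,(p_{g,m/n}-p_{g,(m-1)/n})$, lies in $C_G(q)$. By \cite[Cor.~3.2]{Forc} the commutant $C_G(q)$ is norm-closed, so the limit $g$ lies in $C_G(q)$ as well, i.e.\ $g\in C(q)$.

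The main obstacle is the first step: verifying that $q_m=p_{g,m/n}'$ really are admissible in Lemma \ref{lemma:app_sum} (that they belong to $P_\pm(ng-mu)$ and form a nonincreasing chain) and that the resulting increments are exactly the consecutive differences of spectral projections. This rests on identifying $p_{g,\lambda}'$ with the smallest orthogonal-decomposition projection through the Rickart mapping (Lemma \ref{le:rickmap} together with Lemma \ref{lemma:orthog_decomp}), and on the unit-mesh norm estimate of Lemma \ref{lemma:app_sum}. Once this representation is in hand, uniqueness (from the genuine norm supplied by archimedeanity, Lemma \ref{lemma:app_arch}) and the commutant characterization (from norm-closedness of $C_G(q)$) both follow with little further effort.
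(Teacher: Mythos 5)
The paper itself gives no proof of Theorem \ref{thm:rsd2}: it is imported wholesale from \cite[Thm.~4.4]{Forc}, and (as the Conclusions remark) Foulis's argument runs through the ordering set of states that an archimedean group with general comparability possesses. Your proof is therefore a genuinely different, essentially self-contained route built from the appendix toolkit, and its core is sound: by Lemma \ref{le:rickmap}(iii) and Lemma \ref{lemma:orthog_decomp}(ii), $q_m:=p_{g,m/n}'=((ng-mu)_+)^{**}$ lies in $P_\pm(ng-mu)$; monotonicity of the chain is a property of the spectral resolution coming from \cite[Thm.~4.1]{Forc} (so there is no circularity in citing Theorem \ref{thm:rsd}(ii)); and re-running the proof of Lemma \ref{lemma:app_sum} with this specific chain does give $\|ng-x_n\|\le 1$ for the integer combination $x_n:=\sum_m m\,(p_{g,m/n}-p_{g,(m-1)/n})$. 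What it buys over Foulis's proof is that states never appear; archimedeanity enters only through the fact that $\|\cdot\|$ is a norm (Lemma \ref{lemma:app_arch}). One small repair at the boundary: you must take the bottom index $m_0$ with $m_0/n$ \emph{strictly} below $-\|g\|$ (say $m_0=-\lfloor n\|g\|\rfloor-1$), since with $m_0/n=-\|g\|$ one can have $p_{g,m_0/n}\ne 0$ (take $g=-u$, $n=1$, where $p_{g,-1}=((g+u)_+)^*=0^*=u$); then the chain fails to start at $u$, $\sum_i u_i\ne u$, and the estimate's proof breaks.

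The genuine gap is the ``division by $n$''. $G$ is only a partially ordered abelian group, so $\tfrac1n x_n$ need not exist in $G$: the displayed inequality $\bigl\|g-\tfrac1n\sum_m m(p_{g,m/n}-p_{g,(m-1)/n})\bigr\|\le\tfrac1n$ is not well formed, $g$ is \emph{not} a norm limit of elements of $G$, and in particular the appeal to norm-closedness of $C_G(q)$ via \cite[Cor.~3.2]{Forc} in part (ii) does not apply --- there is no sequence in $C_G(q)$ converging to $g$. Both conclusions must instead be extracted at the integer level, which is routine with what you already have. For (i): if $g,h$ share the rational spectral resolution, then $x_n$ is the same for both, so $n\|g-h\|=\|(ng-x_n)-(nh-x_n)\|\le 2$ for all $n$, and since $\|\cdot\|$ is a norm (Lemma \ref{lemma:app_arch}) this gives $g=h$. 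For (ii): put $D:=\mathrm{id}-J_q-J_{q'}$; this is a group endomorphism with $\|D(x)\|\le 3\|x\|$ (each compression is norm-contractive because it is positive and $J_q(u)=q\le u$), and $C_G(q)=\ker D$ is a subgroup containing all $p_{g,\lambda}$, hence containing $x_n$; therefore $n\|D(g)\|=\|D(ng-x_n)\|\le 3\|ng-x_n\|\le 3$ for every $n$, forcing $D(g)=0$, i.e.\ $g\in C_G(q)$. With these two substitutions (and the endpoint fix above) your argument is complete; the forward implication of (ii), via $p_{g,\lambda}'\in P_\pm(ng-mu)\subseteq P(ng-mu)=P(g)\subseteq C(PC(g))\subseteq C(q)$, is correct as written (the equality $P(ng-mu)=P(g)$ uses that $G$ is unperforated, hence torsion-free, which general comparability supplies).
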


\section*{Acknowledgements}

The authors are indebted to Gejza Jen\v ca for discussions and valuable comments, and to Ryszard Kostecki for pointing
out the references \cite{berd} and \cite{beod}. We are grateful to the anonymous referee
for a number of inspiring remarks that helped us to improve the paper.
The research was supported by the grants VEGA 2/0142/20 and the Slovak Research and Development Agency grant APVV-20-0069.


\begin{thebibliography}{99}


\bibitem{AS} E.M. Alfsen, F.W. Shultz, ``Non-commutative spectral theory for affine
function spaces on convex sets'', {Mem. Amer. Math. Soc.} {\bf 6} (1976) No. 172.
\bibitem{AlSh} E. M. Alfsen, F.W. Shultz,
\href{https://doi.org/10.1007/978-1-4612-0019-2}{``Geometry of State Spaces of Operator
Algebras'', Birkh{\"a}user, Boston-Basel-Berlin 2003}.

\bibitem{BH} H. Barnum, J. Hilgert, ``Strongly symmetric spectral convex bodies are Jordan
algebra state spaces'', (2019) \href{https://arxiv.org/abs/1904.03753}{arXiv:1904.03753}
\bibitem{berd} M. A. Berdikulov, ``Homogeneous order unit space of type $I_2$''. Acad. Nauk.
UzSSR. Ser. Phys.-Math. Nauk \textbf{4} (1990), 8-14 (Russian).
\bibitem{beod} M.A. Berdikulov, S.T. Odilov, ``Generalized spin factor'', \emph{Uzb.Math. Journal} {\bf 2}(1994),
15–20.(Russian).
\bibitem{Chang} C.C. Chang: \href{https://doi.org/10.2307/1993227}{``Algebraic analysis of
many-valued logic'', \emph{Trans. Amer.
Math. Soc.} {\bf 88} (1957) 467-490}.
\bibitem{chiribella} G. Chiribella, G. M. D’Ariano, and P. Perinotti,
\href{https://doi.org/10.1103/PhysRevA.84.012311}
{``Informational derivation
of quantum theory'', \emph{Phys. Rev. A}, {\bf 84} (2011), 012311, 2011}
\bibitem{DvPu}  A. Dvure\v{c}enskij, S. Pulmannov\'a,
\href{https://doi.org/10.1007/978-94-017-2422-7}
{``New Trends in Quantum
Structures'', Kluwer, Academic,
Dordrecht, 2000}. 
\bibitem{FoBe} D.J. Foulis, M.K. Bennett, \href{https://doi.org/10.1007/BF02283036}{``Effect algebras and unsharp quantum
logics'', {Found. Phys.} {\bf 24} (1994) 1331-1352}.
\bibitem{FBinterval} D.J. Foulis, M.K.
Bennett,\href{https://doi.org/10.1006/aama.1997.0535}{``Interval and scale effect
algebras'', {Advances in Mathematics} {\bf 19} (1997) 200-215}. 
\bibitem{FPspectres} D.J. Foulis, S. Pulmannov\'a,
\href{https://doi.org/10.1016/S0034-4877(09)00004-4}{``Spectral resolutions in an order unit
space'', {Rep. Math. Phys,} {\bf62} (2008) 323-344}. 
\bibitem{Fcomgroup} D.J. Foulis, ``Compressible groups'', {Math. Slovaca} {\bf 53} (5) (2003) 433-455.
\bibitem{Fcompog} D.J. Foulis, \href{https://doi.org/10.1090/S0002-9939-04-07644-0}{``Compressions on partially ordered abelian
groups'', {Proc. Amer. Math. Soc.} {\bf 132} (2004) 3581-3587};
\bibitem{fgb} D.J. Foulis, R.J. Greechie, M.K. Bennett,
\href{https://doi.org/10.1007/BF00675796}{``Sums and products of interval
algebras'', {Int. J. Theor. Phys.} {\bf 33} (1994) 2119-2136}. 
\bibitem{Fgc} D.J. Foulis, ``Compressible groups with general comparability'', {Math. Slovaca} {\bf 55} (4) (2005) 409-429.

\bibitem{Funig} D.J. Foulis, \href{https://doi.org/10.1007/s10773-005-8014-2}{``Compression bases in unital groups'', {Int. J. Theoret. Phys.} {\bf 44} (12) (2005) 2153-2160}. 

\bibitem{FPmonot} D.J. Foulis, S. Pulmannov\'a,
\href{https://doi.org/10.1112/S002461070602271X}{``Monotone $\sigma$-complete
RC-groups'', {J. London Math. Soc.} {\bf 73}(2) (2006) 1325-1346}. 
\bibitem{Forc} D.J. Foulis, \href{https://doi.org/10.1016/S0034-4877(04)80016-8}{``Spectral resolution in a Rickart comgroup'', {Rep. Math. Phys.} {\bf 54} (2) (2004),
229-250}. 
\bibitem{Good} K.R. Goodearl, {``Partially ordered abelian groups with
interpolation''} Math. Surveys and Monographs No. 20, AMS Providence, Rhode Island 1980.
 \bibitem{GuPu} S.P. Gudder, S. Pulmannov\'a, ``Representation theorem for convex effect
 algebra'', {Comment. Math. Univ. Carolinae} {\bf 39} (4) (1998) 645-659.
\bibitem{GPBB} S. Gudder, S. Pulmannov\'a, E. Beltrametti, S. Bugajski,
\href{https://doi.org/10.1016/S0034-4877(00)87245-6}{``Convex and linear
effect algebras'', {Rep. Math. Phys.} {\bf 44} (1999) 359-379}. 

\bibitem{Gu} S. Gudder, \href{https://doi.org/10.1016/S0034-4877(04)80008-9}{``Compressible effect algebras'', \emph{Rep. Math. Phys.} {\bf 54} (2004) 93-114.} 

\bibitem{GuGr} S. Gudder, R. Greechie,
\href{https://doi.org/10.1016/S0034-4877(02)80007-6}{``Sequential product on effect algebras'', Rep. Math.
Phys. {\bf 49} (2002), 87-111}. 
\bibitem{Gucomprba} S. Gudder, \href{https://doi.org/10.1515/dema-2006-0106}{``Compression bases in effect algebras'', {Demonstratio Math.} {\bf 39} (2006) 43-58}. 


\bibitem{Gucs} S. Gudder, ``Convex and sequential effect algebras'',
\href{https://arxiv.org/abs/1802.01265}{arXiv:1802.01265vl[quant-ph] (2018)}.

\bibitem{Harding} J. Harding, \href{https://doi.org/10.1023/A:1026665818335}{``Regularity in quantum logic'', {Int. J. Theor. Phys.} {\bf 37} (1998), 1173–1212}. 

\bibitem{HArdy} L. Hardy, ``Quantum Theory From Five Reasonable Axioms'', (2001),
\href{https://arxiv.org/abs/quant-ph/0101012}{arXiv:quant-ph/0101012}.

\bibitem{jacobs} B. Jacobs, B. Westerbaan,
\href{https://doi.org/10.1016/j.entcs.2015.12.015}{``An effect-theoretic account of Lebesgue
integration'', {Electronic Notes in Theoretical Computer Science} {\bf 319} (2015)
239-253}. 
\bibitem{JePu} G. Jen\v ca, S. Pulmannov\'a,
\href{https://doi.org/10.1090/S0002-9939-03-06990-9}{``Orthocomplete effect algebras'', {Proc.
Am. Math. Soc.} 131(9)(2003) 2663-2671}. 

\bibitem{JenPul} A. Jen\v cov\'a, S. Pulmannov\'a,
\href{https://doi.org/10.1016/j.jmaa.2021.125360}{``Geometric and algebraic aspects of spectrality in order unit spaces: a
comparison'', {Journal of Mathematical Analysis and Applications} {\bf 504} (2021),
125360}. 

\bibitem{JenPul_previous} A. Jen\v cov\'a, S. Pulmannov\'a, ``Spectral resolutions in
effect algebras'', (2021), \href{https://arxiv.org/abs/2111.02166}{arXiv:2111.02166v1}.

\bibitem{JePl} A. Jen\v cov\'a and M. Pl\'avala,
\href{https://doi.org/10.22331/q-2019-06-03-148}{``On the properties of spectral effect
algebras'', Quantum {\bf 3} (2019), 148}. 	

\bibitem{Mun} D. Mundici, \href{https://doi.org/10.1016/0022-1236(86)90015-7}{``Interpretation of AF C*-algebras in 
{\L u}kasziewicz sentential calculus'', {J. Funct. Anal.} {\bf 65} (1986) 15-63}. 
\bibitem{PtPu} P. Pt\'ak, S. Pulmannov\'a, ``Orthomodular Structures as Quantum Logics'', Kluwer, Dordrecht and VEDA,
Bratislava (1991).

\bibitem{Pudiv1} S. Pulmannov\'a, ``Divisible effect algebras and interval effect
algebras'', {Commentationes
Mathematicae Universitatis Carolinae} {\bf 42} (2001) 219-236.

\bibitem{Pucompr} S. Pulmannov\'a, \href{https://doi.org/10.1016/S0034-4877(06)80054-6}{``Effect algebras with compressions'',
{Rep. Math. Phys.} {\bf 58} (2006) 301-324}. 
\bibitem{Rav} K. Ravindran, ``On a structure theory of effect algebras'', {PhD thesis,
Kansas State Univ., Manhattan, Kansas, (1996)}.
\bibitem{www} A. Westerbaan, B. Westerbaan, and J. van de Wetering,
\href{https://doi.org/10.1145/3373718.3394742}{``A
characterisation of ordered abstract probabilities'',
 Proceedings of the 35th Annual ACM/IEEE Symposium on Logic in Computer Science,  (2020)}. 

\bibitem{wetering2} A. Westerbaan, B. Westerbaan, and J. van de Wetering,
\href{https://doi.org/10.22331/q-2020-12-24-378}{``The three types
of normal sequential effect algebras'',	Quantum 4, 378 (2020)}. 	

\bibitem{wetering} J. van de Wetering,
\href{https://doi.org/10.32408/compositionality-1-1}{``An effect-theoretic reconstruction of quantum
theory'',  Compositionality {\bf 1} (2019), 1}. 

\end{thebibliography}
\end{document}